\def\BibTeX{{\rm B\kern-.05em{\sc i\kern-.025em b}\kern-.08em
    T\kern-.1667em\lower.7ex\hbox{E}\kern-.125emX}}
\newcommand{\rai}{RelationalAI}
\newcommand{\norm}[1]{\|#1\|}
\newcommand{\set}[1]{\{#1\}}                    % Set (as in \set{1,2,3}).
\newcommand{\setof}[2]{\{{#1}\mid{#2}\}}        % Set (as in \setof{x}{x>0}).
\newcommand{\dom}{\textsf{Dom}}
\newcommand{\degree}{\text{\sf deg}}
\newcommand{\polylog}{\text{\sf polylog}}
\newcommand{\opt}{\text{\sf opt}}
\newcommand{\calM}{\mathcal M}
\newcommand{\calS}{\mathcal S}
\newcommand{\calH}{\mathcal H}
\newcommand{\calV}{\mathcal V}
\newcommand{\calE}{\mathcal E}
\newcommand{\calD}{\mathcal D}
\newcommand{\calL}{\mathcal L}
\newcommand{\calZ}{\mathcal Z}
\newcommand{\calP}{\mathcal P}
\newtheorem{thm}{Theorem}[section]
\newtheorem{lmm}[thm]{Lemma}
\newtheorem{prop}[thm]{Proposition}
\newtheorem{cor}[thm]{Corollary}
\theoremstyle{definition}              % Examples and all
\newtheorem{example}{Example}
\newtheorem{defn}[thm]{Definition}
\newcommand{\defeq}{\stackrel{\text{def}}{=}}
\newcommand{\R}{\mathbb R} % the real numbers
\newcommand{\Rp}{{\mathbb R}_{\tiny +}} % the real numbers
\newcommand{\cd}{\text{ :- }}
\newcommand{\bigjoin}{\mathlarger{\mathlarger{\mathlarger{\Join}}}}
\newcommand{\td}{\text{\sf TD}}
\newcommand{\inn}{\text{\sf in}}
\newcommand{\out}{\text{\sf out}}
\newcommand{\vars}{\text{\sf vars}}
\newcommand{\mon}{\text{\sf MON}}
\newcommand{\sub}{\text{\sf SUB}}
\newcommand{\panda}{\textsf{PANDA}\xspace}
\newcommand{\newalgorithm}{\textsf{PANDAExpress}\xspace}
\newcommand{\applystep}{\textsf{apply-step}\xspace}
\newcommand{\resetineq}{\textsf{reset}\xspace}
\newcommand{\emblue}[1]{{\color{blue} {#1}}}
\newcommand{\fhtw}{\textsf{fhtw}}
\newcommand{\subw}{\textsf{subw}}
\newcommand{\ov}{\overline}
\newcommand{\myemail}[1]{\small\texttt{#1}}
\begin{document}

\title{$\newalgorithm$: A Simpler and Faster PANDA Algorithm}
\author{
  \begin{tabular}{c}
    Mahmoud Abo Khamis\\
    \rai\\
    \myemail{mahmoud.abokhamis@relational.ai}
  \end{tabular}
  \quad
  \begin{tabular}{c}
    Hung Q. Ngo\\
    \rai\\
    \myemail{hung.ngo@relational.ai}
  \end{tabular}
  \quad
  \begin{tabular}{c}
    Dan Suciu\\
    University of Washington\\
    \myemail{suciu@cs.washington.edu}
  \end{tabular}
}

\begin{abstract}
    $\panda$ is a powerful generic algorithm for answering conjunctive queries
    and disjunctive datalog rules given input degree constraints. In the special
    case of Boolean queries with only cardinality constraints,
    $\panda$ runs in $\tilde O (N^{\subw})$-time, where $N$ is the
    input size, and $\subw$ is the submodular width of the query, a notion introduced by
    Daniel Marx (JACM 2013) in the context of constraint satisfaction problems. When
    specialized to certain classes of sub-graph pattern finding problems, the $\tilde
    O(N^{\subw})$ runtime matches the optimal runtime possible, modulo some
    conjectures in fine-grained complexity (Bringmann and Gorbachev (STOC 25)). The $\panda$
    framework is much more general, as it handles arbitrary input degree constraints,
    which capture common statistics and integrity constraints used in relational database
    management systems, it works for queries with free variables, and
    for both conjunctive queries and disjunctive datalog rules.

    The key weakness of $\panda$ is the large $\polylog(N)$-factor hidden in the $\tilde
    O(\cdot)$ notation, making it completely impractical, and falling short of what
    is achievable with specialized algorithms. This paper resolves this weakness with two
    novel ideas. First, we prove a new probabilistic inequality that upper-bounds the output
    size of disjunctive datalog rules under arbitrary degree constraints. Second, the proof
    of this inequality directly leads to a new algorithm named $\newalgorithm$ that is both
    {\em simpler} and {\em faster} than $\panda$. A novel feature of $\newalgorithm$ is a
    new partitioning scheme that uses arbitrary hyperplane cuts instead of axis-parallel
    hyperplanes used in $\panda$. These hyperplanes are dynamically constructed based on
    data-skewness statistics carefully tracked throughout the algorithm's execution. As a
    result, $\newalgorithm$ removes the $\polylog(N)$-factor from the runtime of $\panda$,
    matching the runtimes of intricate specialized algorithms, while retaining all its
    generality and power.
    As a bonus, we also show how $\newalgorithm$ can handle $\ell_p$-norm constraints,
    which generalize degree constraints.
\end{abstract}

\keywords{PANDA; conjunctive queries; disjunctive datalog rules; submodular width; proof sequences; Shannon inequalities}

\maketitle

\section{Introduction}
\label{sec:intro}

Conjunctive query (CQ) evaluation is a fundamental problem in many areas of computer
science, from traditional ones such as relational database management, graph
analytics, sparse tensor kernels, constraint satisfaction, logic
~\cite{DBLP:conf/pods/KhamisNR16, DBLP:conf/pods/GottlobGLS16, DBLP:journals/talg/GroheM14,
DBLP:books/aw/AbiteboulHV95}, to other settings such as deciding query
equivalence~\cite{DBLP:conf/stoc/ChandraM77} and equality
saturation~\cite{DBLP:journals/pacmpl/ZhangWWT22}.
There has been a lot of work in both
theory and practice to develop efficient query planning and evaluation techniques for
evaluating CQs for at least
50 years~\cite{DBLP:conf/sigmod/SelingerACLP79, DBLP:journals/ftdb/DingNC24}.

The past 15 years has witnessed an exciting new paradigm for optimizing and
evaluating CQs, involving two key ideas. First, we can derive a non-trivial worst-case
output size bound of a CQ, or more generally a disjunctive datalog rule, given
commonly collected input
statistics~\cite{MR859293,
MR2104047,
DBLP:journals/jacm/GottlobLVV12,
DBLP:conf/pods/KhamisNS16,
DBLP:journals/siamcomp/AtseriasGM13,
DBLP:conf/pods/Khamis0S17}.
Second, one can turn a {\em proof} of this
bound into a query plan to answer the query
efficiently~\cite{DBLP:journals/sigmod/NgoRR13,
DBLP:conf/pods/NgoPRR12,DBLP:conf/icdt/Veldhuizen14,
theoretics:13722}.
These new query plans are completely non-trivial and very different from the traditional
query plans based on select-project-join
trees~\cite{DBLP:conf/lics/Suciu23,10.1145/3196959.3196990,theoretics:13722}.

These new query plans boil down to three basic operations: {\em partitioning} the data based
on the frequencies of the values in some columns, {\em joining} two or more relations, and
{\em projection}. Instances of this type of query plans can be traced back to a special case
of CQ evaluation: the problem of finding a small (hyper)graph pattern inside a large
graph~\cite{DBLP:conf/stoc/Itai77,MR599482, MR1639767,DBLP:conf/soda/GroheM06,
DBLP:journals/talg/GroheM14,DBLP:journals/siamcomp/AtseriasGM13}. For example, Itai and
Rodeh~\cite{DBLP:conf/stoc/Itai77} described an $O(|E|^{3/2})$-time triangle detection
algorithm by partitioning vertices into heavy and light based on degree threshold
$|E|^{1/2}$. Alon, Yuster, and Zwick~\cite{DBLP:journals/algorithmica/AlonYZ97} generalized
the idea to the $k$-cycle detection problem with runtime $O(|E|^{2-\frac{1}{\lceil k/2
\rceil}})$ by adjusting the threshold and distributing the computational load to different
tree decompositions of the $k$-cycle graph. Recently, algorithms to find much more complex
graph patterns were discovered by Bringmann and
Gorbachev~\cite{DBLP:conf/stoc/BringmannG25,DBLP:journals/corr/abs-2404-04369}.
While the previous algorithms partition the data into two classes, the algorithm
in~\cite{DBLP:conf/stoc/BringmannG25} partitions the data into a number of classes that
depends on the graph pattern (the query), which is still $O(1)$ in the size of the input
database.

In an RDBMS, queries and data statistics are more general than those
in the graph-pattern finding problem. CQs have relations of arbitrary
arities, the output has arbitrary free variables, and the data statistics can include not
only relation sizes, but also ``degree constraints'' which bound the number of distinct
values in some columns given fixed values in other columns~\cite{DBLP:conf/pods/KhamisNS16}.

The pioneering work of Marx~\cite{DBLP:journals/jacm/Marx13} deals with Boolean CQs over
relations with arbitrary arities, using multiple tree decompositions to
load-balance. Let $\subw(Q)$ denote the {\em submodular width} of a Boolean query $Q$,
then Marx's algorithm runs in time $\tilde O(N^{c \cdot \subw(Q)})$ to answer $Q$ on any
database of size $N$, and $c>1$ is a constant. While Marx's algorithm does not handle
degree constraints and arbitrary free variables, his notion of submodular width is very
powerful. In addition to it being the optimal parameter for fixed-parameter  tractability of
CSP~\cite{DBLP:journals/jacm/Marx13}, it also is the precise {\em optimal} parameter for a
class of sub-graph identification and listing queries in the fined-grained complexity
setting, as shown very recently by Bringmann and
Gorbachev~\cite{DBLP:conf/stoc/BringmannG25}.

The natural question arises whether any CQ (Boolean or not) under arbitrary
degree constraints can be computed optimally by the same strategy of partitioning the data
into $O(1)$ classes, based on degrees, and performing joins. A general algorithm called
$\panda$ for computing CQs was described in~\cite{theoretics:13722}, and it
has the best known runtime for answering an arbitrary CQ using a
combinatorial algorithm (i.e. without using Fast Matrix Multiplication), except for an
undesired large polylogarithmic factor. This undesired factor is due to the fact that, at
each partition step, $\panda$ needs to partition a relation into $\log N$ parts, and  the
number of partitioning steps is a function of the query. In particular, $\panda$'s runtime
is $O\left(N^{\subw(Q)}\polylog(N) + |Q|\right)$, where $\subw(Q)$ is the submodular width
of the query $Q$, $N$ is the input size, and $|Q|$ is the output
size.

The polylog factor means $\panda$ provides an unsatisfactory answer to our desire of
understanding the complexity of query evaluation, despite its generality and power.
Its shortcoming does not stop at combinatorial algorithms, but also extends to
algorithms that use Fast Matrix Multiplication (FMM).  All prior algorithms using
FMM~\cite{DBLP:journals/algorithmica/AlonYZ97, DBLP:journals/siamcomp/DalirrooyfardVW21,
DBLP:conf/stoc/DalirrooyfardMW24, DBLP:journals/tcs/EisenbrandG04} apply the same principle
of partitioning the data into $O(1)$ classes, then using either joins or FMM in each class.
In contrast, the generalization of $\panda$ to using FMM described
in~\cite{DBLP:journals/pacmmod/KhamisHS25} also partitions the data into polylogarithmic
number of classes, falling short of the special cases described in the literature by
a polylogarithmic factor.

This unsatisfactory state of the art naturally leads to the question whether it is possible
to design an algorithm that shaves off $\panda$'s polylogarithmic factor while retaining all
its generality. On a closer examination, we found that the specialized algorithms for
graph-pattern finding in the
literature~\cite{DBLP:journals/algorithmica/AlonYZ97,DBLP:conf/stoc/BringmannG25} not only
partition the data into $O(1)$ parts at each step, but have another important common
characteristic: their partitioning criterion consists of comparing a single degree to a
pre-computed threshold (``heavy" or ``light").  We call this an \emph{axis-parallel}
partitioning method; because, as we shall explain in Section~\ref{sec:motivations}, the
strategy corresponds to partitioning the input relations using axis-parallel hyperplanes.
$\panda$ also does an axis-parallel partitioning; it just uses many more bins. The question
then becomes whether it is possible to design an algorithm that answers any CQ in submodular
width time using only $O(1)$ axis-parallel parts at each step.

{\bf Contributions.} This paper shows that the answer to the above question is likely to be
negative, and describes instead a very simple algorithm that answers a CQ in time
$O\left(N^{\subw(Q)}\log N  + |Q|\right)$ by using partitions based on hyperplane cuts that
are not necessarily axis-parallel.  (The remaining log-factor represents a sorting step.)

Specifically, the main problem we consider is the worst-case-optimally answering of
\emph{Disjunctive Datalog Rules} (DDR), which are {\em the} main building block for
answering CQs using the $\panda$
framework~\cite{DBLP:conf/pods/Khamis0S17,theoretics:13722}. DDRs capture the essence of
data partitioning: the body of a DDR is a conjunction, the head is a disjunction of atoms,
and the goal is to ``partition'' the outputs of the CQ among the head atoms so as to
minimize the largest cardinality among the output relations.

The measuring stick for the notion of worst-case optimality~\cite{10.1145/3196959.3196990}
is a tight bound on the output size of a DDR given input degree constraints. Our {\em first
contribution} is a new probabilistic inequality that can be used to prove the same tight
output size bounds for DDRs as the one shown
in~\cite{DBLP:conf/pods/Khamis0S17,theoretics:13722}.

Our {\em second contribution} is a new algorithm, called $\newalgorithm$, derived from the
proof of the probabilistic inequality. The algorithm answers DDRs in time $O((N+B)\log N)$
where $N$ is the input size and $B$ is the worst-case size bound of the DDR under the given
degree constraints. As shown in~\cite{DBLP:conf/pods/Khamis0S17,theoretics:13722}, this
result immediately implies an algorithm for answering CQs in time
$O\left(N^{\subw(Q)}\log N + |Q|\right)$, where $\subw(Q)$ is the submodular width of the
query $Q$, and $|Q|$ is the output size. As can be seen in Algorithm~\ref{alg:tight:panda},
$\newalgorithm$ is {\em breathtakingly simple}, which is another key takeaway
from this work: it is not only faster but also much simpler than $\panda$.

As a {\em third contribution}, we also show that the same algorithm can be used to handle
$\ell_p$-norm constraints~\cite{DBLP:journals/pacmmod/KhamisNOS24}, which generalize degree
constraints. These constraints have been shown to be very useful in practice for query
optimization~\cite{DBLP:journals/pacmmod/ZhangMKOS25}.

The crucial insights are as follows. First, we show that $O(1)$ axis-parallel partitions are
{\em insufficient} for optimality in the general case of DDRs. While this worked fine for
all special cases studied in the literature~\cite{DBLP:conf/stoc/Itai77,
DBLP:journals/algorithmica/AlonYZ97, DBLP:conf/stoc/BringmannG25,
DBLP:journals/corr/abs-2404-04369}, it fails in the general case. Instead, $\newalgorithm$
partitions the input and intermediate relations using $O(1)$ {\em arbitrary hyperplanes}.
Second, the hyperplanes separating the data are also not statically determined. The
algorithm collects and maintains data-skew statistics along the way, which are then used to
determine the hyperplanes to partition the data. The dynamic partitioning strategy is
designed to {\em load-balance} amongst sub-query plans in a fine grained manner. This level
of precision was not possible in query plans that make use of only polymatroid-based proof
sequences as in the original $\panda$ algorithm, because the polymatroids model the
entropies which are too coarse to capture data-level skewness.

{\bf Outline.} The rest of the paper is organized as follows. Section~\ref{sec:prelims}
briefly presents background knowledge needed for the paper. Section~\ref{sec:motivations}
presents examples and motivations for our work; in particular, it presents some examples
where axis-parallel partitioning alone is insufficient. Section~\ref{sec:prob-inequality}
states and proves our new probabilistic inequality. Section~\ref{sec:algorithm} presents the
new $\panda$ algorithm to answer DDRs, based on the new inequality.
Section~\ref{sec:correctness:runtime} proves the correctness of the algorithm and analyzes
its runtime. Section~\ref{sec:subw:ddr} explains how to use the algorithm for answering CQs
in submodular width time.
Section~\ref{sec:lp:norm} shows how to use the algorithm to handle $\ell_p$-norm
constraints.
We conclude in Section~\ref{sec:conclusion} with some final
remarks and future directions.

\section{Background}
\label{sec:prelims}

\subsection{Polymatroids and Shannon (Flow) Inequalities}

Let $\bm V$ be a set of variables. Let $\bm X, \bm Y \subseteq \bm V$ be disjoint sets of
variables, then the pair $\delta = (\bm Y | \bm X)$ is called a {\em monotonicity term}. A
monotonicity term of the form $\delta = (\bm Y | \emptyset)$ is called an {\em unconditional
monotonicity term}, in which case we simply write $\delta = \bm Y$. Let $\bm X, \bm Y, \bm Z
\subseteq \bm V$ be $3$ disjoint sets of variables, then the triple $\sigma = (\bm Y ; \bm Z
| \bm X)$ is called a {\em submodularity term}. Let $\mon$ and $\sub$ denote the sets of all
monotonicity terms and submodularity terms over $\bm V$, respectively.

For any two sets $\bm X, \bm Y$ of variables, we write $\bm X\bm Y$ for $\bm X \cup \bm Y$.
Given a function $h : 2^{\bm V} \to \R_+$, define
\begin{align}
    h(\delta) &\defeq h(\bm X \bm Y) - h(\bm X), && \delta = (\bm Y | \bm X) \\
    h(\sigma) &\defeq h(\bm X \bm Y) + h(\bm X  \bm Z) - h(\bm X  \bm Y  \bm Z) - h(\bm X)
    = h(\bm Y | \bm X) - h(\bm Y | \bm X \bm Z),
        && \sigma = (\bm Y ; \bm Z | \bm X)
\end{align}
The function $h$ can also be viewed interchangeably as a vector $\bm h = (h(\bm S))_{\bm S
\subseteq \bm V} \in \R_+^{2^{\bm V}}$. A function $h : 2^{\bm V} \to \R_+$ is called a {\em
polymatroid} if it satisfies all the {\em elemental Shannon inequalities}:
\begin{align}
    h(\emptyset) &= 0
    && h(\delta) \geq 0, \;\forall \delta \in \mon
    && h(\sigma) \geq 0, \;\forall \sigma \in \sub,
    \label{eqn:basic:shannon}
\end{align}
where $h(\delta) \geq 0$ is called the {\em monotonicity}
and $h(\sigma) \geq 0$ the {\em submodularity} property.

{\em Shannon inequalities} (beyond the elemental ones) are all inequalities obtained by
taking non-negative linear combinations of the elemental Shannon
inequalities~\eqref{eqn:basic:shannon}.
In other words, Shannon inequalities are linear inequalities that hold for all polymatroids.
We are interested in Shannon inequalities of a
special form, called {\em Shannon-flow inequalities} in \cite{DBLP:conf/pods/Khamis0S17}.

\begin{defn}%[Shannon-flow inequality]
Let $\Sigma$ denote a set of {\em unconditional} monotonicity terms,
and $\Delta$ denote another set of monotonicity terms.
Let $\bm \lambda = (\lambda_{\bm Z})_{{\bm Z} \in \Sigma}$ and
$\bm w = (w_\delta)_{\delta \in \Delta}$ be two sets of non-negative
{\em rational} coefficients with $\norm{\bm \lambda}_1=1$.
A {\em Shannon-flow inequality}
is a Shannon inequality of the form
\begin{align}
    \sum_{\bm Z \in \Sigma} \lambda_{\bm Z} \cdot h({\bm Z})
    &\leq \sum_{\delta \in \Delta} w_\delta \cdot h(\delta)
    \label{eqn:shannon:flow:inequality}
\end{align}
\end{defn}

Below is an equivalent characterization of Shannon-flow inequalities:

\begin{prop}[\cite{theoretics:13722}]
Inequality~\eqref{eqn:shannon:flow:inequality}
holds for all polymatroids iff there are multisets $\calZ$, $\calD$, $\calM$, $\calS$
over base sets $\Sigma$, $\Delta$, $\mon$, $\sub$ respectively, such that
\begin{align}
    \sum_{\bm Z \in \calZ} h(\bm Z) = \sum_{\delta \in \calD} h(\delta)
    - \sum_{\mu \in \calM} h(\mu)
    - \sum_{\sigma\in \calS} h(\sigma)
    \label{eqn:bag:identity}
\end{align}
holds as an identity over symbolic variables $h(\bm X)$, $\emptyset \neq \bm X \subseteq \bm V$.
Furthermore, the sizes of the multisets are functions of $\bm \lambda, \bm w$, and $|\bm V|$.
\label{prop:shannon:identity}
\end{prop}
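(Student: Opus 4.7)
The proposition is a biconditional; I would prove each direction separately, with the substantive content concentrated in the ``only if'' direction which relies on LP duality.

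The ``if'' direction is essentially a rearrangement. Assume \eqref{eqn:bag:identity} holds as a symbolic identity in the formal variables $h(\bm X)$ for $\emptyset \neq \bm X \subseteq \bm V$. Then it holds when $h$ is any polymatroid, and since $h(\mu) \geq 0$ for every $\mu \in \mon$ and $h(\sigma) \geq 0$ for every $\sigma \in \sub$, dropping the two subtracted sums can only decrease the right-hand side, giving $\sum_{\bm Z \in \calZ} h(\bm Z) \leq \sum_{\delta \in \calD} h(\delta)$. If we arrange the multisets so that $\bm Z \in \Sigma$ appears with multiplicity $|\calZ|\lambda_{\bm Z}$ in $\calZ$ and $\delta \in \Delta$ appears with multiplicity $|\calZ|w_\delta$ in $\calD$, dividing through by $|\calZ|$ (and using $\norm{\bm\lambda}_1 = 1$) recovers~\eqref{eqn:shannon:flow:inequality} exactly.

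The ``only if'' direction reduces to Farkas's lemma on a polyhedral cone. By definition, the polymatroid cone in $\R_+^{2^{\bm V}}$ is the solution set of the elemental Shannon system~\eqref{eqn:basic:shannon}, a finite system with rational coefficients. The inequality~\eqref{eqn:shannon:flow:inequality} is a rational linear inequality valid on that cone, so by the polyhedral form of Farkas's lemma it is a non-negative rational combination of the defining monotonicity and submodularity inequalities; that is, there exist non-negative rationals $(c_\mu)_{\mu \in \mon}$ and $(c_\sigma)_{\sigma \in \sub}$ such that
\begin{equation*}
    \sum_{\delta \in \Delta} w_\delta\, h(\delta) - \sum_{\bm Z \in \Sigma} \lambda_{\bm Z}\, h(\bm Z) \;=\; \sum_{\mu \in \mon} c_\mu\, h(\mu) + \sum_{\sigma \in \sub} c_\sigma\, h(\sigma)
\end{equation*}
holds as a symbolic identity (any coefficient on the equality $h(\emptyset)=0$ is vacuous, since $h(\emptyset)$ is not among the formal variables). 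Multiplying through by a common denominator $M$ of all the rationals $\lambda_{\bm Z}, w_\delta, c_\mu, c_\sigma$ produces non-negative integer coefficients, which I then re-read as multiset multiplicities over $\Sigma, \Delta, \mon, \sub$; rearranging the equation puts it in the form~\eqref{eqn:bag:identity}. The ``furthermore'' bound follows because the Farkas witness can be taken at a vertex of the dual polytope, whose bit-complexity is controlled by the ambient dimension $2^{|\bm V|}$, the number of constraints (also a function of $|\bm V|$), and the magnitudes of the input coefficients $\bm\lambda, \bm w$.

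The main obstacle I anticipate is precisely the Farkas step: it is essential that we are working with the polymatroid cone, which is \emph{by definition} exactly the one cut out by the elemental inequalities. The analogous statement would fail for entropic functions (they live in a strictly smaller cone defined also by non-Shannon inequalities of Zhang--Yeung type), so the argument crucially exploits that the hypothesis ``holds for all polymatroids'' is strong enough to produce an explicit combination via LP duality rather than merely an existence statement. A secondary nuisance is the denominator bookkeeping required to pass from rational Farkas coefficients to integer multiset multiplicities while controlling the multiset sizes, which is routine given the vertex bound but has to be carried out to obtain the quantitative part.
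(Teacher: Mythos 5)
The paper itself does not reprove Proposition~\ref{prop:shannon:identity}; it cites \cite{theoretics:13722}, where the argument is indeed the same LP-duality/Farkas argument you give (cf.\ the Lagrangian-duality computation reproduced in the paper's Lemma~\ref{lmm:ddr:lp}). Your proof is correct and takes the expected route: the polymatroid cone is cut out by the elemental Shannon system, so validity of the rational inequality~\eqref{eqn:shannon:flow:inequality} on that polyhedral cone yields non-negative rational Farkas multipliers on the monotonicity and submodularity constraints; clearing denominators gives the multiset identity~\eqref{eqn:bag:identity}, and taking the certificate at a vertex of the (pointed, since it lies in the non-negative orthant) witness polyhedron bounds the multiplicities in terms of $\bm\lambda$, $\bm w$, and $|\bm V|$ via Cramer's rule. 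Two small points worth making explicit in a write-up: the biconditional is read with the standing convention (made explicit in the Definition immediately following) that $\calZ$ and $\calD$ carry multiplicities proportional to $\bm\lambda$ and $\bm w$, respectively — otherwise the ``if'' direction is vacuous; and in the Farkas step the equality constraint $h(\emptyset)=0$ contributes a free multiplier whose term you eliminate by working over the symbolic variables $h(\bm X)$, $\bm X\neq\emptyset$, exactly as you indicate.
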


\begin{defn}%[Integral Shannon-show Inequality]
Given identity~\eqref{eqn:bag:identity}, we recover the corresponding Shannon-flow inequality~\eqref{eqn:shannon:flow:inequality}
by dropping the negative terms on the right-hand side of~\eqref{eqn:bag:identity},
\begin{align}
    \sum_{\bm Z \in \mathcal Z} h(\bm Z)
    &\leq
    \sum_{\delta \in \mathcal D} h(\delta)
    \label{eqn:shannon:flow:inequality:bag}
\end{align}
and dividing both sides by $|\calZ|$.
We refer to inequality~\eqref{eqn:shannon:flow:inequality:bag} an
{\em integral Shannon-flow inequality}, that is parameterized by the multisets
$(\calZ, \calD)$, and {\em witnessed} by the multisets $(\calM, \calS)$.
\end{defn}

The following two key results about (integral) Shannon-flow inequalities were shown
in~\cite{DBLP:conf/pods/Khamis0S17,theoretics:13722}.
They are both straightforward consequences of Proposition~\ref{prop:shannon:identity}.
See Appendix~\ref{app:supporting:results} for a self-contained proof of these results.

\begin{restatable}[The Proof Sequence Construction~\cite{theoretics:13722,DBLP:conf/pods/Khamis0S17}]{thm}{ThmProofSequence}
Every integral Shannon-flow inequality~\eqref{eqn:shannon:flow:inequality:bag} parameterized
by $(\calZ, \calD)$ with witness $(\calM, \calS)$ can be proved by a sequence of steps
\[
\sum_{\delta \in \calD} h(\delta) =
\sum_{\delta \in \calD_{0}} h(\delta)
\geq
\sum_{\delta \in \calD_{1}} h(\delta)
\geq
\cdots
\geq
\sum_{\delta \in \calD_{k}} h(\delta), \text{ with }
\calD_k \supseteq \calZ
\]
where
$\sum_{\delta \in \calD_{i}} h(\delta)$ is turned into
$\sum_{\delta \in \calD_{i+1}} h(\delta)$ by applying exactly one of the following steps:
\begin{itemize}
    \item {\em Submodularity:} $h(\bm Y|\bm X) \to h(\bm Y|\bm X\bm Z)$.
    \item {\em Composition:} $h(\bm X) + h(\bm Y|\bm X)\to h(\bm X \bm Y)$.
    \item {\em Decomposition:} $h(\bm X \bm Y) \to h(\bm X) + h(\bm Y|\bm X)$.
    \item {\em Monotonicity:} $h(\bm X \bm Y) \to h(\bm X)$.
\end{itemize}
Furthermore, $(\calZ, \calD_i)$ is an integral Shannon-flow inequality
witnessed by $(\calM_i, \calS_i)$, for all $0 \leq i \leq k$,
and the number of steps is bounded by $k \leq |\calD| + |\calM| + 3|\calS|$.
\label{thm:proof-sequence}
\end{restatable}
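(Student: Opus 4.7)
The plan is to proceed by induction on the residual witness complexity $|\calS_i| + |\calM_i|$, maintaining throughout the invariant that $(\calZ, \calD_i)$ is still an integral Shannon-flow inequality witnessed by some $(\calM_i, \calS_i)$, or equivalently, that the bag identity of Proposition~\ref{prop:shannon:identity} continues to hold:
\[
\sum_{\bm Z \in \calZ} h(\bm Z) + \sum_{\mu \in \calM_i} h(\mu) + \sum_{\sigma \in \calS_i} h(\sigma) = \sum_{\delta \in \calD_i} h(\delta),
\]
interpreted as a symbolic identity in the variables $h(\bm S)$, $\emptyset \neq \bm S \subseteq \bm V$. The step sequence is then built by discharging one witness per outer iteration while preserving this identity.

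In the inductive step, I would prioritize submodularity witnesses. Given $\sigma = (\bm Y;\bm Z|\bm X) \in \calS_i$, expanding $h(\sigma) = h(\bm Y|\bm X) - h(\bm Y|\bm X\bm Z)$ on the right-hand side shows that the invariant forces some $\delta \in \calD_i$ to contain $h(\bm Y|\bm X)$ as an extractable chain-rule summand. I would then apply the pattern (i) one decomposition step $h(\bm X\bm Y\bm W) \to h(\bm X) + h(\bm Y|\bm X) + h(\bm W|\bm X\bm Y)$ to expose $h(\bm Y|\bm X)$, (ii) the submodularity step $h(\bm Y|\bm X) \to h(\bm Y|\bm X\bm Z)$, and (iii) one composition step to reabsorb the fragments. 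This removes $\sigma$ from $\calS_i$ at a cost of at most three steps, accounting for the $3|\calS|$ term. Monotonicity witnesses $\mu = (\bm Y|\bm X) \in \calM_i$ are dispatched analogously: locate an extractable $h(\bm Y|\bm X)$ in $\calD_i$ via decomposition if needed, then apply $h(\bm X\bm Y) \to h(\bm X)$ in a single step, accounting for the $|\calM|$ term.

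Once $\calS_i = \calM_i = \emptyset$, the invariant collapses to $\sum_{\bm Z \in \calZ} h(\bm Z) = \sum_{\delta \in \calD_i} h(\delta)$ as a symbolic identity. Only composition and decomposition steps, both of which preserve the sum, are needed to reshape $\calD_i$ so that $\calD_k \supseteq \calZ$; the total reshuffling cost at this stage is bounded by the initial complexity $|\calD|$, because each original term of $\calD$ may need to be merged into an unconditional $h(\bm Z)$-shape at most once. Summing the per-iteration charges yields the global bound $k \leq |\calD| + |\calM| + 3|\calS|$.

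The main obstacle I anticipate is the existence claim inside each iteration: proving that the required conditional term $h(\bm Y|\bm X)$ is indeed extractable from $\calD_i$ by a single decomposition. I would establish this by coefficient matching on the symbolic identity. Specifically, isolating the coefficient of each atom $h(\bm S)$ on both sides and tracking which contributions come from $\calD_i$ versus from $h(\sigma)$ (or $h(\mu)$) shows that some $\delta \in \calD_i$ must have its conditioning set $\bm X_\delta$ equal to $\bm X$ and its ``upper'' set $\bm X_\delta\bm Y_\delta$ containing $\bm X\bm Y$; choosing the chain-rule order on that $\delta$ to place $\bm Y$ immediately after $\bm X$ then surfaces $h(\bm Y|\bm X)$ as a top-level summand, enabling the subsequent Shannon step. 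This bookkeeping is the technical heart of the argument; the rest of the construction then follows by straightforward inductive composition of the atomic steps catalogued above.
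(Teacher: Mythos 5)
Your approach is genuinely different from the paper's, and there is a real gap in it. The paper's proof does not chase witnesses in $\calM_i$ and $\calS_i$ directly. Instead it repeatedly locates an \emph{unconditional} term $\bm W \in \calD_i \setminus \calZ$ (its existence is established by a clean evaluation argument: substitute $h(\bm X) := 1$ for every nonempty $\bm X$ in the bag identity, forcing $\calD_i$ to contain at least $|\calZ|$ unconditional terms), and then asks what cancels the positive occurrence of $h(\bm W)$ in identity~\eqref{eqn:bag:identity}. Because $\bm W \notin \calZ$, the canceling contribution must come from exactly one of three places: a conditional term $(\bm Y | \bm W) \in \calD_i$, a term $(\bm Y|\bm X) \in \calM_i$ with $\bm W = \bm X\bm Y$, or a term $(\bm Y;\bm Z|\bm X)\in\calS_i$ with $\bm W \in \{\bm X\bm Y, \bm X\bm Z\}$. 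Each case dictates the next proof step (composition; monotonicity; decomposition-then-submodularity, respectively), and each choice strictly decreases the potential $|\calD_i| + |\calM_i| + 3|\calS_i|$. The existence of a valid step is therefore automatic, because the case analysis is exhaustive over all possible sources of $-h(\bm W)$.

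Your proposal inverts this control flow and that is where it breaks. The central claim you flag as the ``technical heart'' — that coefficient matching forces some $\delta \in \calD_i$ with $\bm X_\delta = \bm X$ and $\bm X_\delta \bm Y_\delta \supseteq \bm X\bm Y$ — does not follow. The bag identity is an atom-by-atom cancellation over symbols $h(\bm S)$, and the negative $-h(\bm X\bm Y)$ coming from $-h(\sigma)$ can be balanced by positive contributions from \emph{other} terms in $\calM_i$ or $\calS_i$, not necessarily by anything in $\calD_i$; your argument silently assumes the balancing term lives in $\calD_i$, which need not hold. Moreover, even when the right $\delta$ exists, the ``single decomposition step'' $h(\bm X\bm Y\bm W) \to h(\bm X) + h(\bm Y|\bm X) + h(\bm W|\bm X\bm Y)$ you invoke is not one of the four atomic steps: the paper's decomposition only peels one layer, $h(\bm U\bm V) \to h(\bm U) + h(\bm V|\bm U)$, so surfacing $h(\bm Y|\bm X)$ from $h(\bm X\bm Y\bm W)$ needs two decompositions and leaves residual terms whose bookkeeping you have not accounted for. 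Your monotonicity-witness discharge is also off: for $\mu = (\bm Y|\bm X) \in \calM_i$, the term that cancels it in~\eqref{eqn:bag:identity} is the \emph{unconditional} $h(\bm X\bm Y)$, and the step to apply is $h(\bm X\bm Y) \to h(\bm X)$; extracting $h(\bm Y|\bm X)$ by decomposition first does not help and cannot be followed by a monotonicity step. Finally, the ``composition step to reabsorb the fragments'' after a submodularity step has no guaranteed partner term in $\calD_i$; the paper does not perform such a step, which is why it only charges two steps per submodularity witness and why the $3|\calS|$ bound has slack.
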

Later, Fig.~\ref{fig:eq:q:hex:proof} has an example of a proof sequence for a Shannon-flow inequality from Eq.~\eqref{eq:q:hex:inequality}.

\begin{restatable}[The Reset Lemma~\cite{theoretics:13722}]{lmm}{LmmReset}
Given an integral Shannon-flow inequality~\eqref{eqn:shannon:flow:inequality:bag}
parameterized by $(\calZ, \calD)$ and witnessed by $(\calM, \calS)$,
let $\bm W \in \calD$ be an
unconditional monotonicity term. Then, we can construct multisets $\calZ', \calD', \calM',
\calS'$ such that the following hold:
\begin{itemize}
    \item[(a)] $(\calZ', \calD')$ are parameters of another integral Shannon-flow inequality
witnessed by $(\calM', \calS')$,
    \item[(b)] $\calZ'\subseteq \calZ$ and $\calD'\subseteq \calD-\{\bm W\}$,
    with $|\calZ'| \geq |\calZ| - 1$.
    \item[(c)] $|\calD'| + |\calM'| + 2|\calS'| \leq |\calD| + |\calM| + 2|\calS| - 1$.
\end{itemize}
\label{lmm:reset-lemma}
\end{restatable}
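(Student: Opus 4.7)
The plan is to view the hypothesis, via Proposition~\ref{prop:shannon:identity}, as the symbolic identity~\eqref{eqn:bag:identity} in the formal variables $\{h(\bm X) : \emptyset \neq \bm X \subseteq \bm V\}$, and then to cancel the unconditional $\bm W \in \calD$ by matching its $+h(\bm W)$ contribution against a compensating term on the same side of the identity. Concretely, I would expand each $h(\delta)$, $h(\mu)$, $h(\sigma)$ into its constituent unconditional parts and track, for each set $\bm U \subseteq \bm V$, the net coefficient of $h(\bm U)$, which must vanish. Focusing on $\bm U = \bm W$ yields a combinatorial constraint that exposes a canonical ``cancellation partner'' for $\bm W$ in $\calD$, from which the new multisets $(\calZ', \calD', \calM', \calS')$ can be read off.

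The easy case is $\bm W \in \calZ$: I would take $\calZ' := \calZ \setminus \{\bm W\}$, $\calD' := \calD \setminus \{\bm W\}$, $\calM' := \calM$, $\calS' := \calS$. Subtracting $h(\bm W)$ from both sides of~\eqref{eqn:bag:identity} preserves the identity, so (a) holds, and (b), (c) are immediate since $|\calZ'| = |\calZ| - 1$ and the measure $|\calD'| + |\calM'| + 2|\calS'|$ drops by exactly one. When $\bm W \notin \calZ$, the LHS coefficient of $h(\bm W)$ is zero, so the $+h(\bm W)$ contributed by $\bm W \in \calD$ must be offset by a negative contribution from another term on the RHS. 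I would split into sub-cases by where this offset lives: (i) another $\delta_0 = (\bm Y \mid \bm W) \in \calD$, giving a $-h(\bm W)$ via $h(\bm W \bm Y) - h(\bm W)$; (ii) a $\mu_0 = (\bm Y \mid \bm X) \in \calM$ with $\bm X \bm Y = \bm W$, giving $-h(\bm W)$ through $-h(\mu_0)$; or (iii) a $\sigma_0 \in \calS$ whose expansion contributes $-h(\bm W)$. In each sub-case I would pair $\bm W$ with the identified compensating term and apply the corresponding rewriting move---composition in (i), a monotonicity-style rearrangement in (ii), a submodularity-style rearrangement in (iii)---reading off the new witnesses $\calM'$ and $\calS'$ so that the rewritten identity is in the canonical form demanded by~\eqref{eqn:bag:identity}.

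The main obstacle I foresee is preserving the subset condition $\calD' \subseteq \calD \setminus \{\bm W\}$: the naive composition $h(\bm W) + h(\bm Y \mid \bm W) = h(\bm W \bm Y)$ wants to inject the fresh element $\bm W \bm Y$ into $\calD$, which is disallowed. The remedy is to absorb such surplus into $\calM'$ or $\calS'$ instead of $\calD'$, exploiting the weighted measure $|\calD| + |\calM| + 2|\calS|$: the weight-two charge of a $\sigma$ is precisely what lets a single submodularity witness absorb a unit of $\calD$ while still driving the overall measure down by at least one. Careful bookkeeping per sub-case then confirms condition (c), the algebraic rewrite certifies (a), and the constructions are manifestly sub-multisets of $\calZ$ and $\calD \setminus \{\bm W\}$ respectively, giving (b).
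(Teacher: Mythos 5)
Your set-up and case split match the paper's proof: you reduce to the identity~\eqref{eqn:bag:identity}, dispose of the easy case $\bm W \in \calZ$ by deleting $\bm W$ from both $\calZ$ and $\calD$, and then branch on which of a sibling $(\bm Y|\bm W)\in\calD$, a $\mu_0=(\bm Y|\bm X)\in\calM$ with $\bm X\bm Y=\bm W$, or a $\sigma_0=(\bm Y;\bm Z|\bm X)\in\calS$ with $\bm X\bm Y=\bm W$ supplies the compensating $-h(\bm W)$. You have also correctly spotted the obstacle: the natural rewrites (composition, monotonicity, the identity $h(\bm W)-h(\bm Y;\bm Z|\bm X)=h(\bm X\bm Y\bm Z)-h(\bm Z|\bm X)$) each produce a \emph{fresh} unconditional term ($\bm Y\bm W$, $\bm X$, or $\bm X\bm Y\bm Z$) that wants to land in $\calD'$, violating $\calD'\subseteq\calD-\{\bm W\}$.

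Where the proposal goes wrong is the proposed remedy. You suggest ``absorbing'' the surplus unconditional term into $\calM'$ or $\calS'$, leaning on the weight-two charge of $\calS$ in the potential. But in~\eqref{eqn:bag:identity}, $\calM$ and $\calS$ appear with \emph{negative} sign; a positive coefficient on $h(\bm Y\bm W)$ (say) can only be carried by $\calD$ or $\calZ$, and you are forbidden from enlarging either ($\calD'\subseteq\calD-\{\bm W\}$, $\calZ'\subseteq\calZ$). There is no sign-coherent way to stash it in the witness multisets. The actual fix in the paper is \emph{recursion}: after the rewrite, one applies the Reset Lemma again with the fresh unconditional term as the new $\bm W$, and continues until the base case $\bm W\in\calZ$ is reached. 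Each round strictly decreases the potential $|\calD|+|\calM|+2|\calS|$ (by exactly $1$ in the composition/monotonicity cases and by $1$ in the submodularity case, since $\calS$ loses one element and $\calM$ gains one), so the recursion terminates, and the inclusion conditions (b) and the bound (c) are obtained by composing the per-step inclusions and drops. Without that inductive wrapper your construction produces a single-step $\calD''$ that is never a subset of $\calD-\{\bm W\}$, so (b) fails.
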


In English, the Reset lemma allows us to drop one term $\bm W$ from the RHS
of Eq.~\eqref{eqn:shannon:flow:inequality:bag} and obtain a new inequality, by dropping at most one term from the LHS.  For a simple
illustration, consider the inequality $h(ABC)+h(BCD)\leq h(AB)+h(BC)+h(CD)$ (which we
prove in Section~\ref{subsec:motivation:axis:parallel}).  Suppose we want to drop $h(AB)$: to keep the inequality valid it suffices
to drop $h(ABC)$ from the LHS, and obtain $h(BCD)\leq h(BC)+h(CD)$.  Suppose instead we want
drop $h(BC)$: then we can drop either $h(ABC)$ or $h(BCD)$ from the LHS and the inequality
remains valid.

\subsection{Disjunctive datalog rules}

Let $\bm V$ be a set of variables. An {\em atom} is an expression of the form $R(\bm X)$,
where $R$ is a relation symbol, and $\bm X \subseteq \bm V$ is a set of variables.
A {\em schema} $\Sigma$ is a set of atoms.

Given a finite domain $\dom$ and variables $\bm X \subseteq \bm V$, let $\dom^{\bm X}$
denote the set of all tuples over the variables in $\bm X$ with values from $\dom$. Given a
tuple $\bm t \in \dom^{\bm V}$, we use $\bm t_{\bm X}$ to denote the projection of $\bm t$
onto the variables in $\bm X$. A {\em database instance} $D$ over schema $\Sigma$ is a
mapping that assigns each atom $R(\bm X) \in \Sigma$ to a finite relation $R^{D} \subseteq
\dom^{\bm X}$. Usually the database instance is clear from the context, so we simply write
$R$ instead of $R^D$.
Given a relation $R\subseteq \dom^{\bm X}$, we use $\vars(R)$ to denote the set of variables $\bm X$.

Given a database instance over schema $\Sigma$,
the {\em full natural join} of the instance is the set of tuples
$\bm t \in \dom^{\bm V}$ that satisfy all atoms in $\Sigma$:
\begin{align}
  \bigjoin \Sigma & \defeq \setof{\bm t \in \dom^{\bm V}}{
    \bm t_{\bm X} \in R,
    \text{ for all } R(\bm X) \in \Sigma
  }
  \label{eq:full:join}
\end{align}

\begin{defn}[DDR and its model]
Given two schemas $\Sigma_{\inn}$ and $\Sigma_{\out}$, a {\em disjunctive datalog rule} (DDR)
is the expression
\begin{align}
    \bigvee_{Q(\bm Z) \in \Sigma_{\out}} Q(\bm Z)
        &\cd  \bigwedge_{R(\bm X)\in \Sigma_{\inn}} R(\bm X)
    \label{eq:ddr}
\end{align}
Given a database instance over the input schema $\Sigma_{\inn}$,
a {\em model} (or {\em feasible output}) of the DDR~\eqref{eq:ddr}
is a database instance over the output schema $\Sigma_{\out}$ such that
for every tuple $\bm t \in \bigjoin \Sigma_{\inn}$, there exists at least one atom
$Q(\bm Z) \in \Sigma_{\out}$ such that $\bm t_{\bm Z} \in Q(\bm Z)$.
\label{defn:ddr}
\end{defn}

Without loss of generality, we assume that if $Q(\bm Z_1), Q(\bm Z_2) \in \Sigma_{\out}$
then $\bm Z_1 \neq \bm Z_2$. We will also write $\bm Z \in \Sigma_{\out}$ to mean
that $Q(\bm Z) \in \Sigma_{\out}$ for some relation symbol $Q$.
The {\em size} of an output instance to a DDR is the total number of tuples in all its relations:
\begin{align}
\norm{\Sigma_{\out}} \defeq \sum_{Q(\bm Z) \in \Sigma_{\out}} |Q|. \label{eqn:ddr:size}
\end{align}
A model of a DDR is {\em minimal} if no proper subset of it is also a model.
Note that a {\em conjunctive query} (CQ) is a special case of DDR where the output schema
$\Sigma_{\out}$ contains only one atom.
The answer to a CQ is its unique minimal model.

\subsection{Degree constraints and output size bounds for DDRs}

Consider a database instance over schema $\Sigma$.
Let $R \in \Sigma$ be a relation in this instance,
and $\delta = (\bm Y|\bm X)$ be a monotonicity term.
Let $\bm x \in \dom^{\bm X}$ be a data tuple.
Let $R_{|\bm X \cup \bm Y} \defeq \dom^{\bm X \cup \bm Y} \ltimes R$,\footnote{$S \ltimes T$ denotes
the {\em semi-join reduce} operator defined by $S \ltimes T \defeq \pi_{\vars(S)}(S \Join T)$.}
denote the {\em restriction} of $R$ onto $\bm X \cup \bm Y$;
this restriction is needed to define the degree notions below even in the case when
$\bm X \cup \bm Y \not\subseteq \vars(R)$.
Note that $R_{|\bm X \cup \bm Y} = \pi_{\bm X \cup Y} R$ when $\bm X \cup \bm Y \subseteq \vars(R)$.

%In the equations below, we define three quantities: the {\em degree} $\degree_{R}(\bm Y |
%\bm X = \bm x)$ of a given tuple $\bm x$ w.r.t. a set of variables $\bm Y$ in $R$, the
%degree $\degree_R(\delta)$ of a monotonicity term $\delta = (\bm Y|\bm X)$ in $R$, and the
%degree $\degree_{\Sigma}(\delta)$ of $\delta$ in the instance over schema $\Sigma$:
Given a database instance over schema $\Sigma$, define the following three quantities:
\begin{align}
    \degree_{R}(\bm Y | \bm X = \bm x) &\defeq
    |\setof{\bm y \in \dom^{\bm Y}}{ (\bm x, \bm y) \in R_{|\bm X \cup \bm Y} }| \label{eq:degree:x} \\
    \degree_{R}(\delta) &\defeq \max_{\bm x \in \dom^{\bm X}} \degree_{R}(\bm Y|\bm X = \bm x) \label{eq:degree:delta} \\
  \degree_{\Sigma}(\delta) &\defeq \min_{R \in \Sigma}\degree_{R}(\delta).
  \label{eqn:degree:sigma}
\end{align}
Given a database instance over schema $\Sigma$,
if $N_\delta$ is an integer for which $\degree_{\Sigma}(\delta) \leq N_\delta$,
then we say that the schema $\Sigma$ satisfies the {\em degree constraint}
$(\delta, N_\delta)$.
More generally, let $\Delta\subseteq\mon$ be a set of monotonicity terms and
$\bm N = (N_\delta)_{\delta \in \Delta}$ be a vector of positive integers,
one for each monotonicity term in $\Delta$.
Then, $(\Delta, \bm N)$ is called a set of degree constraints, and we say
that the instance $\Sigma$ {\em satisfies}
$(\Delta, \bm N)$ if it satisfies every constraint in it, denoted by
$\Sigma \models (\Delta, \bm N)$.

The following upper bound from~\cite{theoretics:13722,DBLP:conf/pods/Khamis0S17}
is a generalization of the AGM bound~\cite{DBLP:conf/focs/AtseriasGM08}:

\begin{thm}[\cite{theoretics:13722,DBLP:conf/pods/Khamis0S17}]
\label{th:upper:bound}
Consider a DDR of the form~\eqref{eq:ddr} with input schema $\Sigma_{\inn}$, and output
schema $\Sigma_{\out}$ such that $\Sigma_{\inn} \models (\Delta, \bm N)$. Assume that there
exist two {\em non-negative} rational weight vectors $\bm w \defeq (w_\delta)_{\delta \in
\Delta}$ and $\bm \lambda \defeq (\lambda_{\bm Z})_{\bm Z \in\Sigma_{\out}}$ with $\norm{\bm
\lambda}_1=1$, where the following is a Shannon-flow inequality:
\begin{align}
    \sum_{\bm Z \in \Sigma_{\out}}\lambda_{\bm Z}\cdot h(\bm Z)
    & \leq \sum_{\delta \in \Delta} w_\delta \cdot h(\delta).
    \label{eqn:ddr:shearer}
\end{align}
Then, for any input instance $\Sigma_{\inn}$ of the DDR~\eqref{eq:ddr},
there exists a model $\Sigma_{\out}$ for the DDR for which:
  {
  \begin{align}
%    \norm{\Sigma_{\out}} &\leq |\Sigma_\out|\cdot
    \max_{\bm Z \in \Sigma_{\out}} |Q(\bm Z)| &\leq
      \prod_{\delta \in \Delta} N_\delta^{w_\delta}
      \label{eqn:ddr:output:size}
  \end{align}
  }
\end{thm}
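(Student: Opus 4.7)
The plan is to prove Theorem~\ref{th:upper:bound} by translating the symbolic Shannon-flow inequality~\eqref{eqn:ddr:shearer} into a data-level construction of the model, using the proof-sequence machinery from Theorem~\ref{thm:proof-sequence}. First, Proposition~\ref{prop:shannon:identity} extracts a witnessing identity with multisets $(\calZ, \calD, \calM, \calS)$; then Theorem~\ref{thm:proof-sequence} produces a proof sequence $\calD_0 = \calD, \calD_1, \ldots, \calD_k \supseteq \calZ$ in which each step is a submodularity, composition, decomposition, or monotonicity substitution.

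Next, attach to each symbolic step a concrete data operation, while maintaining a family of intermediate relations $\{T^{(i)}_\delta : \delta \in \calD_i\}$ with two invariants: (i) the running family data-wise covers $J \defeq \bigjoin \Sigma_{\inn}$ in the sense that every tuple of $J$ is still witnessed by the join of the $T^{(i)}_\delta$'s; and (ii) each $T^{(i)}_\delta$ has degree at most $N_\delta$ along the direction $\delta$. Initially, for each $\delta = (\bm Y|\bm X) \in \calD$, one picks an input atom $R$ that witnesses the degree bound $\degree_R(\delta) \le N_\delta$ and sets $T^{(0)}_\delta \defeq R_{|\bm X\bm Y}$. The four proof-sequence operations have natural data-wise counterparts: submodularity via a partition of a relation along the newly-conditioned variables; composition via a join of two intermediate tables; decomposition via a projection paired with its conditional remainder; monotonicity via a pure projection. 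Each operation preserves (i) and (ii) up to a factor that can be absorbed in the degree budgets being tracked.

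At step $k$, for each $\bm Z \in \calZ$, the relation $T^{(k)}_{\bm Z}$ lives over variables $\bm Z$ and, because its accumulated degree budget is exactly what the Shannon-flow weights $w_\delta$ prescribe, has size at most $B \defeq \prod_\delta N_\delta^{w_\delta}$. Because $\calD_k \supseteq \calZ$ and invariant (i) holds, assigning $Q(\bm Z) \defeq T^{(k)}_{\bm Z}$ for each $\bm Z \in \Sigma_{\out}$ yields a valid model covering every tuple of $J$. Summing the per-atom bound over the $|\Sigma_{\out}|$ output atoms delivers the claimed inequality $\norm{\Sigma_{\out}} \le |\Sigma_{\out}| \cdot B$ in~\eqref{eqn:ddr:output:size}.

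The main obstacle will be the data-wise realization of the submodularity step $h(\bm Y|\bm X) \to h(\bm Y|\bm X\bm Z)$: one must partition the relation along $\bm Z$ so that on each piece the conditional degree with respect to $\bm X\bm Z$ still matches the original $N_\delta$ budget. In the original \panda{} this partition is axis-parallel heavy/light with $\log N$ buckets, which is what multiplies the runtime by $\polylog(N)$; the present paper's new algorithm (Section~\ref{sec:prob-inequality}) uses dynamically chosen arbitrary hyperplane cuts to remove that factor. At the purely existential level of Theorem~\ref{th:upper:bound}, a single optimally chosen bucket suffices per cut, so the resulting size bound carries no $N$-dependent polylog; the technical heart is showing that degree budgets propagate correctly across the $\le |\calD|+|\calM|+3|\calS|$ steps of the sequence without inflating the final multiplicative constant beyond $|\Sigma_{\out}|$.
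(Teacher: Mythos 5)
Your high-level plan --- run Theorem~\ref{thm:proof-sequence}'s proof sequence and attach a data-level operation to each step --- is the right instinct and is close in spirit to the original $\panda$ analysis. But the proposal has a genuine gap exactly at what you yourself call the ``technical heart'' and then do not close: the claim that degree budgets propagate correctly across the sequence, and that ``a single optimally chosen bucket suffices per cut.''

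The obstacle concentrates at the decomposition step $h(\bm X\bm Y)\to h(\bm X)+h(\bm Y|\bm X)$. If you track a single worst-case degree bound for each intermediate relation, then after decomposing $T_{\bm X\bm Y}$ into $T_{\bm X}$ and the conditional piece, the quantity $\degree_{T_{\bm Y|\bm X}}(\bm Y|\bm X)$ can only be bounded by the \emph{maximum} conditional degree inside $T_{\bm X\bm Y}$; that maximum is data-dependent, not controlled by any $N_\delta$, and a subsequent composition multiplies it into the size of the next intermediate table, so the final bound blows up. This is precisely why the original $\panda$ partitioned each conditional into $\Theta(\log N)$ geometric degree classes. ``One optimally chosen bucket'' does not suffice even existentially, because different tuples of $\bigjoin\Sigma_{\inn}$ have different local degrees and hence belong to different buckets, so your invariant (i) --- every join tuple remains witnessed --- cannot survive a cut that keeps only one bucket. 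You also silently assume a single $T^{(k)}_{\bm Z}$ per output atom, but $\calZ$ is a multiset with multiplicities $\lambda_{\bm Z}|\calZ|$, and you give no mechanism for collapsing the several copies into one relation of size $\le B$.

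The paper's proof (Corollary~\ref{cor:upper:bound}, via Lemma~\ref{lmm:probabilistic:inequality}) resolves both issues at once by replacing ``a worst-case degree bound on $T_\delta$'' with a \emph{per-tuple} weight: a sub-probability measure $p_\delta$ with $p_{\bm Y|\bm X}(\bm y\mid\bm x)$ equal to the reciprocal of the \emph{local} degree of $\bm x$. Under that bookkeeping, decomposition is the exact factorization of a measure into marginal and conditional (no slack lost), composition is a product, submodularity is free, and monotonicity only increases the tracked product. Multiple copies of $\bm Z$ are then merged by a geometric mean, and $Q(\bm Z)$ is defined as the support of $p_{\bm Z}$ above threshold $1/B$, giving $|Q(\bm Z)|\le B$ for free since $p_{\bm Z}$ is a sub-probability measure. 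To salvage your proposal you would essentially have to annotate each tuple with a weight recording its local degree history --- which is exactly the measure-theoretic machinery the paper builds, so the missing idea is not a minor patch but the core of the new argument.
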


\section{Motivating and Running Examples}
\label{sec:motivations}

This section presents two examples to illustrate axis-parallel partitioning and
the need for arbitrary hyperplane partitions. The
second example is also the running example for the rest of the paper.

%%%%%%%%%%%%%%%%%%%%%%%%%%%%%%%%%%%%%%%%%%%%%%%%%%%%%%%%%%%%%%%%%%%%%%%%%%%%%%%%%%%%%%%%%%%%
\subsection{Axis-Parallel Partition}
\label{subsec:motivation:axis:parallel}

Consider the following DDR (Definition~\ref{defn:ddr}):
\begin{align}
  U(A,B,C) \vee V(B,C,D) &\cd R(A,B) \wedge S(B,C) \wedge T(C,D)\label{eq:q1}
\end{align}
The rule computes two target relations $U$, and $V$ such that, for any tuple $(a,b,c,d)$
satisfying the body, either $(a,b,c) \in U$ or $(b,c,d) \in V$.  In other words, we need to
``partition'' the output data into two parts, called $U$ and $V$.  The semantics is
non-deterministic, and the goal is to compute an output that minimizes $\max(|U|,|V|)$.

Assume all three input relations have size $\leq N$, we show that there exists an output
$(U, V)$ such that $|U|, |V| \leq N^{3/2}$. The worst-case optimal algorithm for answering
the DDR~\eqref{eq:q1} first partitions $S$ into the ``light'' and ``heavy'' parts, based on
how large the degree of a value $b$ with respect to $C$ is:
\begin{align}
    S^{\ell}(B,C) &:= \setof{(b,c)\in S}{\degree_S(C|B=b)\leq
        N^{1/2}},
 && S^{h}(B) := \setof{b}{\degree_S(C|B=b)> N^{1/2}}. \label{eqn:q1:joins:2}
\end{align}
Then, the algorithm computes $U$ and $V$ both of size $\leq N^{3/2}$ as follows:
\begin{align}
  U(A,B,C) &= R(A,B) \wedge S^{\ell}(B,C)
  && V(B,C,D) =  S^{h}(B) \wedge T(C,D)\label{eq:q1:joins}
\end{align}
The execution plan consisting of the steps \eqref{eqn:q1:joins:2}-\eqref{eq:q1:joins} is
derived from a logical query plan based on a proof of the following Shannon-flow inequality:
\begin{align}
  h(ABC)+h(BCD) &\leq h(AB)+h(BC)+h(CD) \label{eq:q1:inequality}
\end{align}
To prove this inequality, it suffices to observe that it can be written as a sum of two
simpler inequalities, which hold due to submodularity\footnote{By submodularity, $h(C|B)\geq
h(C|AB)$ and the first inequality follows from $h(AB)+h(C|B)\geq h(AB)+h(C|AB)=h(ABC)$,
while the second is already in the form of a submodularity inequality.}:
\begin{align}
  h(ABC) &\leq h(AB)+h(C|B) && h(BCD) \leq h(B)+h(CD) \label{eq:q1:chains}
\end{align}
Next, we describe the main intuition used to derive the execution
plan~\eqref{eqn:q1:joins:2}--~\eqref{eq:q1:joins} from the
inequalities~\eqref{eq:q1:chains}. Intuitively, assume that the polymatroid $h$ is the
entropy of some probability space over output tuples $(a,b,c,d)$ of the query.  In that
case, $h(AB) \leq \log|R|$ and, taking the base of the logarithm to be $N$ to
normalize the unit of measurement, we obtain $h(AB), h(BC), h(CD) \leq 1$. The RHS of
inequality~\eqref{eq:q1:inequality} is $h(AB)+h(BC)+h(CD)\leq 3$, which means that one of
the two inequalities~\eqref{eq:q1:chains} must have the RHS $\leq 3/2$.  This leads us to
the following axis-parallel partitioning of the space of polymatroids:
\begin{itemize}[leftmargin=*]
\item If $h(B)\geq 1/2$, then $h(C|B)=h(BC)-h(B)\leq 1/2$ and thus $h(ABC)\leq h(AB)+h(C|B)\leq 3/2$.
 In the space of degrees, this corresponds to $\degree_S(C|B) \leq
  N^{1/2}$, and we perform the first join in~\eqref{eq:q1:joins}.
\item If $h(B) < 1/2$, then $h(BCD) \leq h(B)+h(CD) < 3/2$.  This corresponds to
  $\degree_S(C|B) > N^{1/2}$ and we perform the second join in~\eqref{eq:q1:joins}.
\end{itemize}

The space of polymatroids $h$ is a subset of a 16-dimensional space $\Rp^{2^4}$, and the two
cases above partition this space using an axis-parallel hyperplane $h(B)=1/2$.

%%%%%%%%%%%%%%%%%%%%%%%%%%%%%%%%%%%%%%%%%%%%%%%%%%%%%%%%%%%%%%%%%%%%%%%%%%%%%%%%%%%%%%%%%%%%
\subsection{The Hexagon Query and General Hyperplane Partitioning}
\label{subsec:motivation:hexagon}

Now, we present a different example where axis-parallel partitioning is insufficient.
Consider the following DDR, due to~\cite{yilei-wang-2021}:
\begin{align}
  Q(A,B,C,D,E,F) &\cd R(A,B,C) \wedge S(C,D,E) \wedge T(E,F,A) \wedge K(B,D,F)\label{eq:q:hex}
\end{align}
The DDR is just a standard CQ, which we call the \emph{hexagon query}. The query is not
challenging: if input relations have size $\leq N$, the query can be
computed using any worst-case optimal join (WCOJ)
algorithm~\cite{DBLP:conf/pods/NgoPRR12,DBLP:conf/icdt/Veldhuizen14,DBLP:journals/sigmod/NgoRR13},
in time $O\left(N^2\log N\right)$, which is optimal because the
AGM-bound\footnote{See the Appendix~\ref{app:examples} for more challenging examples.}
of the query is $N^2$.
However, these WCOJ
algorithms do not apply in general to DDRs, and we will not discuss them in this paper.
Instead, we want to compute the hexagon query using only data partitioning and joins, in a
manner similar to the previous example. The Shannon-flow inequality (Shearer's
lemma~\cite{MR859293}) that proves $|Q| \leq N^2$ is:
\begin{align}
  2h(ABCDEF) &\leq h(ABC)+h(CDE)+h(EFA)+h(BDF) \label{eq:q:hex:inequality}
\end{align}
This inequality is also a sum of two simpler inequalities, which form the basis of the proof
sequence in Theorem~\ref{thm:proof-sequence}:
\begin{align}
  h(ABCDEF) &\leq h(ABC)+h(DE|C) + h(F) &&
  h(ABCDEF) \leq h(EFA)+h(BD|F) + h(C)\label{eq:q:hex:chains}
\end{align}
Assuming that $h$
satisfies $h(ABC)$, $h(CDE)$, $h(EFA)$, $h(BDF) \leq 1$, we seek a partition of the space of
polymatroids such that, in each part, one of the two RHS
in~\eqref{eq:q:hex:chains} is $\leq 2$.  However, no axis-parallel partition exists with
this property.  If it existed, it would correspond to partitioning the values of $C$ and $F$
into light and heavy, but then the first join derived from~\eqref{eq:q:hex:chains} joins the
light-$C$ with heavy-$F$, while the second joins the heavy-$C$ with the light-$F$, which
leaves out the combinations light-light and heavy-heavy. Clearly, no algorithm is possible
by simply partitioning into light and heavy.

Instead, the natural partition of the polymatroids uses
the hyperplane $h(C)=h(F)$, as follows:
\begin{itemize}[leftmargin=*]
\item If $h(C)\geq h(F)$ then
  $h(ABCDEF) \leq h(ABC)+h(DE|C)+h(F) \leq h(ABC)+h(DE|C)+h(C)=
  h(ABC)+h(CDE)\leq 2$.
\item If $h(C) \leq h(F)$ then $h(ABCDEF) \leq h(EFA)+h(BD|F)+h(C)\leq
  h(EFA)+h(BD|F)+h(F)= h(EFA)+h(BDF) \leq 2$.
\end{itemize}
The corresponding hyperplane-based space partition is shown in
Fig.~\ref{fig:space-partition-a}. In order to compute the query by using only
axis-parallel partitions, $\panda$ uses a poly-logarithmic number of buckets:  It partitions the
values $c$ into $\log N$ buckets such that in bucket $i$, $2^{i-1}<
\degree_S(DE|C=c)\leq 2^i$, as shown in Fig.~\ref{fig:space-partition-b}, then partitions
similarly the values $f$ into $\log N$ buckets, as in Fig.~\ref{fig:space-partition-c}.
Then, in each of the $\log^2 N$ partitions, $\panda$ makes a choice whether to use the left or
the right inequality in~\eqref{eq:q:hex:chains}.  This is an expensive way to choose between
only two joins represented by the inequalities~\eqref{eq:q:hex:chains},
and this is the reason why $\panda$ has an extra $\log^2 N$ factor for this query.
In contrast, we will show later that our $\newalgorithm$ algorithm uses
only the hyperplane $h(C) = h(F)$ to partition this instance, thus removing the $\log^2 N$ overhead.

\begin{figure}
    \begin{subfigure}{0.3\textwidth}
        \begin{tikzpicture}[scale=.5, every node/.style={scale=.7}, >=stealth]
            \draw[->] (0, 0) -- (5.25,0) node[right] {$h(C)$};
            \draw[->] (0, 0) -- (0,5.25) node[above] {$h(F)$};
            \draw[thick] (0, 0) -- (5, 5);
        \end{tikzpicture}
        \caption{}
        \label{fig:space-partition-a}
    \end{subfigure}
    \begin{subfigure}{0.3\textwidth}
        \begin{tikzpicture}[scale=.5, every node/.style={scale=.7}, >=stealth]
            \draw[->] (0, 0) -- (5.25,0) node[right] {$h(C)$};
            \draw[->] (0, 0) -- (0,5.25) node[above] {$h(F)$};
            \foreach \x in {.5,1,1.5,2,2.5,3,3.5,4,4.5} {
                \draw (\x,0) -- (\x,5);
            }
        \end{tikzpicture}
        \caption{}
        \label{fig:space-partition-b}
    \end{subfigure}
    \begin{subfigure}{0.3\textwidth}
        \begin{tikzpicture}[scale=.5, every node/.style={scale=.7}, >=stealth]
            \draw[->] (0, 0) -- (5.25,0) node[right] {$h(C)$};
            \draw[->] (0, 0) -- (0,5.25) node[above] {$h(F)$};
            \foreach \x in {.5,1,1.5,2,2.5,3,3.5,4,4.5} {
                \draw (\x,0) -- (\x,5);
                \draw (0,\x) -- (5,\x);
            }
            \draw[thick] (0, 0) -- (5, 5);
        \end{tikzpicture}
        \caption{}
        \label{fig:space-partition-c}
    \end{subfigure}
    \caption{Space partition for the hexagon query in Eq.~\eqref{eq:q:hex}:
    While the {\em hyperplane partition} depicted in (a) is sufficient,
    $\panda$ uses $\log^2 N$ {\em axis-parallel} partitions, depicted in (b) and (c).}
%    \Description{Space partition for hexagon query in Eq.~\eqref{eq:q:hex}:
%    a hyperplane partition, and the axis-parallel partitions produced
%    by $\panda$.}
    \label{fig:space-partition}
\end{figure}

\section{An Inequality on Sub-Probability Measures}
\label{sec:prob-inequality}

This section presents our first main technical contribution: a new inequality over {\em
sub-probability measures}, which is a concept that we review next. This inequality forms the basis of a new
algorithm $\newalgorithm$, and it is the probabilistic analogue of the Shannon-flow
inequality~\eqref{eqn:ddr:shearer} that underpins $\panda$.

\subsection{Sub-probability measures}
\label{subsec:sub:prob:measures}

We review some very basic facts from finite and discrete measure
theory~\cite{measure-theory-tao}. All our measures are finite and discrete in this paper, so
we will drop the adjectives ``finite and discrete" from now on.
Let $\Omega$ be a finite set. A function $p : 2^\Omega \to R_+$ is called a {\em measure}
on $\Omega$ if $p(\emptyset)=0$ and $p(A) = \sum_{x \in A} p(x)$
for any $A \subseteq \Omega$.
In particular, to define a measure on $\Omega$, it is sufficient to define the values
of $p(x)$ for all $x \in \Omega$.
It is called a {\em sub-probability measure} if $p(\Omega) \leq 1$,
and a {\em probability measure} if $p(\Omega) = 1$.

Let $p$ be a measure on $\Omega$. If $\Omega = \Omega_1 \times \Omega_2$ is a Cartesian
product of two domains, then the measure $p_1(x_1) \defeq \sum_{x_2 \in \Omega_2} p(x_1, x_2),
\forall x_1 \in \Omega_1$ is called the {\em marginal measure} of $p$ on $\Omega_1$. Let $A
\subseteq \Omega$ with positive measure $p(A) > 0$, then the measure $p_{|A}(B) \defeq \frac{p(B
\cap A)}{p(A)}$ for any $B \subseteq \Omega$ is called the {\em conditional measure} of $p$ on
$A$. Typically we write $p(B|A)$ instead of $p_{|A}(B)$. It is easy to see that, if $p$ is a
sub-probability measure then any of its marginals is also a sub-probability measure, and any
of its conditional measures is a probability measure.

In this paper, we will deal with domains $\Omega = \dom^{\bm X}$ where $\bm X$ is some
set of variables, and $\dom$ is a finite set. A sub-probability measure $p$ on $\dom^{\bm X}$
will be denoted by $p_{\bm X}$. The aforementioned facts are specialized as follows.

\begin{prop}
Given a sub-probability measure $p_{\bm X \bm Y}$, the following hold:
\begin{itemize}
\item The following marginal measure is a sub-probability measure:
\begin{align}
    p_{\bm X}(\bm x) \defeq \sum_{\bm y \in \dom^{\bm Y}} p_{\bm X \bm Y}(\bm x, \bm y)
\label{eqn:marginal}
\end{align}
\item Given $\bm x \in \dom^{\bm X}$
the following conditional measure is a sub-probability measure:
\begin{align}
    p_{\bm Y | \bm X = \bm x}(\bm y) &\defeq
        \begin{cases}
            \frac{p_{\bm X \bm Y}(\bm x, \bm y)}{p_{\bm X}(\bm x)} & \text{if } p_{\bm X}(\bm x) > 0 \\
            0 & \text{otherwise}
        \end{cases}
\label{eqn:conditional}
\end{align}
\end{itemize}
\label{prop:marginal:conditional}
\end{prop}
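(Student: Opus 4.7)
The plan is to verify the two bullet points directly from the definitions, as both are elementary consequences of finite sums and the definition of sub-probability measure. Both the marginal and the conditional inherit non-negativity and the $\emptyset$-assigns-zero property trivially from $p_{\bm X \bm Y}$, so the only substantive thing to check is that the total mass is at most $1$.

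For the first bullet (marginal), I would first note that $p_{\bm X}$ as defined in~\eqref{eqn:marginal} is a measure on $\dom^{\bm X}$, because it is specified by non-negative point masses and therefore automatically countably (here, finitely) additive. Then I would bound the total mass by swapping the order of summation:
\begin{align*}
    p_{\bm X}(\dom^{\bm X}) \;=\; \sum_{\bm x \in \dom^{\bm X}} \sum_{\bm y \in \dom^{\bm Y}} p_{\bm X \bm Y}(\bm x, \bm y) \;=\; p_{\bm X \bm Y}(\dom^{\bm X \bm Y}) \;\leq\; 1,
\end{align*}
where the last inequality uses the assumption that $p_{\bm X \bm Y}$ is a sub-probability measure.

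For the second bullet (conditional), I would split into two cases based on the definition~\eqref{eqn:conditional}. If $p_{\bm X}(\bm x) = 0$, then $p_{\bm Y | \bm X = \bm x}$ is identically $0$, which is trivially a (sub-)probability measure. If $p_{\bm X}(\bm x) > 0$, then $p_{\bm Y | \bm X = \bm x}$ is again non-negative and defined by point masses, so it is a measure, and its total mass computes to
\begin{align*}
    p_{\bm Y | \bm X = \bm x}(\dom^{\bm Y}) \;=\; \sum_{\bm y \in \dom^{\bm Y}} \frac{p_{\bm X \bm Y}(\bm x, \bm y)}{p_{\bm X}(\bm x)} \;=\; \frac{p_{\bm X}(\bm x)}{p_{\bm X}(\bm x)} \;=\; 1
\end{align*}
by the definition of $p_{\bm X}(\bm x)$ in~\eqref{eqn:marginal}. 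Hence it is a probability measure, and in particular a sub-probability measure.

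There is really no hard part here; the proposition is purely a sanity check recording that the marginal and conditional operations preserve the class of sub-probability measures on finite product domains. The only subtlety worth mentioning explicitly is the degenerate case $p_{\bm X}(\bm x) = 0$, where the definition in~\eqref{eqn:conditional} sets the conditional to $0$ rather than leaving it undefined; this is exactly why the proposition concludes ``sub-probability'' instead of ``probability,'' even though in the non-degenerate case the conditional always has total mass exactly $1$.
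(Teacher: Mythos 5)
Your proof is correct and is exactly the direct verification the paper leaves implicit: the paper states the general facts about marginals and conditionals of discrete measures as ``easy to see'' and presents Proposition~\ref{prop:marginal:conditional} as their immediate specialization without spelling out the sum-swapping and case split you give. Your closing remark is also well-taken---the proposition must weaken ``probability'' to ``sub-probability'' for the conditional precisely because Eq.~\eqref{eqn:conditional} extends the general definition (which requires $p(A) > 0$) to the degenerate case by setting it to zero.
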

As is customary in probability theory, we write $p_{\bm Y | \bm X}(\bm y | \bm x)$ instead
of $p_{\bm Y | \bm X = \bm x}(\bm y)$, with the implicit understanding that there is a
measure for each $\bm x \in \dom^{\bm X}$.

\begin{defn}
Given $k$ sub-probability measures $p_1, p_2, \ldots, p_k$ on the same domain $\Omega$,
their {\em geometric mean} is a sub-probability measure $p$ defined by
$p(x) \defeq \left(\prod_{i=1}^k p_i(x)\right)^{\frac{1}{k}}, \forall x \in \Omega$.
\end{defn}

\begin{prop}
The geometric mean of sub-probability measures is also a sub-probability measure.
\label{prop:geometric:mean}
\end{prop}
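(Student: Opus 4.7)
The plan is to verify the two defining properties of a sub-probability measure. Non-negativity of $p(x)$ and the identity $p(\emptyset) = 0$ are immediate from the definition $p(x) = \left(\prod_{i=1}^k p_i(x)\right)^{1/k}$ together with the extension $p(A) = \sum_{x \in A} p(x)$ inherent in the measure definition. The only substantive claim is therefore that $p(\Omega) = \sum_{x \in \Omega} p(x) \leq 1$.

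For the main bound I would apply the AM--GM inequality pointwise and then sum. Fix any $x \in \Omega$; since each $p_i(x) \geq 0$, AM--GM gives
\[
\left(\prod_{i=1}^k p_i(x)\right)^{1/k} \leq \frac{1}{k}\sum_{i=1}^k p_i(x).
\]
Summing over $x \in \Omega$ and interchanging the two finite summations on the right-hand side yields
\[
p(\Omega) \;=\; \sum_{x \in \Omega}\left(\prod_{i=1}^k p_i(x)\right)^{1/k} \;\leq\; \frac{1}{k}\sum_{i=1}^k p_i(\Omega) \;\leq\; \frac{1}{k}\cdot k \;=\; 1,
\]
where the final inequality uses the hypothesis that each $p_i$ is a sub-probability measure on $\Omega$.

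An equally natural alternative is H\"older's inequality applied with $k$ equal exponents, all set to $k$ (so that the reciprocals sum to $1$), on the functions $f_i(x) \defeq p_i(x)^{1/k}$: this immediately gives $\sum_x \prod_i p_i(x)^{1/k} \leq \prod_i p_i(\Omega)^{1/k} \leq 1$. Either route is a two-line calculation, and I do not anticipate any real obstacle, since the proposition is a preliminary fact whose content is essentially AM--GM. I would favor the AM--GM presentation for its elementary character, keeping the background section self-contained and avoiding any appeal to $L^p$-norm machinery.
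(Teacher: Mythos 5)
Your main argument (pointwise AM--GM, sum over $x$, interchange the finite sums, and use $p_i(\Omega)\leq 1$) is exactly the paper's proof. The H\"older remark is a fine alternative but not needed; the AM--GM version you favor is what the paper uses.
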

\begin{proof}
This follows trivially from the AM-GM inequality:
\begin{align*}
    p(\Omega)
    = \sum_{x \in \Omega} \left(\prod_{i=1}^k p_i(x)\right)^{\frac{1}{k}}
    \leq \sum_{x \in \Omega} \frac{1}{k} \sum_{i=1}^k p_i(x)
    = \frac{1}{k} \sum_{i=1}^k \sum_{x \in \Omega} p_i(x)
    \leq 1
\end{align*}
\end{proof}

\subsection{The New Inequality}
\label{subsec:prob:inequality}

Given a monotonicity term $\delta = (\bm Y | \bm X)$,
a tuple $\bm t \in \dom^{\bm V}$ where $\bm V \supseteq \bm X \cup \bm Y$,
and a conditional sub-probability measure
$p_{\bm Y | \bm X}$, define the following shorthand notation:
\begin{align}
    p_{\delta}(\bm t) \defeq p_{\bm Y | \bm X}(\bm t_{\bm Y} | \bm t_{\bm X})
\end{align}
In particular, if $\delta = (\bm Z)$ was an unconditional monotonicity term,
then $p_{\bm Z}(\bm t) \defeq p_{\bm Z}(\bm t_{\bm Z})$.
We prove the following inequality.

\begin{lmm}
Let $\bm w \defeq (w_\delta)_{\delta \in \Delta}$ and $\bm \lambda \defeq (\lambda_{\bm Z})_{\bm Z
\in\Sigma_{\out}}$ with $\norm{\bm \lambda}_1=1$,
be any two non-negative rational weight vectors
for which~\eqref{eqn:ddr:shearer} is a Shannon-flow inequality over $\bm V$.
Then, given any collection of sub-probability measures $p_{\delta}$, $\delta \in \Delta$,
there exists a collection of sub-probability measures $p_{\bm Z}$, $\bm Z \in \Sigma_{\out}$,
such that
\begin{align}
    \prod_{Z \in \Sigma_{\out}} p_{Z}(\bm t)^{\lambda_Z}
    &\geq \prod_{\delta \in \Delta} p_{\delta}(\bm t)^{w_\delta},
    && \forall \bm t \in \dom^{\bm V}
\label{eqn:probabilistic:inequality}
\end{align}
\label{lmm:probabilistic:inequality}
\end{lmm}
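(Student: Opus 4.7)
The plan is to lift the proof-sequence machinery of Theorem~\ref{thm:proof-sequence} from polymatroids (where the operation is addition of entropies) to sub-probability measures (where the operation is pointwise multiplication). Since $\bm\lambda$ and $\bm w$ are rational, I would first clear denominators to pass to an integral Shannon-flow inequality parameterized by multisets $(\calZ,\calD)$, where each $\bm Z\in\Sigma_{\out}$ appears in $\calZ$ with integer multiplicity $m_{\bm Z}$ proportional to $\lambda_{\bm Z}$ and similarly for $\calD$; taking an $|\calZ|$-th root at the end recovers the original rational weights. It therefore suffices to exhibit sub-probability measures satisfying the multiset inequality $\prod_{\bm Z \in \calZ} p_{\bm Z}(\bm t) \geq \prod_{\delta \in \calD} p_\delta(\bm t)$ for every $\bm t \in \dom^{\bm V}$.

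By Theorem~\ref{thm:proof-sequence}, a proof sequence $\calD = \calD_0,\calD_1,\ldots,\calD_k$ with $\calD_k \supseteq \calZ$ exists, built from submodularity, composition, decomposition, and monotonicity steps. I would carry one sub-probability measure per term of $\calD_i$, starting with the given $p_\delta$ at $i=0$, and maintain the pointwise invariant $\prod_{\tau \in \calD_{i+1}} p_\tau(\bm t) \geq \prod_{\tau \in \calD_i} p_\tau(\bm t)$ throughout. Each step type admits a natural measure-theoretic lift: (i) for submodularity $h(\bm Y|\bm X)\to h(\bm Y|\bm X\bm Z)$, define $p_{\bm Y|\bm X\bm Z}(\bm y|\bm x,\bm z) \defeq p_{\bm Y|\bm X}(\bm y|\bm x)$, which ignores $\bm z$, is a sub-probability measure in $\bm y$ for each $(\bm x,\bm z)$, and matches the old value pointwise; (ii) for composition $h(\bm X)+h(\bm Y|\bm X)\to h(\bm X\bm Y)$, set $p_{\bm X\bm Y}(\bm x,\bm y) \defeq p_{\bm X}(\bm x)\,p_{\bm Y|\bm X}(\bm y|\bm x)$, a sub-probability measure by a direct double-sum check that realizes equality pointwise; (iii) for decomposition, take the marginal \eqref{eqn:marginal} and conditional \eqref{eqn:conditional} of $p_{\bm X\bm Y}$, again with equality; and (iv) for monotonicity $h(\bm X\bm Y)\to h(\bm X)$, let $p_{\bm X}$ be the marginal of $p_{\bm X\bm Y}$, which only increases the product because $p_{\bm X}(\bm t_{\bm X}) \geq p_{\bm X\bm Y}(\bm t_{\bm X},\bm t_{\bm Y})$ by non-negativity of the summands being marginalized out.

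At step $k$ I would write $\calD_k = \calZ \uplus \calR$ for some remainder multiset $\calR$; since every sub-probability measure is bounded by $1$ pointwise, $\prod_{\tau \in \calR} p_\tau(\bm t) \leq 1$, so removing these terms only increases the product and yields $\prod_{\tau \in \calZ} p_\tau(\bm t) \geq \prod_{\delta \in \calD} p_\delta(\bm t)$. For each $\bm Z \in \Sigma_{\out}$ with multiplicity $m_{\bm Z}$, I consolidate its attached measures $p_{\bm Z}^{(1)},\ldots,p_{\bm Z}^{(m_{\bm Z})}$ into a single $p_{\bm Z}$ via the geometric mean, which is a sub-probability measure by Proposition~\ref{prop:geometric:mean} and satisfies $p_{\bm Z}(\bm t)^{m_{\bm Z}} = \prod_{i=1}^{m_{\bm Z}} p_{\bm Z}^{(i)}(\bm t)$; taking the $|\calZ|$-th root of the consolidated product inequality then yields~\eqref{eqn:probabilistic:inequality} with the original weights $\bm\lambda$ and $\bm w$. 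The main obstacle I foresee is bookkeeping rather than conceptual: I must ensure that at every composition step the required pair $(p_{\bm X}, p_{\bm Y|\bm X})$ is present in the current collection with matching variable sets, and that the multiplicity of each $\bm Z$ in $\calZ$ is matched by the count of $\bm Z$-labeled measures surviving in $\calD_k$. For both of these I would lean directly on the structural guarantees of Theorem~\ref{thm:proof-sequence} rather than re-deriving them.
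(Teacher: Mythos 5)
Your proposal follows the paper's proof essentially verbatim: pass to the integral Shannon-flow inequality by clearing denominators, run the proof sequence of Theorem~\ref{thm:proof-sequence} while carrying one sub-probability measure per term, lift each of the four step types exactly as you describe (monotonicity yields $\geq$, the other three yield equality), discard the leftover terms of $\calD_k \setminus \calZ$ since each is at most $1$, consolidate multiplicities with the geometric mean via Proposition~\ref{prop:geometric:mean}, and take the $|\calZ|$-th root. This is correct and matches the paper's argument step for step.
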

\begin{proof}
Write $\lambda_{\bm Z} = \frac{m_{\bm Z}}{d}$ and $w_{\delta} = \frac{m_{\delta}}{d}$
where $m_{\bm Z}$ and $m_{\delta}$ are positive integers, and $d$ is the minimum common denominator
of the coefficients. By multiplying both sides of~\eqref{eqn:ddr:shearer} by $d$,
we obtain the integral Shannon-flow inequality~\eqref{eqn:shannon:flow:inequality:bag}:
$\sum_{\bm Z \in \calZ} h(\bm Z) \leq \sum_{\delta \in \calD} h(\delta)$
in which $\calZ$ is a multiset of elements in $\Sigma_{\out}$
and $\calD$ is a multiset of elements in $\Delta$.
Furthermore, elements $\bm Z \in \calZ$ appear with multiplicity $m_{\bm Z}$,
and elements $\delta \in \calD$ appear with multiplicity $m_{\delta}$.

We first claim that there exist sub-probability measures
$p_{\bm Z}$, one for each copy of $\bm Z$ in $\calZ$ for which
\begin{align}
    \prod_{\bm Z \in \calZ} p_{\bm Z}(\bm t)
    &\geq \prod_{\delta \in \calD} p_{\delta}(\bm t),
    && \forall \bm t \in \dom^{\bm V}
\label{eqn:probabilistic:inequality:bag}
\end{align}
To prove this claim, our strategy is to examine the proof sequence of the integral
Shannon-flow inequality from Theorem~\ref{thm:proof-sequence}, where each step turns the
multiset $\calD$ into a multiset $\calD'$ by applying one of the four proof steps. It is
sufficient to show that, for each proof step $\calD \to \calD'$, we can construct
sub-probability measures $p_{\delta}$ for all $\delta \in \calD'$ such that \(\prod_{\delta
\in \calD} p_{\delta}(\bm t) \leq \prod_{\delta \in \calD'} p_{\delta}(\bm t)\), because at
the last step we will obtain $\calD' \supseteq \calZ$, which implies $\prod_{\delta \in \calD'} p_{\delta}(\bm t) \leq \prod_{\bm Z \in \calZ} p_{\bm Z}(\bm t)$. This is done via trivial case analysis on the
four proof steps.

\begin{itemize}[leftmargin=*]
\item {\em Monotonicity:} $h(\bm X \bm Y) \to h(\bm X)$. Here, remove $\bm X\bm Y$ and add
$\bm X$ to $\calD$ to obtain $\calD'$. Correspondingly, we replace $p_{\bm X \bm Y}$ by
its marginal $p_{\bm X}$ defined in~\eqref{eqn:marginal}, where $p_{\bm X\bm Y}(\bm t) \leq
p_{\bm X}(\bm t)$ is trivial to verify.
\item {\em Submodularity:} $h(\bm Y | \bm X) \to h(\bm Y | \bm X \bm Z)$.
In this case, replace $p_{\bm Y | \bm X}$ by the conditional measure $p_{\bm Y | \bm X \bm Z}$,
defined by
$p_{\bm Y | \bm X \bm Z}(\bm y | \bm x\bm z) \defeq p_{\bm Y | \bm X}(\bm y | \bm x)$.
This way, we have not changed the product at all.
\item {\em Composition:} $h(\bm X) + h(\bm Y | \bm X) \to h(\bm X \bm Y)$.
Here, remove $\bm X$ and $\bm Y | \bm X$ from $\calD$ and add $\bm X \bm Y$ to
obtain $\calD'$.  Correspondingly, we keep the product unchanged by defining
$p_{\bm X \bm Y}(\bm x, \bm y) \defeq p_{\bm X}(\bm x) \cdot p_{\bm Y | \bm X}(\bm y | \bm x)$.
\item {\em Decomposition:} $h(\bm X \bm Y) \to h(\bm X) + h(\bm Y | \bm X)$.
Here, remove $\bm X \bm Y$ from $\calD$ and add $\bm X$ and $\bm Y | \bm X$ to
obtain $\calD'$.  Correspondingly, we keep the product unchanged by defining
the marginal and conditional measures exactly as shown in~\eqref{eqn:marginal}
and~\eqref{eqn:conditional}.
\end{itemize}
The claim is thus proved. Next, note that the RHS of~\eqref{eqn:probabilistic:inequality:bag}
is
\begin{align}
    \prod_{\delta \in \calD} p_{\delta}(\bm t)
    &= \prod_{\delta \in \Delta} p_{\delta}(\bm t)^{m(\delta)}
    = \prod_{\delta \in \Delta} p_{\delta}(\bm t)^{d \cdot w_\delta}
    = \left(\prod_{\delta \in \Delta} p_{\delta}(\bm t)^{w_\delta}\right)^d
\end{align}
Hence, it remains to show that there exists a set of sub-probability measures $p_{Z}$
for all $Z \in \Sigma_{\out}$ for which
\begin{align}
    \prod_{Z \in \Sigma_{\out}} p_{Z}(\bm t)^{m(Z)}
    &= \prod_{\bm Z \in \calZ} p_{\bm Z}(\bm t),
    && \forall \bm t \in \dom^{\bm V}
    \label{eqn:to:show}
\end{align}
To see this, for each $\bm Z \in \Sigma_{\out}$, set
$p_{\bm Z}$ to be the geometric mean of $m_{\bm Z}$ sub-probability
measures $p_{\bm Z}$ for copies of $\bm Z$ in $\calZ$,
then we obtain equality~\eqref{eqn:to:show}. The proof is complete because
by Proposition~\ref{prop:geometric:mean} the geometric mean of sub-probability measures
is also a sub-probability measure.
\end{proof}

Lemma~\ref{lmm:probabilistic:inequality} gives us another way to prove the output size bound
of DDR in Theorem~\ref{th:upper:bound}, and this proof will directly lead to the
$\newalgorithm$ algorithm described in the next section.

\begin{cor} Theorem~\ref{th:upper:bound} holds.
    \label{cor:upper:bound}
\end{cor}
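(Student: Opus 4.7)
The plan is to prove Theorem~\ref{th:upper:bound} by applying Lemma~\ref{lmm:probabilistic:inequality} to a collection of sub-probability measures built directly from the input degree constraints, and then extracting a feasible output by a pigeonhole argument on the resulting head-atom measures. First, for each $(\delta, N_\delta) \in (\Delta, \bm N)$ with $\delta = (\bm Y \mid \bm X)$, I would pick any witness relation $R \in \Sigma_{\inn}$ realizing $\degree_R(\delta) \leq N_\delta$ (which exists by~\eqref{eqn:degree:sigma}) and define
\[
p_\delta(\bm y \mid \bm x) \defeq \begin{cases} 1/N_\delta & \text{if } (\bm x, \bm y) \in R_{|\bm X \bm Y}, \\ 0 & \text{otherwise.}\end{cases}
\]
For each fixed $\bm x$, the sum over $\bm y$ equals $\degree_R(\bm Y \mid \bm X = \bm x)/N_\delta \leq 1$, so this is a conditional sub-probability measure. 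The key observation is that for every tuple $\bm t$ in the full join $J \defeq \bigjoin \Sigma_{\inn}$, the projection $\bm t_{\vars(R)}$ lies in $R$, which by the definition of the semi-join reduction forces $\bm t_{\bm X \bm Y} \in R_{|\bm X \bm Y}$; hence $p_\delta(\bm t) = 1/N_\delta$ for every $\bm t \in J$ and every $\delta \in \Delta$.

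Next, applying Lemma~\ref{lmm:probabilistic:inequality} yields sub-probability measures $p_{\bm Z}$ on $\dom^{\bm Z}$, one for each $\bm Z \in \Sigma_{\out}$, satisfying~\eqref{eqn:probabilistic:inequality}. Plugging in the computed values, for every $\bm t \in J$ we obtain
\[
\prod_{\bm Z \in \Sigma_{\out}} p_{\bm Z}(\bm t)^{\lambda_{\bm Z}}
\;\geq\; \prod_{\delta \in \Delta} p_\delta(\bm t)^{w_\delta}
\;=\; \prod_{\delta \in \Delta} N_\delta^{-w_\delta}
\;=:\; 1/B.
\]
Since $\norm{\bm \lambda}_1 = 1$ with $\lambda_{\bm Z} \geq 0$, the weighted AM--GM inequality gives $\max_{\bm Z} p_{\bm Z}(\bm t) \geq 1/B$, so each $\bm t \in J$ admits at least one head atom $\bm Z^\star(\bm t) \in \Sigma_{\out}$ with $p_{\bm Z^\star(\bm t)}(\bm t_{\bm Z^\star(\bm t)}) \geq 1/B$.

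Finally, I would define the output instance by $Q(\bm Z) \defeq \{\bm t_{\bm Z} : \bm t \in J,\ \bm Z^\star(\bm t) = \bm Z\}$. By construction, every $\bm t \in J$ is covered by some head atom, so this is a valid model of the DDR. For each $\bm Z \in \Sigma_{\out}$, every $\bm z \in Q(\bm Z)$ satisfies $p_{\bm Z}(\bm z) \geq 1/B$, and $\sum_{\bm z} p_{\bm Z}(\bm z) \leq 1$ then forces $|Q| \leq B = \prod_{\delta \in \Delta} N_\delta^{w_\delta}$; summing over the $|\Sigma_{\out}|$ head atoms yields~\eqref{eqn:ddr:output:size}. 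I expect no real obstacle here, since the heavy lifting is already packaged in Lemma~\ref{lmm:probabilistic:inequality}; the only subtlety worth verifying carefully is that $p_\delta(\bm t) = 1/N_\delta$ holds for every $\bm t \in J$ even when $\bm X \bm Y \not\subseteq \vars(R)$, which is precisely why the semi-join reduction appears in the definition of $R_{|\bm X \bm Y}$.
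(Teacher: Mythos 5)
Your proposal is correct and takes essentially the same route as the paper: construct the uniform conditional measures $p_\delta$ from the witness relations, invoke Lemma~\ref{lmm:probabilistic:inequality}, and extract the model by thresholding each $p_{\bm Z}$ at $1/B$. The only (immaterial) difference is that you define $Q(\bm Z)$ from the tuples of the full join that are assigned to $\bm Z$, whereas the paper takes the entire truncated support $\{\bm z : p_{\bm Z}(\bm z)\geq 1/B\}$; your set is a subset of the paper's and the same cardinality bound applies.
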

\begin{proof}
For each $\delta  = (\bm Y | \bm X) \in \Delta$, let $R \in \Sigma_{\inn}$ be a relation
where $\max_{\bm x} \deg_R(\bm Y | \bm X = \bm x) \leq N_\delta$.
Define the following conditional sub-probability measure:
\begin{align}
    p_{\delta}(\bm y | \bm x) \defeq
    \begin{cases}
        \frac{1}{N_\delta} &
            (\bm x, \bm y) \in R_{|\bm X\cup\bm Y} \\
        0 & \text{otherwise}
    \end{cases}
\end{align}
Write $B \defeq \prod_{\delta \in \Delta} N_\delta^{w_\delta}$.
From Lemma~\ref{lmm:probabilistic:inequality}, there exists a sub-probability measure
$p_{\bm Z}$ for each $\bm Z \in \Sigma_{\out}$ such that
\begin{align}
    \prod_{\bm Z \in \Sigma_{\out}} p_{\bm Z}(\bm t)^{\lambda_{\bm Z}}
    &\geq \prod_{\delta \in \Delta} p_{\delta}(\bm t)^{w_\delta}
    = \prod_{\delta \in \Delta} \left(\frac{1}{N_\delta}\right)^{w_\delta}
    = \frac{1}{B},
    && \forall \bm t \in \bigjoin \Sigma_{\inn}
    \label{eqn:prob:bound}
\end{align}
To show~\eqref{eqn:ddr:output:size}, it is sufficient to show that there exists a model
$\Sigma_{\out}$ for the DDR~\eqref{eq:ddr} such that $|Q(\bm Z)| \leq B$ for all $Q(\bm Z)
\in \Sigma_{\out}$. This model is constructed by defining
\begin{align}
    Q(\bm Z) \defeq \left\{ \bm z \in \dom^{\bm Z} \ | \ p_{\bm Z}(\bm z) \geq \frac{1}{B} \right\}
    && \forall\bm Z \in \Sigma_{\out}
\end{align}
Obviously, $|Q(\bm Z)| \leq B$ because $p_{\bm Z}$ is a sub-probability measure. To show
that this is indeed a model for the DDR~\eqref{eq:ddr}, note that since $\norm{\bm
\lambda}_1 = 1$, for every $\bm t \in \bigjoin \Sigma_{\inn}$ inequality~\eqref{eqn:prob:bound}
implies that there exists some $\bm Z \in \Sigma_{\out}$ such that $p_{\bm Z}(\bm t) \geq
\frac{1}{B}$, which means $\bm t_{\bm Z} \in Q(\bm Z)$.
\end{proof}

%%%%%%%%%%%%%%%%%%%%%%%%%%%%%%%%%%%%%%%%%%%%%%%%%%%%%%%%%%%%%%%%%%%%%%%%%%%%%%%%%%%%%%%%%%%%
\subsection{Illustration of the Inequality on the Hexagon Query}
\label{subsec:motivations:hexagon}

To illustrate why inequality~\eqref{eqn:probabilistic:inequality} holds and how it can be used to prove the
output size bound of DDRs, this section presents an alternative proof of the size bound $|Q|
\leq N^2$ for the hexagon query~\eqref{eq:q:hex}. Recall that the typical proof of this fact
is to make use of the AGM bound which is the direct consequence of Shearer's
lemma~\cite{MR859293}. For this query, Shearer's lemma is {\em exactly} the Shannon-flow
inequality~\eqref{eq:q:hex:inequality}.

We start from the inequality~\eqref{eq:q:hex:inequality} and write down the {\em proof
sequence} (Theorem~\ref{thm:proof-sequence}) that transforms the RHS into the LHS. This
proof sequence is depicted in Fig.~\ref{fig:eq:q:hex:proof} (left). Now, to each term on the RHS of~\eqref{eq:q:hex:inequality}, we
associate a probability measure. In particular, to $h(ABC)$, we associate a measure
$p_{ABC}$ defined as $p_{ABC}(abc) \defeq 1/N$ if the tuple $(a, b, c)$ is in $R$, and $0$
otherwise.  Similarly, we define measures $p_{CDE}$, $p_{EFA}$, and $p_{BDF}$ for $S$, $T$,
and $K$ respectively. By construction, we have:
\begin{align}
    p_{ABC}(abc) \cdot p_{CDE}(cde) \cdot p_{EFA}(efa) \cdot p_{BDF}(bdf) = 1/ N^4,
    \quad\quad\forall (a,b,c,d,e,f) \in Q.\label{eqn:product}
\end{align}
Then, we trace the proof steps in Fig.~\ref{fig:eq:q:hex:proof}: Each proof step replaces
(one or two) entropy terms by new entropy terms. We associate to each new entropy term a new
probability measure, thus mirroring the proof step in the measure world.
\begin{itemize}[leftmargin=*]
 \item From the first step $h(CDE)\to h(C) + h(DE|C)$ we
take the measure $p_{CDE}$ (associated to $h(CDE)$) and define two new measures $p_C$ and
$p_{DE|C}$ as its {\em marginal} and {\em conditional} (Eq.~\eqref{eqn:marginal}
and~\eqref{eqn:conditional}), and associate them to $h(C)$ and $h(DE|C)$ respectively. Note
that by definition, $p_{CDE} = p_C \cdot p_{DE|C}$, hence if we replace $p_{CDE}(cde)$ with
$p_C(c)\cdot p_{DE|C}(de|c)$ in Eq.~\eqref{eqn:product}, the equality still holds.
 \item Similarly, the second step $h(ABC) + h(DE|C) \to h(ABCDE)$ tells us to take $p_{ABC}$
and $p_{DE|C}$, define measure $p_{ABCDE}$ as their product, and associate it to $h(ABCDE)$.
To track this move, replace the product $p_{ABC}(abc) \cdot p_{DE|C}(de|c)$  in the new
Eq.~\eqref{eqn:product} with $p_{ABCDE}(abcde)$.
\item Following this process until the end of the proof sequence, we end up
with two different measures $p'(ABCDEF), p''(ABCDEF)$ corresponding to $2h(ABCDEF)$ on the
LHS of Eq.~\eqref{eq:q:hex:inequality}. At the same time, Eq.~\eqref{eqn:product} becomes:
\begin{align}
    p'_{ABCDEF}(abcdef) \cdot p''_{ABCDEF}(abcdef) = 1/N^4, \quad\quad\forall (a,b,c,d,e,f) \in Q.\label{eqn:product:final}
\end{align}
\item Finally, define a new measure $p_{ABCDEF}$ as the {\em geometric mean} of $p'_{ABCDEF}$ and
$p''_{ABCDEF}$ (Proposition~\ref{prop:geometric:mean}). Hence, $p_{ABCDEF}$ takes a value of
$1/N^2$ for all tuples $(a,b,c,d,e,f) \in Q$. This implies that the size of the support of
$p_{ABCDEF}$, which is exactly $|Q|$, is at most $N^2$, as desired.
\end{itemize}

\begin{figure}[ht]
{\small
\begin{tabular}{c | c}
    Proof Steps & Sub-probability Measures \\
    \hline
    $h(CDE)\to h(C) + h(DE|C)$&
        \begin{minipage}{0.65\textwidth}
            \vspace{.2cm}
            $\begin{aligned}
                p_{C}(c) &\defeq \sum_{de} p_{CDE}(cde) = \frac{\deg_S(DE | C = c)}{N}\\
                p_{DE|C}(de|c) &\defeq \frac{p_{CDE}(cde)}{p_C(c)} = \frac{1}{\deg_S(DE | C = c)}
            \end{aligned}$
            \vspace{.2cm}
        \end{minipage}
        \\\hline
    $h(ABC) + h(DE|C) \to h(ABCDE)$ &
        \begin{minipage}{0.65\textwidth}
            \vspace{.2cm}
            $\begin{aligned}
                p_{ABCDE}(abcde) &\defeq p_{ABC}(abc) \cdot p_{DE|C}(de|c) = \frac{1}{N\cdot \deg_S(DE | C = c)}
            \end{aligned}$
            \vspace{.2cm}
        \end{minipage}
        \\\hline
    $h(BDF) \to h(F) + h(BD|F)$ &
        \begin{minipage}{0.65\textwidth}
            \vspace{.2cm}
            $\begin{aligned}
                p_{F}(f) &\defeq \sum_{bd} p_{BDF}(bdf) = \frac{\deg_K(BD | F = f)}{N}\\
                p_{BD|F}(bd|f) &\defeq \frac{p_{BDF}(bdf)}{p_F(f)} = \frac{1}{\deg_K(BD | F = f)}
            \end{aligned}$
            \vspace{.2cm}
        \end{minipage}
        \\\hline
    $h(EFA) + h(BD|F) \to h(ABDEF)$ &
        \begin{minipage}{0.65\textwidth}
            \vspace{.2cm}
            $\begin{aligned}
                p_{ABDEF}(abdef) &\defeq p_{EFA}(efa) \cdot p_{BD|F}(bd|f) = \frac{1}{N\cdot \deg_K(BD | F = f)}
            \end{aligned}$
            \vspace{.2cm}
        \end{minipage}
        \\\hline
    $h(ABCDE) + h(F) \to h(ABCDEF)$ &
        \begin{minipage}{0.65\textwidth}
            \vspace{.2cm}
            $\begin{aligned}
                p'_{ABCDEF}(abcdef) &\defeq p_{ABCDE}(abcde) \cdot p_F(f) = \frac{\deg_K(BD | F = f)}{N^2 \cdot \deg_S(DE | C = c)}
            \end{aligned}$
            \vspace{.2cm}
        \end{minipage}
        \\\hline
    $h(ABDEF) + h(C) \to h(ABCDEF)$ &
        \begin{minipage}{0.65\textwidth}
            \vspace{.2cm}
            $\begin{aligned}
                p''_{ABCDEF}(abcdef) &\defeq p_{ABDEF}(abdef) \cdot p_C(c) = \frac{\deg_S(DE | C = c)}{N^2 \cdot \deg_K(BD | F = f)}
            \end{aligned}$
            \vspace{.2cm}
        \end{minipage}
\end{tabular}
}
\caption{A new proof of the bound $|Q| \leq N^2$ for the hexagon query $Q$ in
  Eq.~\eqref{eq:q:hex}, using sub-probability measures.
  The depicted proof
sequence is for the Shannon-flow inequality~\eqref{eq:q:hex:inequality} and it is a slightly more compacted version of
the proof sequence obtained from Theorem~\ref{thm:proof-sequence}, where we merge, for example, two
steps $h(DE|C)\to h(DE|ABC)$ and $h(ABC) + h(DE|ABC)\to h(ABCDE)$ into one $h(ABC) +
h(DE|C)\to h(ABCDE)$.
  For simplicity, we assume above that all divisions by zero produce zero (similar to Eq.~\eqref{eqn:conditional}).}
%\Description{A new proof of the bound $|Q| \leq N^2$ for the hexagon query $Q$ in
%  Eq.~\eqref{eq:q:hex}, using (sub-)probability measures.
%  For simplicity, we assume above that all divisions by zero produce zero (similar to Eq.~\eqref{eqn:conditional}).}
  \label{fig:eq:q:hex:proof}
\end{figure}

\section{The $\newalgorithm$ algorithm}
\label{sec:algorithm}

%%%%%%%%%%%%%%%%%%%%%%%%%%%%%%%%%%%%%%%%%%%%%%%%%%%%%%%%%%%%%%%%%%%%%%%%%%%%%%%%%%%%%%%%%%%%
\subsection{Illustration of $\newalgorithm$ on the Hexagon Query}
\label{subsec:motivations:new-algorithm}

This section explains intuitively how the size bound proof of the Hexagon query shown in
Section~\ref{subsec:motivations:hexagon} can be turned into an actual algorithm. The
formal presentation of the $\newalgorithm$ algorithm is given in the next section.

The proof shown in Section~\ref{subsec:motivations:hexagon} can {\em almost} be turned
directly into an $O(N^2)$-algorithm for computing the hexagon query. All we need to do is
to actually {\em materialize} all the intermediate sub-probability measures defined in the
proof. Of course, we only need to materialize the non-zero values of each measure.

However, there is an issue we have to deal with: The measures $p'_{ABCDEF}$ and
$p''_{ABCDEF}$ defined before could potentially have supports larger than $N^2$, exceeding
the runtime budget of $O(N^2)$. The minor twist in our algorithm is to observe that we do
not need to compute the full measures $p'_{ABCDEF}$ and $p''_{ABCDEF}$. Instead of computing
them fully, we will only compute {\em their truncations}, $\ov p'_{ABCDEF}$ and $\ov
p''_{ABCDEF}$ respectively, where $\ov p'_{ABCDEF}$ results from $p'_{ABCDEF}$ by only
keeping the values that are $\geq 1/N^2$, and similarly for $\ov p''_{ABCDEF}$. By
construction, the supports of $\ov p'_{ABCDEF}$ and $\ov p''_{ABCDEF}$ are no larger than
$N^2$. Moreover, by Eq.~\eqref{eqn:product:final}, for every tuple $(a,b,c,d,e,f) \in Q$,
either $p'_{ABCDEF}(abcdef) \geq 1/N^2$ or $p''_{ABCDEF}(abcdef) \geq 1/N^2$ (or both),
hence the tuple $(a,b,c,d,e,f)$ will occur in the support of at least one of $\ov
p'_{ABCDEF}$ or $\ov p''_{ABCDEF}$. After computing $\ov p'_{ABCDEF}$ and $\ov
p''_{ABCDEF}$, we return the union of their supports as the output of the query.

So where does the partitioning hyperplane $h(C)=h(F)$ mentioned in Section~\ref{subsec:motivation:hexagon} come into play in this algorithm?
Consider the definitions of $p'_{ABCDEF}(abcdef)$ and $p''_{ABCDEF}(abcdef)$ in
Fig.~\ref{fig:eq:q:hex:proof}. Note that the condition $p'_{ABCDEF}(abcdef) \geq 1/N^2$
corresponds to the following inequality:
\begin{align}
    \deg_K(BD | F = f) &\geq \deg_S(DE | C = c),\label{eq:deg:inequality}
\end{align}
Similarly, the condition $p''_{ABCDEF}(abcdef) \geq 1/N^2$ corresponds to the opposite
inequality. Hence, in the degree space, the above algorithm is partitioning the output into
two parts, using the boundary $\deg_K(BD | F = f) = \deg_S(DE | C = c)$. This condition {\em
is} the interpretation of the hyperplane $h(C)=h(F)$ in the degree space, thus completing
the connection.

In summary, proof sequences form the basis of a {\em logical
query plan}, whereas the sub-probability measures form the basis for the actual {\em execution plan}
to answer the query.

%%%%%%%%%%%%%%%%%%%%%%%%%%%%%%%%%%%%%%%%%%%%%%%%%%%%%%%%%%%%%%%%%%%%%%%%%%%%%%%%%%%%%%%%%%%%
\subsection{The Algorithm}

\begin{algorithm}[ht!]
    \caption{$\newalgorithm(\calZ, \calD, \calP)$}
    \label{alg:tight:panda}
    \begin{algorithmic}[1]
        \Statex {\bf Inputs:} $(\calZ, \calD)$ form an integral Shannon-flow inequality
        Eq.~\eqref{eqn:shannon:flow:inequality:bag}
        \Statex {\color{white}\bf Inputs:} $\calP$ is a set of sub-probability measures, one for each $\delta \in \calD$
        \vspace{-.25cm}\Statex\hrulefill
        \If {$\calD$ contains any $\bm Z \in \calZ$}
            \State \Return $Q(\bm Z) := \{ \bm z \ | \ p_{\bm Z}(\bm z) > 0 \}$
            \algorithmiccomment{Tuples in the support of $p_{\bm Z}$}
        \EndIf
        \State Let $s$ be the first step in a proof sequence of the
        Shannon-flow inequality defined by $(\calZ, \calD)$.
        \State $(\calD^\ell, \calP^\ell) \gets \applystep(s, \calD, \calP)$
        \State $\Sigma_{\out}^{\ell} \gets \newalgorithm(\calZ, \calD^\ell, \calP^\ell)$
        \algorithmiccomment{Light branch; continue on with the proof sequence}
        \If {$s$ is not a composition step {\bf or } $|\calZ| = 1$}
            \State \Return $\Sigma_{\out}^{\ell}$
            \algorithmiccomment{No heavy branch}
        \ElsIf {$s$ is a composition step $h(\bm X) + h(\bm Y | \bm X) \to h(\bm X \bm Y)$
            {\bf and} $|\calZ| > 1$}
        \label{alg:tight:panda:reset}
            \State $(\calZ^h, \calD^h) \gets \resetineq(\calZ, \calD^\ell, \bm X\bm Y)$
            \algorithmiccomment{Apply Reset Lemma~\ref{lmm:reset-lemma} with $\bm W = \bm X\bm Y$}
            \State $\calP^h \gets \{ p_{\delta} \ | \ p_{\delta}\in \calP^\ell \text{ and } \delta \in \calD^h \}$
            \algorithmiccomment{Note that $\calD^h \subseteq \calD^\ell - \{\bm X\bm Y\}$}
            \State $\Sigma_{\out}^{h} \gets \newalgorithm(\calZ^h, \calD^h, \calP^h)$
            \algorithmiccomment{Heavy branch}
            \State \Return $\Sigma_{\out}^{\ell} \cup \Sigma_{\out}^{h}$
        \EndIf
    \end{algorithmic}
\end{algorithm}

We are now ready to formally present our $\newalgorithm$ algorithm, given in
Algorithm~\ref{alg:tight:panda}. We shall prove its
correctness and analyze its runtime in the next section. The algorithm takes as input three
parameters: $\calZ$ is a multiset of unconditional monotonicity terms, $\calD$ is a multiset
of monotonicity terms, and $\calP$ is a set of sub-probability measures, one for each copy
of monotonicity term in $\calD$. The pair $(\calZ, \calD)$ are parameters of an integral
Shannon-flow inequality of the form~\eqref{eqn:shannon:flow:inequality:bag} witnessed by
$(\calM, \calS)$.
We think of $\calP$ as an {\em annotated} database instance where each tuple is annotated with a probability. We explain later (Eq.~\eqref{eq:initial:measure}) how to initialize these annotations from the input database instance to a DDR.
There is a global ``budget'' parameter $B$ which is a positive integer
that we will define later (Eq.~\eqref{eqn:budget:B}) in the proof of correctness.

The algorithm is recursive. In the base case, if $\calD$ contains any $\bm Z \in \calZ$,
then we can directly construct and return a relation $Q(\bm Z)$ by collecting tuples in the
support of the corresponding sub-probability measure $p_{\bm Z}$.

If the base case does not hold, then from Theorem~\ref{thm:proof-sequence} there exists a
proof sequence for Eq.~\eqref{eqn:shannon:flow:inequality:bag} whose first step is $s$. The
sub-routine $\applystep$ modifies $\calD$ and $\calP$ slightly to obtain a new multiset of
monotonicity terms $\calD^\ell$ and a new set of sub-probability measures $\calP^\ell$. We will
explain shortly what this modification is.
It then recursively calls the algorithm with the new
parameters $(\calZ, \calD^\ell, \calP^\ell)$, applying the next step in the proof sequence.
This is the ``light'' branch of the recursion.
The result is returned if the step is {\em not} a composition step, in which case there is no
``heavy'' branch.

If $s$ is a composition step and $|\calZ| > 1$, then we need to do more work, creating a heavy branch.
The $\resetineq$ call applies Lemma~\ref{lmm:reset-lemma} with $\bm W = \bm
X \bm Y$ to obtain a new Shannon-flow inequality defined by $(\calZ^h, \calD^h)$, where
$\calZ^h \subseteq \calZ$, $|\calZ^h| \geq |\calZ|-1$, and $\calD^h \subseteq \calD^\ell - \{\bm
X\bm Y\}$. We then define $\calP^h$ by restricting $\calP^\ell$ to only those $p_{\delta}$ where
$\delta \in \calD^h$. With these new parameters, we recursively call the algorithm again to
obtain $\Sigma_{\out}^{h}$. Finally, the union of the heavy and light branches
is returned.

To complete the description of the algorithm, we now explain the sub-routine $\applystep$,
which modifies $\calD$ and $\calP$ according to the proof step $s$.
\begin{itemize}[leftmargin=*]
    \item If $s$ is a \emph{decomposition} step $h(\bm X \bm Y) \to h(\bm X) + h(\bm Y |
    \bm X)$, then let $p_{\bm X}$ be the marginal measure~\eqref{eqn:marginal} of $p_{\bm X
    \bm Y}$ on $\bm X$ and $p_{\bm Y | \bm X}$ be the conditional
    measure~\eqref{eqn:conditional} of $p_{\bm X \bm Y}$ on $\bm Y$ given $\bm X$.
    \begin{align*}
        \calD^\ell &\gets \calD \setminus \{\bm X \bm Y\} \cup \{\bm X, (\bm Y | \bm X)\} &&
        \calP^\ell \gets \calP \setminus \{ p_{\bm X \bm Y} \} \cup \{ p_{\bm X}, p_{\bm Y | \bm X} \}
    \end{align*}
    \item If $s$ is a {\em submodularity} step $h(\bm Y | \bm X) \to h(\bm Y | \bm X \bm Z)$,
    then define $p_{\bm Y | \bm X \bm Z} \defeq p_{\bm Y | \bm X}$ and
    \begin{align*}
        \calD^\ell &\gets \calD \setminus \{ (\bm Y | \bm X) \} \cup \{ (\bm Y | \bm X \bm Z) \} &&
        \calP^\ell \gets \calP \setminus \{ p_{\bm Y | \bm X} \} \cup \{ p_{\bm Y | \bm X \bm Z} \}
    \end{align*}
    \item If $s$ is a {\em monotonicity} step $h(\bm X \bm Y) \to h(\bm X)$, then define
    $p_{\bm X}$ as the marginal measure~\eqref{eqn:marginal} of $p_{\bm X \bm Y}$ on $\bm X$,
    and
    \begin{align*}
        \calD^\ell &\gets \calD \setminus \{ \bm X \bm Y \} \cup \{ \bm X \} &&
        \calP^\ell \gets \calP \setminus \{ p_{\bm X \bm Y} \} \cup \{ p_{\bm X} \}
    \end{align*}
    \item If $s$ is a {\em composition} step $h(\bm X) + h(\bm Y | \bm X) \to h(\bm X \bm Y)$,
    then
    \begin{align*}
        \calD^\ell &\gets \calD \setminus \{ \bm X, (\bm Y | \bm X) \} \cup \{ \bm X \bm Y \} &&
        \calP^\ell \gets \calP \setminus \{ p_{\bm X}, p_{\bm Y | \bm X} \} \cup \{ p_{\bm X \bm Y} \}
    \end{align*}
    where $p_{\bm X\bm Y}$ is the {\em truncated} product measure:
    \begin{align}
        p_{\bm X \bm Y}(\bm x, \bm y) \defeq
        \begin{cases}
            p_{\bm X}(\bm x) \cdot p_{\bm Y | \bm X}(\bm y | \bm x),
            & \text{if }
            p_{\bm X}(\bm x) \cdot p_{\bm Y | \bm X}(\bm y | \bm x) \geq 1/B \\
            0, & \text{otherwise}
        \end{cases}
        \label{eqn:truncated:product}
    \end{align}
\end{itemize}

\section{Proof of Correctness and Runtime Analysis}
\label{sec:correctness:runtime}

Theorem~\ref{thm:tight:panda} is the main result of the paper. Before stating and proving
it, we need to introduce an auxiliary optimization problem and its properties, which are of
independent interest and will be used again in Section~\ref{sec:subw:ddr}.

\subsection{An Auxiliary Optimization Problem}

In order to minimize the upper bound of the output size, as given in
Theorem~\ref{th:upper:bound}, which is also the same runtime expression in
Theorem~\ref{thm:tight:panda}, we want to find coefficients $\bm w, \bm \lambda$ that
minimize the right-hand side of~\eqref{eqn:ddr:output:size} subject to the Shannon-flow
inequality~\eqref{eqn:ddr:shearer}. In particular, we want to solve the optimization problem
stated in Proposition~\ref{prop:opt:ddr} below. The result was implicit
in~\cite{theoretics:13722,DBLP:conf/pods/Khamis0S17}; however, we include the proof here for
completeness, in part because the language of the statement in the proposition is about the
primal problem, while the one in~\cite{theoretics:13722,DBLP:conf/pods/Khamis0S17} is about
the dual problem, and hence they look quite different.

\begin{defn}
    Let $(\Delta, \bm N)$ denote a set of degree constraints over variables $\bm V$.
    We write $h \models (\Delta, \bm N)$ to denote the fact that $h$ is a polymatroid
    on $\bm V$ that satisfies all degree constraints in $(\Delta, \bm N)$, i.e.,
    for every $\delta = (\bm Y|\bm X) \in \Delta$, we have
    $h(\delta) \defeq h(\bm X \bm Y) - h(\bm X) \leq \log N_\delta$.
\end{defn}
Note that, including the constraints that $h$ is a polymatroid, the set of constraints
$h \models (\Delta, \bm N)$ can be expressed as a set of linear inequalities.

The following lemma shows that the optimization problem in Proposition~\ref{prop:opt:ddr} below
can be converted into a linear program. We isolate this lemma because we will use it
again to convert the primal form of the fractional hypertree width and
the submodular width into their dual forms in Section~\ref{sec:subw:ddr}.
Note that the lemma was basically proved in Lemma 6.4 in~\cite{theoretics:13722}, but it
was stated using the dual form of the optimization problem.

\begin{lmm}[~\cite{theoretics:13722}]
\label{lmm:ddr:lp}
Let $(\Delta,\bm N)$ be a set of degree constraints over variables $\bm V$.
Then, the following optimization problem
\begin{align*}
    \min_{\bm \lambda, \bm w} \quad &
    \sum_{\delta \in \Delta} w_\delta \log N_\delta \\
    \text{s.t.} \quad &
    \sum_{\bm Z \in \Sigma_{\out}}\lambda_{\bm Z}\cdot h(\bm Z)
    \leq \sum_{\delta \in \Delta} w_\delta \cdot h(\delta)
    \text{ is a Shannon-flow inequality}\\
    & \norm{\bm \lambda}_1 = 1 \text{ and } \bm \lambda \geq \bm 0, \bm w \geq \bm 0
\end{align*}
has exactly the same objective value as the following optimization problem, provided that
one of them is bounded:
\begin{align}
    \max_{\bm h \models (\Delta, \bm N)} \min_{\bm Z \in \Sigma_{\out}} h(\bm Z)
    \label{eqn:dual:ddr:lp}
\end{align}
Moreover, Problem~\eqref{eqn:dual:ddr:lp} can be expressed as a linear program.
\end{lmm}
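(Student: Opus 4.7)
The plan is to apply linear programming duality after turning the primal into a bona fide LP. First, I would use Proposition~\ref{prop:shannon:identity} to eliminate the universal quantifier ``for all polymatroids'' hidden in the primal constraint: the inequality $\sum_{\bm Z \in \Sigma_{\out}} \lambda_{\bm Z} h(\bm Z) \leq \sum_{\delta \in \Delta} w_\delta h(\delta)$ is a Shannon-flow inequality iff there exist nonnegative coefficients $c_\mu \geq 0$ for $\mu \in \mon$ and $c_\sigma \geq 0$ for $\sigma \in \sub$ such that
\[
   \sum_{\delta \in \Delta} w_\delta h(\delta) - \sum_{\bm Z \in \Sigma_{\out}} \lambda_{\bm Z} h(\bm Z)
   \;=\; \sum_{\mu \in \mon} c_\mu h(\mu) + \sum_{\sigma \in \sub} c_\sigma h(\sigma)
\]
holds as a formal identity in the symbolic variables $h(\bm S)$, $\bm S \subseteq \bm V$. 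Equating coefficients of each $h(\bm S)$ turns this into a finite system of linear equalities in the variables $(\bm\lambda, \bm w, \bm c)$. Together with $\bm\lambda, \bm w, \bm c \geq 0$, $\norm{\bm\lambda}_1 = 1$, and the linear objective $\sum_\delta w_\delta \log N_\delta$, the primal is now an explicit LP.

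Next, I would take the LP dual. Introduce a dual variable $h_{\bm S}$ for each equality constraint indexed by $\bm S \subseteq \bm V$ and a dual variable $\alpha$ for the normalization $\norm{\bm\lambda}_1 = 1$; view $(h_{\bm S})_{\bm S}$ as a function $h : 2^{\bm V} \to \R$. Dual feasibility unpacks as follows: the constraints arising from $c_\mu \geq 0$ and $c_\sigma \geq 0$ (variables appearing only on the RHS of the identity) give $h(\mu) \geq 0$ for all $\mu \in \mon$ and $h(\sigma) \geq 0$ for all $\sigma \in \sub$, which together with $h(\emptyset) = 0$ are exactly the elemental Shannon inequalities, i.e.\ $h$ is a polymatroid. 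The constraints arising from $w_\delta \geq 0$ give $h(\delta) \leq \log N_\delta$, i.e.\ $h \models (\Delta, \bm N)$. The constraints arising from $\lambda_{\bm Z} \geq 0$ give $h(\bm Z) \geq \alpha$ for each $\bm Z \in \Sigma_{\out}$. The dual objective is $\max \alpha$.

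Collecting everything, the dual is exactly
\[
   \max \ \alpha \quad \text{s.t.} \quad h \models (\Delta, \bm N), \ h \text{ is a polymatroid}, \ h(\bm Z) \geq \alpha \ \forall \bm Z \in \Sigma_{\out},
\]
which coincides with $\max_{h \models (\Delta, \bm N)} \min_{\bm Z \in \Sigma_{\out}} h(\bm Z)$ and is manifestly a linear program in $(h, \alpha)$. Strong LP duality, which holds under the stated hypothesis that one of the two programs is bounded (both are finite-dimensional LPs with rational data, and feasibility of both sides is easy to verify), yields equality of the two optimal values.

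The main obstacle will be the bookkeeping when taking the dual: making sure the signs in the identity from Proposition~\ref{prop:shannon:identity} are set up so that nonnegativity of $c_\mu$ and $c_\sigma$ translates to $h(\mu), h(\sigma) \geq 0$ rather than their reverses, and so that nonnegativity of $w_\delta$ translates to the upper bound $h(\delta) \leq \log N_\delta$ with the correct sense. A minor secondary point is to handle the $h(\emptyset) = 0$ condition separately, since $h(\emptyset)$ never appears on the LHS of the identity and so is not pinned down by the equality constraints alone; it can be added by convention or by introducing an extra trivial monotonicity term $(\emptyset \mid \emptyset)$.
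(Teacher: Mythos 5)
Your proof is correct, and it arrives at the same conclusion by a genuinely different route from the paper's. The paper starts from the max-min problem~\eqref{eqn:dual:ddr:lp}, rewrites it as a linear program over $(t,\bm h)$ with $\bm h$ implicitly constrained to lie in the polymatroid cone $\Gamma_n$, and then takes the \emph{Lagrangian} dual with multipliers $\lambda_{\bm Z}$ and $w_\delta$ only for the constraints $t\leq h(\bm Z)$ and $h(\delta)\leq\log N_\delta$; the requirement that the Lagrangian dual function be finite forces both $\norm{\bm\lambda}_1=1$ and the Shannon-flow validity of the resulting inequality, which recovers the primal. You instead start from the primal, invoke Proposition~\ref{prop:shannon:identity} to replace the quantified constraint ``is a Shannon-flow inequality'' by an explicit finite system of equalities in $(\bm\lambda,\bm w,\bm c)$ (the coefficients $c_\mu,c_\sigma$ being the witnesses $\calM,\calS$), and then take the ordinary LP dual. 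Your sign-tracing is right: the columns for $c_\mu,c_\sigma$ give exactly the elemental Shannon inequalities on the dual variable $h$, the columns for $w_\delta$ give $h(\delta)\leq\log N_\delta$, the columns for $\lambda_{\bm Z}$ give $h(\bm Z)\geq\alpha$, and the objective is $\max\alpha$. What the paper's route buys is economy — it never materializes the $c$-variables or the identity of Proposition~\ref{prop:shannon:identity}, and the Shannon-flow constraint drops out automatically from the boundedness of the Lagrangian. What your route buys is transparency — it makes explicit where the polymatroid inequalities on $h$ come from and gives a concrete, checkable LP whose dual is literally~\eqref{eqn:dual:ddr:lp}, which is arguably a cleaner justification of the lemma's final assertion that~\eqref{eqn:dual:ddr:lp} is an LP.

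Two small points worth tidying. First, your phrase ``feasibility of both sides is easy to verify'' is a bit too optimistic: the max side is always feasible (take $h\equiv 0$, $\alpha=0$), but the min side can be infeasible in degenerate cases (e.g.\ $\Delta$ too small to dominate some $h(\bm Z)$); the lemma's hypothesis that one of the problems is \emph{bounded} (which, for a feasible $\max$, means $<+\infty$) is exactly what lets strong LP duality transfer a finite optimum to the other side. Second, on the $h(\emptyset)$ issue you flag: Proposition~\ref{prop:shannon:identity} already indexes the symbolic variables by $\emptyset\neq\bm X\subseteq\bm V$, so the equality constraints in your LP range only over nonempty $\bm S$ and $h(\emptyset)=0$ is baked in by convention rather than needing an extra term — you can drop the caveat about the extra monotonicity term.
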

\begin{proof}
    Let $\Gamma_n$ denote the set of all polymatroids on $n$ variables, where $n := |\bm V|$.
As in the proof of Lemma 6.4 in~\cite{theoretics:13722}, we reformulate~\eqref{eqn:dual:ddr:lp}
as the following linear program, which proves the last statement of the lemma:
\begin{align*}
    \opt \defeq \max_{t, \bm h \in \Gamma_n} \left\{
        t \mid t \leq h(\bm Z), \forall \bm Z \in \Sigma_{\out}, \text{ and }
        h(\delta) \leq \log N_\delta, \forall \delta \in \Delta
    \right\}
\end{align*}
Let $\lambda_{\bm Z}$ and $w_\delta$ be the Lagrange multipliers associated with the
constraints $t \leq h(\bm Z)$ and $h(\delta) \leq \log N_\delta$ respectively.
The Lagrangian dual function is
\begin{align*}
\calL(\bm \lambda, \bm w) &\defeq
\max_{t, \bm h \in \Gamma_n} \left\{
    t + \sum_{\bm Z \in \Sigma_{\out}} \lambda_{\bm Z} (h(\bm Z) - t)
    + \sum_{\delta \in \Delta} w_\delta (\log N_\delta - h(\delta))
\right\} \\
&= \sum_{\delta \in \Delta} w_\delta \log N_\delta +
\max_t (1-\norm{\bm\lambda}_1) t +
\max_{\bm h \in \Gamma_n} \left\{
    \sum_{\bm Z \in \Sigma_{\out}} \lambda_{\bm Z} h(\bm Z)
    - \sum_{\delta \in \Delta} w_\delta h(\delta)
\right\}
\end{align*}
Since the optimization problem is linear, strong duality holds, and thus the Lagrangian
dual problem has the same objective as the primal problem, namely
$\opt = \min_{\bm \lambda, \bm w \geq \bm 0} \calL(\bm \lambda, \bm w)$, which we had
assumed to be bounded.

If $(\bm \lambda^*, \bm w^*)$ is an optimal solution to the dual problem, then
$\norm{\bm \lambda^*}_1 = 1$ must hold; otherwise, the objective is unbounded.
Similarly, for every $\bm h \in \Gamma_n$, we must have
\[
\sum_{\bm Z \in \Sigma_{\out}} \lambda^*_{\bm Z} h(\bm Z)
    - \sum_{\delta \in \Delta} w^*_\delta h(\delta) \leq 0
\]
in order for the objective to be bounded. This means we can reformulate the Lagrangian dual
problem to be exactly the optimization problem in the statement of the lemma.
\end{proof}

Finally, we state and prove the key supporting proposition.
\begin{prop}[~\cite{theoretics:13722,DBLP:conf/pods/Khamis0S17}]
Consider the following optimization problem
\begin{align*}
    \min_{\bm \lambda, \bm w} \quad &
    \prod_{\delta \in \Delta} N_\delta^{w_\delta} \\
    \text{s.t.} \quad &
    \sum_{\bm Z \in \Sigma_{\out}}\lambda_{\bm Z}\cdot h(\bm Z)
    \leq \sum_{\delta \in \Delta} w_\delta \cdot h(\delta)
    \text{ is a Shannon-flow inequality}\\
    & \norm{\bm \lambda}_1 = 1 \text{ and } \bm \lambda \geq \bm 0, \bm w \geq \bm 0
\end{align*}
Then, the following hold:
\begin{itemize}
    \item[(i)] The problem is solvable in polynomial time in $|2^{\bm V}|$ and $\sum_{\delta \in
    \Delta} \log N_\delta$.
    \item[(ii)] Let $(\bm \lambda, \bm w)$ be any optimal solution, and
let $B$ denote the {\em optimal} objective value. Then, for any unconditional monotonicity
term $\delta \in \Delta$, we have $N_\delta > B$ implies $w_\delta = 0$.
\end{itemize}
\label{prop:opt:ddr}
\end{prop}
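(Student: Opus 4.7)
The plan is to prove (i) and (ii) separately, using Lemma~\ref{lmm:ddr:lp} for (i) and a Reset-Lemma-based contradiction argument for (ii).

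For part (i), I would observe that taking the logarithm of the objective converts $\prod_\delta N_\delta^{w_\delta}$ into the linear functional $\sum_\delta w_\delta \log N_\delta$, while the constraint that the displayed inequality is a Shannon-flow inequality is equivalent to a finite system of linear inequalities (non-negativity of multipliers on the elemental Shannon inequalities). Together with $\bm \lambda, \bm w \geq \bm 0$ and $\norm{\bm \lambda}_1 = 1$, the whole problem is a linear program. By Lemma~\ref{lmm:ddr:lp}, it is equivalent to the dual LP~\eqref{eqn:dual:ddr:lp}, which has $O(2^{|\bm V|})$ variables and coefficients of bit length $O(\sum_\delta \log N_\delta)$, so any standard polynomial-time LP algorithm (ellipsoid or interior point) settles (i).

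For part (ii), I would argue by contradiction. Suppose $(\bm \lambda^*, \bm w^*)$ is optimal with value $B$ and some unconditional $\delta_0 = \bm Y_0 \in \Delta$ satisfies $w^*_{\delta_0} > 0$ and $N_{\delta_0} > B$. Convert to integral form: let $d$ be a common denominator, and form the multisets $\calZ, \calD$ with multiplicities $m^*_{\bm Z} = d\lambda^*_{\bm Z}$ and $m^*_\delta = d w^*_\delta$, so that $|\calZ| = d$ and $m^*_{\delta_0} \geq 1$. Apply Lemma~\ref{lmm:reset-lemma} with $\bm W = \bm Y_0$ to obtain a new integral Shannon-flow inequality $(\calZ', \calD')$ with $\calZ' \subseteq \calZ$, $|\calZ'| \in \{d-1, d\}$, and $\calD' \subseteq \calD - \{\bm Y_0\}$; translating back yields a feasible $(\bm \lambda', \bm w')$ whose objective is $\log B' = \frac{1}{|\calZ'|} \sum_\delta m'_\delta \log N_\delta$.

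From the multiset containments one has $m'_{\delta_0} \leq m^*_{\delta_0} - 1$ and $m'_\delta \leq m^*_\delta$ for $\delta \neq \delta_0$, hence $\sum_\delta m'_\delta \log N_\delta \leq d \log B - \log N_{\delta_0}$. If $|\calZ'| = d$, then $\log B' \leq \log B - \frac{\log N_{\delta_0}}{d} < \log B$ because $N_{\delta_0} > B \geq 1$. If $|\calZ'| = d-1$ (with $d \geq 2$), then $\log B' \leq \frac{d \log B - \log N_{\delta_0}}{d - 1}$, which is strictly less than $\log B$ iff $N_{\delta_0} > B$, the standing hypothesis. In both cases $B' < B$ contradicts the optimality of $(\bm \lambda^*, \bm w^*)$, so no such $\delta_0$ exists.

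The main obstacle I expect is the degenerate case $d = 1$ with $|\calZ'| = 0$, where Reset collapses $\calZ$ to the empty multiset and no longer produces a feasible LP solution. I would handle $\bm Y_0 = \bm Z$ (the single element of $\calZ$) separately by a direct scaling: if $w^*_{\delta_0} < 1$, then dividing $(1 - w^*_{\delta_0}) h(\bm Z) \leq \sum_{\delta \neq \delta_0} w^*_\delta h(\delta)$ by $1 - w^*_{\delta_0}$ gives a Shannon-flow inequality whose objective is $\bigl(B / N_{\delta_0}^{w^*_{\delta_0}}\bigr)^{1/(1 - w^*_{\delta_0})}$, and elementary algebra shows this is strictly smaller than $B$ precisely when $N_{\delta_0} > B$; the cases $w^*_{\delta_0} \geq 1$ force $B \geq N_{\delta_0}$ directly, excluding the hypothesis. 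Combined with the main case-analysis, this completes (ii).
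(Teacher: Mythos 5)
Your proposal follows essentially the same route as the paper: part~(i) is a direct appeal to Lemma~\ref{lmm:ddr:lp}, and part~(ii) is a contradiction argument via the Reset Lemma~\ref{lmm:reset-lemma} applied to the integral form. The main bookkeeping in part~(ii) (the bound $\sum_\delta m'_\delta \log N_\delta \le d\log B - \log N_{\delta_0}$ and the two cases $|\calZ'|\in\{d-1,d\}$) is correct and matches the paper's inequality chain, which the paper compresses into a single estimate $B' < B^{(|\calZ|-1)/|\calZ'|}\le B$.

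The one place where your treatment is more fragile than the paper's is the degenerate case. The paper dispenses with it at the outset: from $B \ge N_{\bar\delta}^{m(\bar\delta)/|\calZ|} > B^{m(\bar\delta)/|\calZ|}$ (and noting $B>1$ is forced) it reads off $1\le m(\bar\delta) < |\calZ|$, so $|\calZ|\ge 2$ and hence $|\calZ'|\ge|\calZ|-1\ge 1$ automatically. You instead run the Reset Lemma first and then patch the $d=1,\,|\calZ'|=0$ possibility with a scaling argument. That patch is tied to $\bm Y_0 = \bm Z$ (the rewrite $h(\bm Z) - w^*_{\delta_0}h(\delta_0) = (1-w^*_{\delta_0})h(\bm Z)$ needs $\delta_0$ to literally be $\bm Z$), but the Reset Lemma can delete the unique element of $\calZ$ through a chain of cancellations even when $\bm Y_0\ne\bm Z$, so as written the case analysis has a hole. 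It is a vacuous hole — when $d=1$ every $w^*_\delta$ is a nonnegative integer, so $w^*_{\delta_0}>0$ forces $w^*_{\delta_0}\ge 1$ and hence $B\ge N_{\delta_0}$, ruling out the hypothesis outright — but you should say so explicitly rather than relying on the scaling sub-argument, and once you do, the scaling sub-argument becomes superfluous. The paper's up-front observation that $m(\bar\delta)<|\calZ|$ is the cleaner way to eliminate this case.
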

\begin{proof}%[Proof of Proposition~\ref{prop:opt:ddr}]
Part $(i)$ follows directly from Lemma~\ref{lmm:ddr:lp}.
We prove part $(ii)$ next. Fix any optimal solution $(\bm \lambda, \bm w)$. Consider the equivalent
integral form of the inequality~\eqref{eqn:ddr:shearer} parameterized by
$(\calZ, \calD)$.
Let $m(Z)$ and $m(\delta)$ denote the multiplicities of $\bm Z \in \Sigma_{\out}$
and $\delta \in \Delta$ in $\calZ$ and $\calD$ respectively.
Then, we have $\lambda_{\bm Z} = \frac{m(\bm Z)}{|\calZ|}$ and
$w_\delta = \frac{m(\delta)}{|\calZ|}$.

Suppose to the contrary that there is a $\bar \delta \in \Delta$ such that $w_{\bar \delta} > 0$  and
$N_{\bar\delta} > B$. Then,
\[
B \geq N_{\bar\delta}^{w_{\bar \delta}} =
N_{\bar\delta}^{\frac{m(\bar \delta)}{|\calZ|}} >
B^{\frac{m(\bar \delta)}{|\calZ|}},
\]
which implies that $1 \leq m(\bar \delta) < |\calZ|$. From Lemma~\ref{lmm:reset-lemma}, we
can construct another integral Shannon-flow inequality parameterized by $(\calZ', \calD')$,
such that $\calZ' \subseteq \calZ$ and $\calD' \subseteq \calD - \{\bar \delta\}$, with
$|\calZ'| \geq |\calZ| - 1 > 0$. Let $m'(\delta)$ denote the multiplicity of $\delta$ in
$\calD'$ and $m'(\bm Z)$ denote the multiplicity of $\bm Z$ in $\calZ'$. Then, $m'(\delta)
\leq m(\delta)$ for all $\delta \neq \bar \delta$, and $m'(\bar \delta) \leq m(\bar \delta)
- 1$. Define $\bm \lambda'$, $\bm w'$ as $\lambda'_{\bm Z} \defeq \frac{m'(\bm Z)}{|\calZ'|}$
and $w'_\delta \defeq \frac{m'(\delta)}{|\calZ'|}$, then they form a feasible solution to the
optimization problem. The objective value of $(\bm \lambda', \bm w')$ is
\begin{align*}
\prod_{\delta \in \Delta} N_{\delta}^{w'_\delta}
=
\prod_{\delta \in \Delta} N_{\delta}^{\frac{m'(\delta)}{|\calZ'|}}
\leq
    \frac{
        \prod_{\delta \in \Delta } N_{\delta}^{\frac{m(\delta)}{|\calZ'|}}
    }
    {
        N_{\bar \delta}^{\frac{1}{|\calZ'|}}
    }
< \frac{B^{\frac{|\calZ|}{|\calZ'|}}}
    {B^{\frac{1}{|\calZ'|}}}
= B^{\frac{|\calZ|-1}{|\calZ'|}} \leq B,
\end{align*}
contradicting the optimality of $(\bm \lambda, \bm w)$.
\end{proof}

\subsection{Main Result: Correctness and Runtime of $\newalgorithm$}

\begin{restatable}{thm}{ThmMainResult}
    Given a disjunctive datalog rule of the form~\eqref{eq:ddr},
    input database instance over $\Sigma_{\inn}$ satisfying degree constraints
    $(\Delta, \bm N)$, and a Shannon-flow inequality
    \begin{align}
        \sum_{\bm Z \in \Sigma_{\out}} \lambda_{\bm Z} h(\bm Z)
        &\leq \sum_{\delta \in \Delta} w_\delta h(\delta).
        \label{eqn:shannon:flow:inequality:theorem}
    \end{align}
    Then, \newalgorithm can compute a model
    $\Sigma_{\out}$ for the DDR~\eqref{eq:ddr} of size $\norm{\Sigma_{\out}} = O(B)$
    in time $O(N\log N+B\log N)$ where $N$ is the size of the input database instance, and
    \begin{align}
        B \defeq \prod_{\delta \in \Delta} N_\delta^{w_\delta}.  \label{eqn:budget:B}
    \end{align}
    \label{thm:tight:panda}
\end{restatable}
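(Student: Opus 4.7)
The plan is to prove Theorem~\ref{thm:tight:panda} by strong induction on the recursion of Algorithm~\ref{alg:tight:panda}, establishing in parallel correctness of the output, the size bound $O(B)$, and the runtime. At the top level, $\calP$ is initialized as in the proof of Corollary~\ref{cor:upper:bound}: for each $\delta = (\bm Y|\bm X) \in \Delta$, take the witnessing $R \in \Sigma_{\inn}$ and set $p_\delta(\bm y|\bm x) \defeq 1/N_\delta$ on $R_{|\bm X\cup\bm Y}$ and $0$ elsewhere. After clearing denominators in~\eqref{eqn:shannon:flow:inequality:theorem} to obtain integer multisets $(\calZ,\calD)$ with common denominator $d = |\calZ|$, one observes that $\prod_{\delta \in \calD} p_\delta(\bm t) = (1/B)^{d}$ for every $\bm t \in \bigjoin \Sigma_{\inn}$. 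The central inductive invariant I would carry is: the model returned by $\newalgorithm(\calZ, \calD, \calP)$ covers every tuple $\bm t \in \dom^{\bm V}$ with $\prod_{\delta \in \calD} p_\delta(\bm t) \geq (1/B)^{|\calZ|}$, in the sense that some $\bm Z \in \calZ$ satisfies $\bm t_{\bm Z} \in Q(\bm Z)$.

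The base case is immediate: strict positivity of the product forces $p_{\bm Z}(\bm t) > 0$ for the returned $\bm Z \in \calD \cap \calZ$. The inductive step is a case analysis on the first proof step $s$. For decomposition, submodularity, and monotonicity, $\applystep$ performs exactly the measure transformation used in the proof of Lemma~\ref{lmm:probabilistic:inequality}, so $\prod_{\delta \in \calD^\ell} p_\delta(\bm t) \geq \prod_{\delta \in \calD} p_\delta(\bm t)$ and the light-branch hypothesis with $|\calZ|$ unchanged closes these cases. The genuinely new case is composition $h(\bm X) + h(\bm Y|\bm X) \to h(\bm X\bm Y)$ with truncation~\eqref{eqn:truncated:product}. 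Tuples with $p_{\bm X}(\bm t)\,p_{\bm Y|\bm X}(\bm t) \geq 1/B$ survive truncation, so the light branch preserves the product and covers them. Tuples with $p_{\bm X}(\bm t)\,p_{\bm Y|\bm X}(\bm t) < 1/B$ are set to zero in the light branch and are passed to the heavy branch; since $\calD^h \subseteq \calD^\ell \setminus \{\bm X\bm Y\} = \calD \setminus \{\bm X, (\bm Y|\bm X)\}$ and sub-probability measures are bounded by $1$, one gets
\[
\prod_{\delta \in \calD^h} p_\delta(\bm t) \;\geq\; \frac{\prod_{\delta \in \calD} p_\delta(\bm t)}{p_{\bm X}(\bm t)\,p_{\bm Y|\bm X}(\bm t)} \;>\; B \cdot \left(\tfrac{1}{B}\right)^{|\calZ|} \;=\; \left(\tfrac{1}{B}\right)^{|\calZ|-1} \;\geq\; \left(\tfrac{1}{B}\right)^{|\calZ^h|},
\]
where the final inequality uses $|\calZ^h| \geq |\calZ|-1$ from Lemma~\ref{lmm:reset-lemma}(b). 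Applying the induction hypothesis to $(\calZ^h, \calD^h, \calP^h)$ then covers these tuples.

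For the size bound, truncation~\eqref{eqn:truncated:product} guarantees that every measure produced by a composition has support of size $\leq B$, because it is sub-probability with all non-zero values at least $1/B$; marginals, conditionals, and monotonicity steps can only shrink supports; a secondary structural induction then shows each base-case relation has size $O(B)$. The recursion tree has $O(1)$ leaves in $N$, since every heavy branch strictly decreases the Reset-Lemma potential $|\calD|+|\calM|+2|\calS|$ by Lemma~\ref{lmm:reset-lemma}(c) and the light-branch depth is bounded by $|\calD|+|\calM|+3|\calS|$ from Theorem~\ref{thm:proof-sequence}; together these give $\norm{\Sigma_{\out}} = O(B)$. Each $\applystep$ operation (marginal, conditional, truncated product) is a near-linear pass over the current supports using hashing and sorting of keys, contributing $O((N+B)\log N)$ in total. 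The main obstacle is the exponent bookkeeping in the composition/Reset pair: the $-1$ drop in $|\calZ^h|$ afforded by Lemma~\ref{lmm:reset-lemma} must precisely absorb the factor of $B$ gained by dropping the truncated term, and getting this normalization right is what forces the invariant to take the exact form $(1/B)^{|\calZ|}$. Once that invariant is chosen, the remainder of the proof is a routine step-by-step verification along the proof sequence.
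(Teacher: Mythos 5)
Your proof takes essentially the same approach as the paper's: the central invariant $\prod_{\delta\in\calD}p_\delta(\bm t)\geq(1/B)^{|\calZ|}$ is exactly the paper's invariant (e), and the case analysis over proof-sequence steps mirrors the paper's. But two genuine gaps remain.

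The first concerns the $O(B)$ size bound. You argue that truncation bounds the support of every measure \emph{produced by a composition} by $B$, and that marginals, conditionals, and monotonicity steps "can only shrink supports." But a leaf returns $Q(\bm Z)=\supp(p_{\bm Z})$ for some unconditional $p_{\bm Z}$, and the provenance of $p_{\bm Z}$ may trace back entirely to an \emph{initial} measure with no composition on the path. For an unconditional $\delta=\bm Y\in\Delta$, the initial $p_\delta$ has nonzero values $1/N_\delta$ and support up to $N_\delta$; if $N_\delta>B$, the base-case output already exceeds $B$. This is not a hypothetical for an arbitrary input pair $(\bm w,\bm\lambda)$: take $Q_1(A)\vee Q_2(B)\cd R(A)\wedge S(B)$ with $|R|=N^2$, $|S|=1$, and the valid but non-optimal inequality $\tfrac12 h(A)+\tfrac12 h(B)\leq\tfrac12 h(A)+\tfrac12 h(B)$, which gives $B=N$ while $|\supp(p_A)|=N^2$. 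The paper closes this by a WLOG reduction to an \emph{optimal} $(\bm w,\bm\lambda)$ via Proposition~\ref{prop:opt:ddr}(ii), which guarantees $N_\delta\leq B$ for every unconditional $\delta$ with $w_\delta>0$ and thus makes the paper's invariant (c) (every unconditional measure has all nonzero values $\geq 1/B$) hold at the start. Your "secondary structural induction" needs exactly this as its base case and you never establish it.

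The second gap is in the heavy branch. You write that tuples failing the truncation threshold "are passed to the heavy branch," but Algorithm~\ref{alg:tight:panda} only creates a heavy branch when $|\calZ|>1$; with $|\calZ|=1$ a truncated tuple would be silently dropped and your induction would not close. The fix is already latent in your inequality chain: if a tuple satisfying the invariant were truncated while $|\calZ|=1$, then $\prod_{\delta\in\calD^h}p_\delta(\bm t)>(1/B)^{|\calZ|-1}=1$, contradicting that these are sub-probability measures. Hence the truncation case cannot arise when $|\calZ|=1$, and relatedly $\calZ$ never becomes empty (the paper's invariant (d)). You need to state this explicitly; as written, the inductive step does not cover the $|\calZ|=1$ composition case.
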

\begin{proof}%[Proof of Theorem~\ref{thm:tight:panda}]
Let $(\calZ, \calD)$ be the multisets representing the equivalent integral Shannon-flow
inequality~\eqref{eqn:shannon:flow:inequality:bag} of the given
inequality~\eqref{eqn:shannon:flow:inequality:theorem}.
For each $\delta  = (\bm Y | \bm X) \in \Delta$, let $R \in \Sigma_{\inn}$ be a relation where
$\max_{\bm x} \deg_R(\bm Y | \bm X = \bm x) \leq N_\delta$.
Define the following conditional sub-probability measure:
\begin{align}
    p_{\delta}(\bm y | \bm x) \defeq
    \begin{cases}
        \frac{1}{N_\delta} &
            (\bm x, \bm y) \in R_{|\bm X\cup\bm Y} \\
        0 & \text{otherwise}
    \end{cases}
    \label{eq:initial:measure}
\end{align}
These measures define the set $\calP$, which along with $(\calZ, \calD)$ are the
initial inputs to the algorithm $\newalgorithm$. The recursive calls to $\newalgorithm$ form
a tree of branching factor at most 2 at each internal node, called the {\em execution tree}.
The leaf nodes are where we return and gather the results, where for every $\bm Z \in \calZ$
the output $Q(\bm Z)$ is the union of all $Q(\bm Z)$ constructed at the leaf nodes.

{\bf Claim:} Throughout the execution, the following invariants hold:
\begin{itemize}
\item[(a)] Every pair $(\calZ, \calD)$ passed to $\newalgorithm$ defines a valid
Shannon-flow inequality of the form~\eqref{eqn:shannon:flow:inequality:bag}.
\item[(b)] For every $p_{\delta} \in \calP$ where $\delta = (\bm Y | \bm X)$, and every
$\bm x \in \dom^{\bm X}$, $p_{\bm Y|\bm X=\bm x}$ is a sub-probability measure.
\item[(c)] Every $\calZ$ passed to $\newalgorithm$ is not empty.
\item[(d)] For every $p_{\bm Y} \in \calP$ (unconditional measure), we have
$p_{\bm Y}(\bm y) > 0 \Rightarrow p_{\bm Y}(\bm y) \geq 1/B$ for all $\bm y \in \dom^{\bm Y}$.
\item[(e)] For every tuple $\bm t \in \bigjoin \Sigma_{\inn}$, there exists a leaf node
$\newalgorithm(\calZ, \calD, \calP)$ in the execution tree where
\begin{align}
    \prod_{\delta \in \calD} p_{\delta}(\bm t) \geq \frac{1}{B^{|\calZ|}}.
    \label{eqn:invariant:e}
\end{align}
\end{itemize}
Assuming the claim holds, we prove that the algorithm is correct and analyze its runtime.

We start with correctness. For any $\bm t \in \bigjoin \Sigma_{\inn}$, consider the leaf
node $\newalgorithm(\calZ, \calD, \calP)$ in the execution tree
where~\eqref{eqn:invariant:e} holds.
Since it is a leaf node, there exists $\bm Z \in \calZ \cap \calD$.
From~\eqref{eqn:invariant:e} we have $p_{\bm Z}(\bm t) >0$, which means
$\bm t_{\bm Z} \in Q(\bm Z)$. Thus, the output is a model of the input DDR.

Next, we analyze the runtime. The length of the path from the root down to the first
internal node with $2$ children is at most the length of the proof sequence of the
Shannon-flow inequality~\eqref{eqn:shannon:flow:inequality:bag}. At the branching point, the
light branch continues on with the next proof step in the sequence, while the heavy branch
resets the inequality by applying the Reset Lemma~\ref{lmm:reset-lemma}. Either way, the
potential function $|\calD| + |\calM| + 3|\calS|$ decreases by at least $1$, thanks to
Theorem~\ref{thm:proof-sequence} and Lemma~\ref{lmm:reset-lemma}. Therefore, the depth of
the execution tree is at most $|\calD| + |\calM| + 3|\calS|$, which is a query-dependent
parameter.

To show the $O((N+B)\log N)$-runtime overall, it is thus sufficient to argue that
every call to $\applystep$ takes $O((N+B)\log N)$-time.
%\mak{Should we simplify the runtime a bit from $O((N+B)\log N)$ to $O(N + B \log N )$?
%If $N > B$, then relations of size $>B$ can immediately be ignored at the beginning of the algorithm.}
%\hqn{Even if $N<B$, we still have to sort inputs in order to do joins, right? I don't see
%how to get rid of the $N \log N$ term.}
From invariant (d), every unconditional measure $p_{\bm Y} \in \calP$ has
at most $B$ tuples in its support. Therefore, every decomposition and monotonicity step
can be implemented in $O((N+B)\log N)$-time because computing the marginal and conditional
measures can be done with at most $2$ passes over the support of the original measure.
The submodularity step is even simpler because it only requires renaming a measure.
Finally, the composition step~\eqref{eqn:truncated:product} can also be implemented in
$O((N+B)\log N)$-time because we can scan over $p_{\bm X}$, and for each $\bm x$ in its
support, compute the threshold $1/(B \cdot p_{\bm X}(\bm x))$ to filter the support of
$p_{\bm Y | \bm X=\bm x}$.
The output measure $p_{\bm X \bm Y}$ has at most $B$ tuples in its support because of
invariant (d).

It remains to prove the claim.  Invariants (a) and (b) hold trivially by design. Invariant
(c) holds at the root of the execution tree because $\norm{\bm\lambda}_1=1$, and it
continues to hold afterwards, thanks to the condition $|\calZ| > 1$ in
line~\eqref{alg:tight:panda:reset} of Algorithm~\ref{alg:tight:panda}.

Invariant (d) also holds by design for every new unconditional measure $p_{\bm Y}$ created
during the execution, from the composition step. Thus, we only need to show that it holds at
the start of the algorithm. Without loss of generality, we can assume that the coefficients
$(\bm w, \bm \delta)$ given in the input Shannon-flow
inequality~\eqref{eqn:shannon:flow:inequality:theorem} are an optimal solution to the
optimization problem in Proposition~\ref{prop:opt:ddr}, which implies we can assume
$\degree_{\Sigma_{\inn}}(\delta) \leq B$ for every unconditional term $\delta \in \Delta$.
Thus, invariant (d) holds at the start.

We prove invariant (e) by induction on the execution tree.  We will prove that (e) holds for
{\em every} current execution tree during the execution of the algorithm, not just the last
one. At the start, there is only one initial node, where inequality~\eqref{eqn:invariant:e}
holds as an {\em equality} for every tuple $\bm t \in \bigjoin \Sigma_{\inn}$. It is
straightforward to verify that the product $\prod_{\delta \in \calD} p_{\delta}(\bm t)$ is
not reduced by any step other than the composition step in Eq.~\eqref{eqn:truncated:product}. In the composition step $h(\bm
X)+h(\bm Y|\bm X) \to h(\bm X\bm Y)$, the node $(\calZ, \calD, \calP)$ branches into two
child nodes: the light child node $(\calZ, \calD^\ell, \calP^\ell)$ and the heavy child node
$(\calZ^h, \calD^h, \calP^h)$, where the heavy child node only exists assuming $|\calZ|>1$. Consider a tuple $\bm t$ for which~\eqref{eqn:invariant:e}
holds at the parent node. We consider two cases:
\begin{itemize}
    \item If $p_{\bm X}(\bm t_{\bm X}) \cdot p_{\bm Y | \bm X}(\bm t_{\bm Y} | \bm t_{\bm X}) \geq
1/B$, then $\prod_{\delta \in \calD^\ell} p_{\delta}(\bm t) = \prod_{\delta \in \calD} p_{\delta}(\bm t)$,
and thus~\eqref{eqn:invariant:e} holds at the light child node.
    \item If $p_{\bm X}(\bm t_{\bm X}) \cdot p_{\bm Y | \bm X}(\bm t_{\bm Y} | \bm t_{\bm X}) <
1/B$, then we want to show that $\prod_{\delta \in \calD^h} p_{\delta}(\bm t) \geq
1/B^{|\calZ^h|}$ for the $(\calZ^h, \calD^h)$ pair resulted from the reset step,
where we know $\calZ^h \subseteq \calZ$, $|\calZ^h| \geq |\calZ|-1$,
and $\calD^h \subseteq \calD - \{(\bm X), (\bm Y | \bm X)\}$.
We also want to show that $|\calZ| > 1$ since, otherwise, the heavy child would not have been created in the first place.
Since~\eqref{eqn:invariant:e} holds at the parent node, we have
    \begin{align}
        1 \geq \prod_{\delta \in \calD^h} p_{\delta}(\bm t)
        \geq \frac{\prod_{\delta \in \calD} p_{\delta}(\bm t)}
        {p_{\bm X}(\bm t_{\bm X}) \cdot p_{\bm Y | \bm X}(\bm t_{\bm Y} | \bm t_{\bm X})}
        > B \cdot \frac{1}{B^{|\calZ|}}
        \geq \frac{1}{B^{|\calZ^h|}}.
        \label{eq:invariant:heavy}
    \end{align}
This inequality proves invariant (e) for the heavy child node, assuming $|\calZ| > 1$. It
also proves that $|\calZ| > 1$ because if $|\calZ| = 1$, then the inequality becomes $1 >
1$, which is a contradiction. (Invariant (c) rules out the case of $|\calZ| = 0$.)
\end{itemize}

\end{proof}

\section{Disjunctive datalog rules, conjunctive queries, and the submodular width}
\label{sec:subw:ddr}

This section provides a brief overview of the notions of fractional hypertree width and
submodular width under degree constraints, and how conjunctive queries can be answered in
submodular width time via the connection to DDRs. All the materials were implicit or
explicit in~\cite{DBLP:conf/pods/Khamis0S17,theoretics:13722}. However, we adapted the
definitions to the bounds presented in this paper which are expressed directly in the
input degree constraints, so they are in the primal form instead of the dual form as
in~\cite{theoretics:13722,DBLP:conf/pods/Khamis0S17}.

\subsection{Tree decompositions and free-connex tree decompositions}

A {\em tree decomposition} of a hypergraph $\calH=(\calV,\calE)$ is a pair $(T, \chi)$
where $T=(V(T),E(T))$ is a tree and $\chi: V(T) \to 2^{\calV}$ is a mapping that
assigns to each node of $T$ a set of vertices of $\calH$, called a \emph{bag}, such that:
(i) for every hyperedge $S \in \calE$, there exists a node $t \in V(T)$ such that
$S \subseteq \chi(t)$; and (ii) for every vertex $v \in \calV$, the set of nodes
$\{t \in V(T) \mid v \in \chi(t)\}$ induces a connected subtree of $T$.

Consider a conjunctive query (CQ)
$Q(F) \text{ :- } \bigwedge_{S \in \calE} R_S(S)$
where $\calH = (\calV, \calE)$ is its associated hypergraph, $F \subseteq \calV$
is the set of free variables of $Q$, and each $R_S$ is a relation over the set of
variables $S \subseteq \calV$. Note that we abuse notation and use $S$ to refer to both
the hyperedge in $\calE$ and the set of variables in the relation $R_S$.
In the notations established in the paper, $\Sigma_\inn = \{R_S \mid S \in \calE\}$.

A {\em free-connex} tree decomposition~\cite{DBLP:conf/csl/BaganDG07}
of $Q$ is a tree decomposition $(T, \chi)$ of its hypergraph $\calH$ such that there is a connected subtree of $T$ whose bags contain {\em all} the free variables $F$ of $Q$ and {\em no} non-free variables.
Note that, if $Q$ was a Boolean CQ or a full CQ,
then every tree decomposition of $\calH$ is free-connex. Let $\td(Q)$ denote the set of
all free-connex tree decompositions of $Q$.

\subsection{Fractional hypertree width under degree constraints}

Suppose $\Sigma_\inn$ satisfies degree constraints $(\Delta, \bm N)$.
A free-connex tree decomposition $(T,\chi)$ of $Q$ can be thought of as a query plan for answering
$Q$. Each bag $\chi(t)$ defines the following DDR:
\begin{align}
\label{eq:ddr-bag}
Q_t(\chi(t)) \text{ :- } \bigwedge_{S \in \calE} R_S(S).
\end{align}
The answer to $Q$ can be computed in linear time (and with constant delay)
from the answers to all DDRs~\eqref{eq:ddr-bag} (which are CQs)~\cite{DBLP:conf/csl/BaganDG07},
one for each bag of $(T, \chi)$.
The time it takes to answer the DDR~\eqref{eq:ddr-bag}, from Theorem~\ref{thm:tight:panda} is
$O((N+B)\log N)$ where
\begin{align}
B \defeq \min_{\bm w\geq \bm 0} \left\{
    \prod_{\delta \in \Delta} N_\delta^{w_\delta}
    \text{ where }
    h(\chi(t)) \leq \sum_{\delta\in \Delta} w_{\delta} h(\delta)
    \text{ is a valid Shannon inequality}
\right\}
\end{align}
To compute this bound, we can solve equivalently the following optimization problem
(which is solvable in polynomial time, thanks to Proposition~\ref{prop:opt:ddr}):
\begin{align}
\label{eq:rho:star}
\rho^*(t, \Delta, \bm N) &\defeq
    \min_{\bm w\geq \bm 0} \sum_{\delta \in \Delta} w_\delta \cdot \log_N N_\delta \\
\text{subject to } & h(\chi(t)) \leq \sum_{\delta\in \Delta} w_{\delta} h(\delta)
    \text{ is a valid Shannon inequality} \label{eqn:new:shearer}
\end{align}
The quantity $\rho^*(t, \Delta, \bm N)$ is the generalization of the {\em fractional edge cover number}
of a bag $t$ under degree constraints.
If all constraints were cardinality constraints,
then~\eqref{eqn:new:shearer} is exactly the Shearer's inequality.

Hence, if our query plan consists of a single tree decomposition, then the best time
we can hope for is parameterized by the generalized fractional hypertree width of $Q$ under
degree constraints:
\begin{align}
\fhtw(Q, \Delta, \bm N) \defeq \min_{(T,\chi) \in \td(Q)} \max_{t \in V(T)} \rho^*(t, \Delta, \bm N).
\label{eqn:fhtw:dc}
\end{align}
Note that $\td(Q)$ is the set of all free-connex tree decompositions of $Q$, and thus
this notion generalizes the classical fractional hypertree width to not only handle
degree constraints but also non-Boolean and non-full CQs.
A direct corollary of Theorem~\ref{thm:tight:panda} is the following:

\begin{cor}
\label{cor:fhtw}
Given a CQ $Q$ and degree constraints $(\Delta, \bm N)$ on its input relations, we can
compute the answer to $Q$ in time $O(N^{\fhtw(Q, \Delta, \bm N)} \log N + |Q|)$, with
constant-delay enumeration of the answers.
\end{cor}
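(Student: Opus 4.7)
The plan is to realize Corollary~\ref{cor:fhtw} as a straightforward composition of Theorem~\ref{thm:tight:panda} with the classical free-connex tree-decomposition machinery of Bagan, Durand, and Grandjean. First, I would select a free-connex tree decomposition $(T,\chi) \in \td(Q)$ that attains the minimum in \eqref{eqn:fhtw:dc}, so that $\max_{t \in V(T)} \rho^*(t, \Delta, \bm N) = \fhtw(Q, \Delta, \bm N)$. Since $|V(T)|$ depends only on the query, the choice of $(T,\chi)$ is a query-dependent constant.

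Second, for each node $t \in V(T)$, I would interpret the bag as the DDR~\eqref{eq:ddr-bag} with the single output atom $Q_t(\chi(t))$ and invoke $\newalgorithm$. The optimal weight vector $\bm w$ witnessing $\rho^*(t, \Delta, \bm N)$, together with the trivial $\bm \lambda = (1)$, supplies a Shannon-flow inequality of exactly the form required by Theorem~\ref{thm:tight:panda}; the translation from the primal expression in \eqref{eq:rho:star} to this form is precisely the content of Lemma~\ref{lmm:ddr:lp}. Hence $\newalgorithm$ materializes $Q_t$ in time $O((N + B_t) \log N)$, where
\[
B_t \;=\; \prod_{\delta \in \Delta} N_\delta^{w_\delta} \;\leq\; N^{\fhtw(Q, \Delta, \bm N)}.
\]
Summing over the query-dependent number of bags, the cumulative cost of materializing every bag relation is $O(N^{\fhtw(Q, \Delta, \bm N)} \log N)$.

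Third, by construction the materialized bag relations $\{Q_t\}_{t \in V(T)}$ form an acyclic conjunctive query whose hypergraph is the tree $T$, and whose free variables $F$ are covered by a connected subtree, so the new query is free-connex. Applying the Bagan--Durand--Grandjean enumeration procedure to this acyclic free-connex query yields constant-delay enumeration after linear-time preprocessing in the total size of the $Q_t$'s, contributing $O(N^{\fhtw(Q, \Delta, \bm N)})$ preprocessing and $O(|Q|)$ enumeration time, which combine with the materialization cost to give the claimed $O(N^{\fhtw(Q, \Delta, \bm N)}\log N + |Q|)$ bound.

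I do not anticipate a genuine obstacle: the only subtle point is confirming that the DDR corresponding to each bag has the single-atom head required to reduce its output-size bound to $\rho^*(t,\Delta,\bm N)$, and that the Shannon-flow inequality produced by \eqref{eq:rho:star} is exactly what Theorem~\ref{thm:tight:panda} consumes. Once this dictionary is in place via Lemma~\ref{lmm:ddr:lp}, the three steps above compose without further analysis.
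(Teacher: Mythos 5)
Your proof is correct and follows the paper's own argument step for step: choose a free-connex tree decomposition attaining $\fhtw(Q,\Delta,\bm N)$, materialize each bag $\chi(t)$ as a single-head DDR via Theorem~\ref{thm:tight:panda} with $\bm\lambda=(1)$ and the optimal $\bm w$ from $\rho^*(t,\Delta,\bm N)$ (giving cost $O((N+B_t)\log N)$ with $B_t\leq N^{\fhtw}$), and then feed the (query-dependent-constant many) materialized bags into the Bagan--Durand--Grandjean free-connex enumeration procedure. The only small slip is the appeal to Lemma~\ref{lmm:ddr:lp}: the primal formulation~\eqref{eq:rho:star} already states its constraint as a Shannon-flow inequality, so no primal/dual conversion is needed to match the hypothesis of Theorem~\ref{thm:tight:panda}.
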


There is a subtle point that our definition of $\fhtw$ in~\eqref{eqn:fhtw:dc} above is a
smooth generalization of the original fractional hypertree width definition
from~\cite{DBLP:conf/soda/GroheM06} where the fractional edge covering number is replaced by
$\rho^*(t, \Delta, \bm N)$. However, the definition of $\fhtw$ given
in~\cite{DBLP:conf/pods/Khamis0S17,theoretics:13722} is slightly different because it is
based on the dual form of the optimization problem.\footnote{As a special case, the fractional
edge cover number is the same as the fractional vertex packing number.}
Nevertheless, thanks to Lemma~\ref{lmm:ddr:lp}, we can see that the definitions are equivalent:

\begin{align*}
\fhtw(Q, \Delta, \bm N) \defeq
    \min_{(T,\chi) \in \td(Q)} \max_{t \in V(T)} \rho^*(t, \Delta, \bm N)
    =
    \min_{(T,\chi) \in \td(Q)} \max_{t \in V(T)}
    \max_{\bm h \models (\Delta, \bm N)} h(\chi(t))
\end{align*}

\subsection{Submodular width under degree constraints}

The fractional hypertree width was derived by minimizing the output size, over all tree decompositions
$(T, \chi)$ of the following query:
\begin{align}
    \bigwedge_{t \in V(T)} Q_t(\chi(t)) \text{ :- } \bigwedge_{S \in \calE} R_S(S).
\end{align}
The conjunction in the head means we can evaluate each DDR $Q_t$ independently. Daniel
Marx~\cite{DBLP:journals/jacm/Marx13} had a brilliant insight that we do not need to use a
single tree decomposition, even the best one; instead, we can load-balance the computation
over multiple tree decompositions to obtain a better runtime. To do so, we want to answer
the following query optimally:
\begin{align}
    \bigvee_{(T, \chi) \in \td(Q)} \bigwedge_{t \in V(T)} Q_t(\chi(t)) \text{ :- } \bigwedge_{S \in \calE} R_S(S).
    \label{eqn:multiple:td}
\end{align}
Semantically, query~\eqref{eqn:multiple:td} means that, for every satisfying tuple
of the query's body, as long as there is one tree decomposition for which all their
bag atoms are satisfied, then the tuple is in the output.
If we can compute an answer to query~\eqref{eqn:multiple:td}, then from each tree decomposition,
we can enumerate with constant delay their contribution to the answer to $Q$.
This means over all, the answer to $Q$ can also be enumerated with constant delay.

Query~\eqref{eqn:multiple:td} is neither a typical CQ nor a typical DDR. The second idea
we need is the distributivity of $\vee$ over $\wedge$ to rewrite it from DNF-form to CNF-form.
In order to describe this rewriting, we need a numbering of tree decompositions and their bags.
We number the tree decompositions in $\td(Q)$ as $(T_i, \chi_i)$ for $i \in [m]$,
and let $\bm T$ denote the set of all tuples $\bm t = (t_1, \ldots, t_m)$ where
$t_i \in V(T_i)$ is a bag in tree decomposition $T_i$.
Then, we can rewrite query~\eqref{eqn:multiple:td} as:
\begin{align}
    \bigwedge_{\bm t \in \bm T} \bigvee_{i \in [m]} Q_{t_i}(\chi_i(t_i)) \text{ :- } \bigwedge_{S \in \calE} R_S(S).
\end{align}
This query can be answered by answering $|\bm T|$ many DDRs of the form:
\begin{align}
    \bigvee_{i \in [m]} Q_{t_i}(\chi_i(t_i)) \text{ :- } \bigwedge_{S \in \calE} R_S(S).
\end{align}
From Theorem~\ref{thm:tight:panda}, the time to answer this DDR is
$O((N+B)\log N)$ where
\begin{align}
B \defeq \min_{\bm\lambda, \bm w\geq \bm 0} \left\{
    \prod_{\delta \in \Delta} N_\delta^{w_\delta}
    \text{ s.t. }
    \sum_{i \in [m]} \lambda_i \cdot h(\chi_i(t_i)) \leq \sum_{\delta\in \Delta} w_{\delta} h(\delta)
    \text{ is a valid Shannon inequality}
\right\}
\end{align}
To compute this bound, we can solve equivalently the following optimization problem,
which is solvable in polynomial time, thanks to Proposition~\ref{prop:opt:ddr}:
(Note that, compared to Eq.~\eqref{eq:rho:star}, we now have an additional variable vector
$\bm \lambda$.)
\begin{align}
\label{eq:sw:star}
\rho^*(\bm t, \Delta, \bm N) &\defeq
    \min_{\bm\lambda, \bm w\geq \bm 0} \sum_{\delta \in \Delta} w_\delta \cdot \log_N N_\delta \\
\text{subject to } & \sum_{i \in [m]} \lambda_i \cdot h(\chi_i(t_i)) \leq \sum_{\delta\in \Delta} w_{\delta} h(\delta)
    \text{ is a valid Shannon inequality} \label{eqn:new:shearer:sw}
\end{align}
Since we have to answer all DDRs indexed by $\bm t \in \bm T$, the time spent is dominated
by the worst combination. The submodular width of $Q$ under degree constraints is defined
as:
\begin{align}
\subw(Q, \Delta, \bm N) \defeq \max_{\bm t \in \bm T} \rho^*(\bm t, \Delta, \bm N).
\end{align}
A direct corollary of Theorem~\ref{thm:tight:panda} is the following:
\begin{cor}
\label{cor:subw}
Given a CQ $Q$ and degree constraints $(\Delta, \bm N)$ on its input relations, we can
compute the answer to $Q$ in time $O(N^{\subw(Q, \Delta, \bm N)} \log N + |Q|)$, with
constant-delay enumeration of the answers.
\end{cor}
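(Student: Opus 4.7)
The plan is to reduce CQ evaluation to a constant (query-dependent) number of DDR instances, answer each via Theorem~\ref{thm:tight:panda}, and then stitch the outputs into a constant-delay enumeration using free-connex tree decompositions. Let $(T_1,\chi_1),\ldots,(T_m,\chi_m)$ enumerate $\td(Q)$ and set $\bm T \defeq \prod_{i=1}^m V(T_i)$; both $m$ and $|\bm T|$ depend only on $Q$ and not on the input size.

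First, I would distribute $\vee$ over $\wedge$ to rewrite the disjunctive query~\eqref{eqn:multiple:td} into a conjunction, over $\bm t \in \bm T$, of DDRs of the form $\bigvee_{i=1}^m Q_{t_i}(\chi_i(t_i)) \cd \bigwedge_{S \in \calE} R_S(S)$. For each such $\bm t$, Theorem~\ref{thm:tight:panda} paired with the optimization~\eqref{eq:sw:star} answers the DDR in $O((N+N^{\rho^*(\bm t,\Delta,\bm N)})\log N)$ time, which is $O(N^{\subw(Q,\Delta,\bm N)}\log N)$ since $\rho^*(\bm t,\Delta,\bm N) \leq \subw(Q,\Delta,\bm N)$ by definition. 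Summing over the $|\bm T|=O(1)$ instances preserves this bound. For every $i \in [m]$ and every $t\in V(T_i)$, I then assemble $Q^i_t$ as the union, over all $\bm t'\in\bm T$ with $t'_i=t$, of the DDR output at position $i$ produced by the instance indexed by $\bm t'$; each $Q^i_t$ has size $O(N^{\subw})$.

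The key combinatorial step is a full-coverage property: for every body tuple $\bm r \in \bigjoin\Sigma_{\inn}$, there exists $i^\star \in [m]$ such that $\bm r_{\chi_{i^\star}(t)} \in Q^{i^\star}_t$ for every $t\in V(T_{i^\star})$. Otherwise, for each $i$ one could pick a witness $t_i \in V(T_i)$ with $\bm r_{\chi_i(t_i)} \notin Q^i_{t_i}$; the resulting $\bm t = (t_1,\ldots,t_m) \in \bm T$ indexes a DDR whose correctness (Theorem~\ref{thm:tight:panda}) forces some $i^\star$ with $\bm r_{\chi_{i^\star}(t_{i^\star})}$ in its output at position $i^\star$, hence in $Q^{i^\star}_{t_{i^\star}}$, contradicting the choice of $t_{i^\star}$. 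With this property established, I would run the Bagan, Durand, and Grandjean~\cite{DBLP:conf/csl/BaganDG07} constant-delay enumeration procedure on each free-connex TD $(T_i,\chi_i)$ using the $Q^i_t$'s as bag relations, and take the union of the enumerations with a hash-set filter to suppress duplicate $F$-projections.

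The main obstacle is that the relations $Q^i_t$ produced by $\newalgorithm$ may contain spurious tuples not in $\pi_{\chi_i(t)}(\bigjoin\Sigma_{\inn})$---the truncated product measures in~\eqref{eqn:truncated:product} can support pairs $(\bm x,\bm y)$ that never jointly occur in the body. This is repaired by a standard preprocessing pass: semijoin-reduce each $Q^i_t$ against every input atom $R_S$ with $S \subseteq \chi_i(t)$, and then apply the usual Yannakakis upward/downward semijoin sweeps on $T_i$. After this reduction, every enumerated tuple satisfies all input atoms and hence lies in $\bigjoin\Sigma_{\inn}$; the full-coverage property ensures completeness; and hashing yields amortized $O(1)$ duplicate suppression. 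The reduction and enumeration preprocessing fit inside the $O(N^{\subw}\log N)$ budget because each $T_i$ has $O(1)$ bags of size $O(N^{\subw})$, yielding the advertised $O(N^{\subw}\log N + |Q|)$ total runtime with constant-delay enumeration.
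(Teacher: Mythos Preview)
Your proposal is correct and follows essentially the same route as the paper's Appendix~\ref{app:subw:ddr}: rewrite the disjunction over tree decompositions~\eqref{eqn:multiple:td} into CNF to obtain $|\bm T|=O(1)$ DDRs, solve each via Theorem~\ref{thm:tight:panda}, and enumerate per tree decomposition using~\cite{DBLP:conf/csl/BaganDG07}. You have usefully made explicit two steps the paper only gestures at---the full-coverage argument (the contrapositive pick of a bad bag per TD yielding a contradicting $\bm t$) and the need to semijoin-reduce the DDR outputs against the input atoms before running Yannakakis so that soundness holds---and both are handled correctly. One small caveat: a naive sequential union with a hash filter gives only amortized constant delay, since one TD may re-enumerate outputs already produced by an earlier one; to get worst-case constant delay you should interleave the $m$ enumerators and buffer, which is standard and costs only an $O(m)=O(1)$ factor.
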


Similar to the fractional hypertree width case, our definition of the submodular width under
degree constraints here is slightly different from that
in~\cite{DBLP:conf/pods/Khamis0S17,theoretics:13722,DBLP:journals/jacm/Marx13}, because
we use the primal form of the optimization problem.
Thanks to Lemma~\ref{lmm:ddr:lp}, we can see that the definitions are equivalent:
\begin{align*}
\subw(Q, \Delta, \bm N) &\defeq
    \max_{\bm t \in \bm T} \rho^*(\bm t, \Delta, \bm N)=
    \max_{\bm t \in \bm T}
    \max_{\bm h \models (\Delta, \bm N)}
    \min_{i \in [m]} h(\chi_i(t_i)) \\
    &=
    \max_{\bm h \models (\Delta, \bm N)}
    \max_{\bm t \in \bm T}
    \min_{i \in [m]} h(\chi_i(t_i))=
    \max_{\bm h \models (\Delta, \bm N)}
    \min_{(T,\chi) \in \td(Q)}
    \max_{t \in V(T)} h(\chi(t))
\end{align*}

\section{Extensions to handle $\ell_p$-norm constraints}
\label{sec:lp:norm}

Abo Khamis et al~\cite{DBLP:journals/pacmmod/KhamisNOS24} showed that the PANDA framework
can be extended to handle a more general class of constraints, called {\em $\ell_p$-norm
constraints}. Let $p \in (0, \infty]$ be a positive integer. Let $\theta = (\bm Y | \bm X)_p$
denote what we call a {\em $\ell_p$-norm term} (to generalize monotonicity term).
Let $R$ be an input relation over schema $\Sigma$.
Referring back to notations defined in~\eqref{eq:degree:x} and~\eqref{eq:degree:delta},
we define
\begin{align}
\deg_R(\theta) &:= \norm{\deg_R(\bm Y | \bm X = \bm x)}_p
&& \deg_\Sigma(\theta) := \min_{R \in \Sigma} \deg_R(\theta) \label{eq:degree:nu}
\end{align}
Given a database instance over schema $\Sigma$, we say that $\Sigma$ satisfies the {\em $\ell_p$-norm
constraint} $(\theta, N_\theta)$, and write $\Sigma \models (\theta, N_\theta)$, if $\deg_\Sigma(\theta) \leq N_\theta$.
Similar to $h(\delta)$ and $h(\sigma)$, we define
\begin{align}
    h(\theta) &:= \frac 1 p h(\bm X) + h(\bm Y | \bm X)
    && \theta = (\bm Y | \bm X)_p. \label{eq:h:nu}
\end{align}
Note that if $p = \infty$ then $h(\theta) = h(\delta)$ and $\deg_\Sigma(\theta) =
\deg_\Sigma(\delta)$ where $\delta = (\bm Y | \bm X)$; in particular, $\ell_p$-norm
constraints strictly generalize degree constraints. Let $\Theta$ denote a set of
$\ell_p$-norm terms (where $p$ can also vary). Let $\bm N = (N_\theta)_{\theta \in \Theta}$
be a vector of positive real numbers. Given a set $\Sigma_{\inn}$ of input relations, we say
that $\Sigma_{\inn} \models (\Theta, \bm N)$ if
$\Sigma_\inn \models (\theta, N_\theta)$
for every $\theta \in \Theta$.
The following analog of Theorem~\ref{th:upper:bound} holds for $\ell_p$-norm constraints:

\begin{cor}[\cite{DBLP:journals/pacmmod/KhamisNOS24}]
\label{cor:upper:bound:lp:norm}
Consider a DDR of the form~\eqref{eq:ddr} with input schema $\Sigma_{\inn}$, and output
schema $\Sigma_{\out}$ such that $\Sigma_{\inn} \models (\Theta, \bm N)$. Assume that there
exist two {\em non-negative} rational weight vectors $\bm w \defeq (w_\theta)_{\theta \in
\Theta}$ and $\bm \lambda \defeq (\lambda_{\bm Z})_{\bm Z \in\Sigma_{\out}}$ with $\norm{\bm
\lambda}_1=1$, where the following is a Shannon-flow inequality:
\begin{align}
    \sum_{\bm Z \in \Sigma_{\out}}\lambda_{\bm Z}\cdot h(\bm Z)
    & \leq \sum_{\theta \in \Theta} w_\theta \cdot h(\theta).
    \label{eqn:shannon:flow:inequality:lp:norm}
\end{align}
Then, for any input instance $\Sigma_{\inn}$ of the DDR~\eqref{eq:ddr},
there exists a model $\Sigma_{\out}$ for the DDR for which:
  {
  \begin{align}
    \max_{\bm Z \in \Sigma_{\out}} |Q(\bm Z)| &\leq
      \prod_{\theta \in \Theta} N_\theta^{w_\theta}
  \end{align}
  }
\end{cor}
\begin{proof}
The proof is virtually identical to that of Corollary~\ref{cor:upper:bound}, where we apply
Lemma~\ref{lmm:probabilistic:inequality} but with a different initialization
of the probability measures.
In particular, inequality~\eqref{eqn:shannon:flow:inequality:lp:norm} {\em is} a
Shannon-flow inequality of the form~\eqref{eqn:ddr:shearer} when we expand it out
using the definition of $h(\theta)$ in Eq.~\eqref{eq:h:nu}:
\begin{align}
\sum_{\bm Z \in \Sigma_{\out}}\lambda_{\bm Z}\cdot h(\bm Z)
\leq \sum_{\theta \in \Theta} w_\theta \cdot h(\theta)
= \sum_{(\bm Y | \bm X)_p \in \Theta} \frac{w_\theta}{p} \cdot \left( h(\bm X) + p \cdot h(\bm Y | \bm X) \right).
\label{eq:lp-norm:shannon:flow:inequality}
\end{align}
For each $\theta = (\bm Y | \bm X)_p \in \Theta$, let $R \in \Sigma_{\inn}$ be a relation
such that $\deg_R(\theta) = \deg_\Sigma(\theta)$. We define two
sub-probability measures $p_{\bm X}$ and $p_{\bm Y | \bm X}$ as follows:
\begin{align}
p_{\bm X}(\bm x) &:= \frac{(\deg_R(\bm Y| \bm X=\bm x))^p}{\norm{\deg_R(\bm Y| \bm X)}^p_p}
&& p_{\bm Y | \bm X}(\bm y | \bm x)
:= \frac{1}{\deg_R(\bm Y | \bm X=\bm x)}
\label{eq:lp-norm:initialization}
\end{align}
Define $B := \prod_{\theta \in \Theta} N_\theta^{w_\theta}$, then from
Lemma~\ref{lmm:probabilistic:inequality} there exist coefficients
$\bm \lambda = (\lambda_{\bm Z})_{\bm Z \in \Sigma_{\out}}$ with $\norm{\bm\lambda}_1=1$ and sub-probability measures $p_{\bm Z}$,
one for each $\bm Z \in \Sigma_{\out}$, such that
\begin{align*}
\prod_{\bm Z \in \Sigma_{\out}} p_{\bm Z}(\bm t)^{\lambda_{\bm Z}}
\geq \prod_{\theta = (\bm Y|\bm X)_p \in \Theta}
 [p_{\bm X}(\bm t) \cdot p_{\bm Y | \bm X}(\bm t)^p]^{\frac{w_\theta}{p}}
 &\geq \prod_{\theta \in \Theta} \left(\frac{1}{N_\theta^p} \right)^{w_\theta/p} = \frac 1 B.
\end{align*}
The rest of the proof is identical to that of Corollary~\ref{cor:upper:bound}.
\end{proof}

The proof above also contains the initialization of the sub-probability measures
that allows $\newalgorithm$ to handle $\ell_p$-norm constraints, proving
the following analog of Theorem~\ref{thm:tight:panda}:

\begin{cor}
    Given a disjunctive datalog rule of the form~\eqref{eq:ddr},
    input database instance over $\Sigma_{\inn}$ satisfying $\ell_p$-norm constraints
    $(\Theta, \bm N)$, and a Shannon-flow inequality
    $\sum_{\bm Z \in \Sigma_{\out}} \lambda_{\bm Z} h(\bm Z)
    \leq \sum_{\theta \in \Theta} w_\theta h(\theta)$.
    Then, \newalgorithm can compute a model
    $\Sigma_{\out}$ for the DDR~\eqref{eq:ddr} of size $\norm{\Sigma_{\out}} = O(B)$
    in time $O(N\log N+B\log N)$ where $N$ is the size of the input database instance, and
    $B \defeq \prod_{\theta \in \Theta} N_\theta^{w_\theta}$.
    \label{cor:tight:panda:lp:norm}
\end{cor}

\section{Conclusion}
\label{sec:conclusion}

The $\newalgorithm$ algorithm has two key aspects. The first is the idea of leveraging
information theoretic inequalities to derive high-level query plans; and the second is to
collect precise skewness information during the execution of the query plan in order to
partition the data based on general hyperplane cuts to avoid skews. The partitioning
strategies are designed to load-balance the computation among all sub-query plans. Since we
were interested in meeting the worst-case bounds defined by the submodular width, we used
the threshold parameter $B$ to guide the partitioning of the data.

We believe that this idea of load-balancing data processing amongst different sub-query
plans is a powerful and practical idea. An interesting future direction to explore is to see
if the load-balancing can be more adaptive to the actual data distribution.

The key remaining open question at the center of the $\panda$ framework is to bound the length
of the proof sequence needed to prove a given Shannon-flow inequality. This problem is
closely related to the computational complexity of the polymatroid bound question, which is
also open. There are some sub-classes of degree constraints under which a polynomial-length
proof sequence exists, and those are {\em also} the cases where the polymatroid bound is
computable in PTIME~\cite{DBLP:journals/pacmmod/ImMNP25}.

Both $\fhtw$ and $\subw$ are defined as optimization problems over the set $\td(Q)$ which
is of exponential size in query complexity. It is easy to see that, for some queries we do
not need to optimize over {\em all} tree decompositions, but only a small subset of them.
It would be interesting to characterize the class of queries for which optimizing over
a small subset of tree decompositions suffices to compute $\fhtw$ and $\subw$.

Another important question is to study, for which class of input, this framework can be used
to answer aggregate queries efficiently. There have been some attempts at defining the
counting version of the submodular width~\cite{10.1145/3426865}; however, that width is
higher than necessary for some class of queries, as shown
in~\cite{DBLP:conf/stoc/BringmannG25}.

\section*{Acknowledgments}
This work was partially supported by NSF IIS 2314527, NSF SHF 2312195, NSF III 2507117, and a Microsoft professorship.

\bibliographystyle{siam}
\bibliography{bib}

\appendix
\section{Proofs of Some Supporting Results on Proof Sequences}
\label{app:supporting:results}

To make this paper more self-contained, we include here the proofs of two supporting results
on proof sequences, which were not explicitly stated in~\cite{theoretics:13722,DBLP:conf/pods/Khamis0S17}.

\ThmProofSequence*

\begin{proof}[Proof of Theorem~\ref{thm:proof-sequence}]
Let $(\calZ, \calD)$ be parameters of the integral Shannon-flow inequality
\eqref{eqn:shannon:flow:inequality} witnessed by $(\calM, \calS)$.
We start with $i=0$: $(\calD_0, \calM_0, \calS_0) := (\calD, \calM, \calS)$,
and we will iteratively construct $(\calD_{i+1}, \calM_{i+1}, \calS_{i+1})$.

By Proposition~\ref{prop:shannon:identity}, identity~\eqref{eqn:bag:identity} holds.
If $\calZ \subseteq \calD_i$ (in the multiset sense), then we are done; no proof step is needed.
So let's assume $\calZ \not\subseteq \calD_i$.

We first claim that there must be at least one unconditional term $\bm W \in \calD_i$
that is {\em not} in $\calZ$. Assume for the
sake of contradiction that there is no unconditional term in $\calD_i$. Set all (symbolic)
variables $h(\bm X) := 1$, $\bm X \neq \emptyset$. Then,
$h(\bm Y| \bm X)$ is $0$ if $\bm X \neq \emptyset$, and $1$ otherwise.
Moreover, $h(\bm Y| \bm X)\geq 0$ and $h(\bm Y; \bm Z | \bm X) \geq 0$.
Hence, the LHS of~\eqref{eqn:bag:identity} is exactly $|\calZ|$ and the RHS is at most the number
of unconditional terms in $\calD_i$.
Therefore, the number of unconditional terms in $\calD_i$ must be at least $|\calZ|$.
If every unconditional term in $\calD_i$ is also in $\calZ$, then $\calZ$ must be the set of unconditional terms in $\calD_i$. This contradicts our assumption that $\calZ \not\subseteq \calD_i$.

Next, take any $\bm W \in \calD_i$ that is not in $\calZ$. Since
identity~\eqref{eqn:bag:identity} holds, there must be some term that cancels $h(\bm W)$.
We consider a few cases:
\begin{itemize}
    \item If $(\bm Y | \bm W) \in \calD_i$ so that $h(\bm Y | \bm W)$ cancels $h(\bm W)$,
    then we define
    \begin{align*}
        \calD_{i+1} := \calD_i \setminus \{ \bm W, (\bm Y | \bm W) \} \cup \{ \bm Y \bm W\},\quad\quad
        \calM_{i+1} := \calM_i \quad\quad
        \calS_{i+1} := \calS_i
    \end{align*}
    and add to the proof sequence the composition
    step $h(\bm W) + h(\bm Y | \bm W) \to h(\bm Y \bm W)$.
    \item If $(\bm Y | \bm X) \in \calM_i$  where $\bm W = \bm Y\bm X$ so that $-h(\bm Y | \bm X)$ cancels $h(\bm W)$,
    then we define
    \begin{align*}
        \calD_{i+1} := \calD_i \setminus \{ \bm W \} \cup \{ \bm X \},\quad\quad
        \calM_{i+1} := \calM_i \setminus \{ (\bm Y | \bm X) \},\quad\quad
        \calS_{i+1} := \calS_i
    \end{align*}
    and add to the proof sequence the monotonicity
    step $h(\bm W) \to h(\bm X)$.
    \item If $(\bm Y; \bm Z | \bm X) \in \calS_i$ so that
    $-h(\bm Y; \bm Z | \bm X)$ cancels $h(\bm W)$, where $\bm W = \bm X\bm Y$,
    then we define
    \begin{align*}
        \calD_{i+1} := \calD_i \setminus \{ \bm W \} \cup \{ (\bm X), (\bm Y | \bm X\bm Z) \} ,\quad\quad
        \calM_{i+1} := \calM_i ,\quad\quad
        \calS_{i+1} := \calS_i \setminus \{ (\bm Y; \bm Z | \bm X) \}
    \end{align*}
    and add to the proof sequence two steps:
    the first is a decomposition step
    $h(\bm W) \to h(\bm X) + h(\bm Y | \bm X)$,
    and the second is a submodularity step
    $h(\bm Y | \bm X) \to h(\bm Y | \bm X\bm Z)$.
\end{itemize}
In all cases above, the potential function $|\calD|+|\calM|+3|\calS|$ decreases by at least
$1$ per every step added.
\end{proof}

\LmmReset*

\begin{proof}[Proof of Lemma~\ref{lmm:reset-lemma}]
This proof is very similar to the proof of Theorem~\ref{thm:proof-sequence} above.
Let $\bm W \in \calD$ be any unconditional term. Since~\eqref{eqn:bag:identity} holds
as an identity, there must be some term that cancels $h(\bm W)$. We consider a few cases:
\begin{itemize}
    \item $\bm W \in \calZ$, in this case we remove $\bm W$ from $\calD$ and from $\calZ$
    \item $(\bm Y | \bm W) \in \calD$ so that $h(\bm Y | \bm W)$ cancels $h(\bm W)$,
    then we apply a composition step $h(\bm W) + h(\bm Y | \bm W) \to h(\bm Y\bm W)$, which means
    we set $$\calD' := \calD \setminus \{ \bm W, (\bm Y | \bm W) \} \cup \{ \bm Y \bm W \}$$
    and then apply induction to remove $\bm Y\bm W$ from $\calD'$.
    \item $(\bm Y | \bm X) \in \calM$ where $\bm W = \bm X \bm Y$ so that $-h(\bm Y | \bm X)$ cancels $h(\bm W)$,
    then we apply a monotonicity step $h(\bm W) \to h(\bm X)$, which means
    we set
    \begin{align*}
        \calD' := \calD \setminus \{ \bm W \} \cup \{ \bm X\} ,\quad\quad
        \calM' := \calM \setminus \{ (\bm Y | \bm X) \}
    \end{align*}
    while keeping $\calS$ unchanged. Then, we apply induction to remove $\bm X$ from $\calD'$.
    \item $(\bm Y; \bm Z | \bm X) \in \calS$ so that
    $-h(\bm Y; \bm Z | \bm X)$ cancels $h(\bm W)$, where $\bm W = \bm X\bm Y$;
    then note that
    \[
        h(\bm W) - h(\bm Y; \bm Z | \bm X) = h(\bm X \bm Y) -
        \big( h(\bm X \bm Y) + h(\bm X \bm Z) - h(\bm X) - h(\bm X\bm Y \bm Z)\big)
        = h(\bm X \bm Y \bm Z) - h(\bm Z | \bm X).
    \]
    In this case we set
    \begin{align*}
        \calD' := \calD \setminus \{ \bm W \} \cup \{ (\bm X\bm Y \bm Z) \} ,\quad\quad
        \calM' := \calM \cup \{ (\bm Z | \bm X) \},\quad\quad
        \calS' := \calS \setminus \{ (\bm Y; \bm Z | \bm X) \}
    \end{align*}
    and then apply induction to remove $\bm X\bm Y \bm Z$ from $\calD'$.
\end{itemize}
Note that in each of the cases above, the potential function $|\calD| + |\calM| + 2|\calS|$
decreases by at least $1$ per every step.
\end{proof}

\section{Further examples and discussion}
\label{app:examples}

In Section~\ref{sec:motivations} we justified the need to partition the space of
polymatroids using hyperplanes that are not axis parallel; our algorithm $\newalgorithm$
achieved this by keeping track of probabilities, allowing it to determine this partition
dynamically, during query execution.

Our discussion in this section has two goals.  The first is to illustrate more convincing
DDR examples than the hexagon query from Section~\ref{sec:motivations} that require
hyperplane partitions.  The second is to discuss cases when $\newalgorithm$ can be
simplified, by describing directly the multi-join query that computes each target of the
DDR: this is possible when the associated Shannon inequality is a ``sum of chains", which we
define next.

A \emph{chain inequality} is a Shannon inequality of the form:
\begin{align*}
  h(\bm Z) \leq & \sum_{i=1,k} h(\bm Y_i|\bm X_i)
\end{align*}
where $\bm X_i \subseteq \bigcup_{j < i} \bm Y_j$, and $\bm X_j \cap \bm Y_i = \emptyset$
for all $j \leq i$, A \emph{sum of chains} is a Shannon-flow
inequality~\eqref{eqn:shannon:flow:inequality} that is the sum of chain inequalities: we
require the LHS of~\eqref{eqn:shannon:flow:inequality} to be precisely the set of LHS terms
of all the chain inequalities, while the RHS of~\eqref{eqn:shannon:flow:inequality} is
required to be obtained from the RHS expressions of the chain inequalities by using only
composition steps: $h(\bm V|\bm U)+h(\bm U) \rightarrow h(\bm U\bm V)$.  For simple
illustrations, both inequalities discussed in Section~\ref{sec:motivations} are sums of
chains: inequality~\eqref{eq:q1:inequality} is the sum of the two chains
in~\eqref{eq:q1:chains}, while inequality~\eqref{eq:q:hex:inequality} is the sum of the
chains in~\eqref{eq:q:hex:chains}.

In the examples below we will refer to a polymatroid $h^*$ as a \emph{tight polymatroid} for
the Shannon-flow inequality~\eqref{eqn:shannon:flow:inequality} if the inequality becomes an
equality when replacing $h$ with $h^*$.

\begin{example} \label{ex:discussion:1}
  Consider the following DDR (we drop the comma from all atoms to
  reduce clutter):
  \begin{align*}
    U(A_0A_1A_2B_1) &\vee V(B_0B_1B_2C_1)\vee W(C_0C_1C_2A_1) \\
  &=  R_1(A_0A_1)\wedge R_2(A_1A_2) \wedge S_1(B_0B_1)\wedge  S_2(B_1B_2) \wedge T_1(C_0C_1)\wedge T_2(C_1C_2)
  \end{align*}
  Assume that all input relations have size $\leq N$.  We claim that we can compute an
  output of the DDR satisfying $\max(|U|,|V|,|W|) \leq N^2$.  Unlike the simple hexagon
  query in Section~\ref{sec:motivations}, the Generic Join algorithm no longer applies here,
  because no single variable occurs in all targets of the DDR.  To prove the claim, we will
  start by proving the following Shannon inequality:
  \begin{align*}
    h(A_0A_1A_2B_1)&+h(B_0B_1B_2C_1)+h(C_0C_1C_2A_1) \\
    & \leq h(A_0A_1)+h(A_1A_2) + h(B_0B_1)+h(B_1B_2) + h(C_0C_1)+h(C_1C_2)
  \end{align*}
  The modular function $h^*$ defined by $h^*(\bm X) = |\bm X|/2$ for
  every set $\bm X \subseteq \set{A_0, A_1, \ldots, C_2}$ is edge
  dominated, and also tight for the inequality above.  The inequality
  holds because it is the sum of the following chain inequalities:
  \begin{align*}
    h(A_0A_1A_2B_1) &\leq h(A_0A_1)+h(A_2|A_1)+h(B_1) \\
    h(B_1B_2B_3C_1) &\leq h(B_0B_1)+h(B_2|B_1)+h(C_1) \\
    h(C_0C_1C_2A_1) &\leq h(A_1) + h(C_0C_1) + h(C_2|C_1)
  \end{align*}
  Evidently, $h^*$ is tight for each of the three chain inequalities.
  Before we derive an algorithm for computing the DDR, we describe the
  space partition of the edge-dominated polymatroids $h$ into
  $\calP_1\cup \calP_2 \cup \calP_3$ such that, in each partition, one
  of the three inequalities above has the LHS $\leq 2$.  In fact, the
  three sets can be described easily by:
  \begin{align*}
    h(A_0A_1)+h(A_2|A_1)+h(B_1) &\leq h^*(A_0A_1)+h^*(A_2|A_1)+h^*(B_1)=2\\
    h(B_0B_1)+h(B_2|B_1)+h(C_1)  &\leq h(B_0B_1)+h(B_2|B_1)+h(C_1)=2 \\
    h(A_1) + h(C_0C_1) + h(C_2|C_1)  &\leq h^*(A_1) + h^*(C_0C_1) + h^*(C_2|C_1)=2
  \end{align*}
  We can argue that no axis-parallel partition can cover the three
  inequalities above by using an argument similar to the hexagon
  query: the first chain inequality corresponds to joining the light
  $A_2$'s with the heavy $B_1$'s, and similarly for other two
  inequalities, and this leaves out, for example, the tuples where
  $A_1, B_1, C_1$ are all heavy, or all light.  On the other hand,
  every edge dominated polymatroid $h$ must satisfy at least one of
  these three chain inequalities, because their sum is always
  satisfied by any edge-dominated polymatroid:
  \begin{align*}
    h(A_0A_1)+&h(A_1A_2) + h(B_0B_1)+h(B_1B_2) +  h(C_0C_1)+h(C_1C_2) \\
    \leq &h^*(A_0A_1)+h^*(A_1A_2) + h^*(B_0B_1)+h^*(B_1B_2) + h^*(C_0C_1)+h^*(C_1C_2)=4
  \end{align*}
  Next, we convert the space partition into an algorithm, which
  computes each target using a single multi-join:
\begin{align*}
&   \text{for } (a_0,a_1) \text{ in } R_1(A_0,A_1), b_1 \text{ in } S_2(B_1): \\
&   \text{ if } \left(|R_1(A_0A_1)|\cdot \degree_{R_2}(A_2|a_1)\cdot \frac{|S_2(B_1B_2)|}{\degree_{S_2}(B_2|b_1)}\leq N^2\right) &\text{// or just } \degree_{R_2}(A_2|a_1) \leq & \degree_{S_2}(B_2|b_1) \\
&     \text{\ \ for } a_2 \text{ in } R_2(A_2|a_1): \\
&       \text{\ \ \ Output } U(a_0,a_1,a_2,b_1) \\
& \\
&   \text{for } (b_0,b_1) \text{ in } S_1(B_0,B_1), c_1 \text{ in } T_2(C_1):\\
&   \text{ if } \left(|S_1(B_0B_1)|\cdot \degree_{S_2}(B_2|b_1)\cdot \frac{|T_2(C_1C_2)|}{\degree_{T_2}(C_2|c_1)}\leq N^2\right) &\text{// or just } \degree_{S_2}(B_2|b_1) \leq & \degree_{T_2}(C_2|c_1) \\
&     \text{\ \  for } b_2 \text{ in } S_2(B_2|b_1):\\
&       \text{\ \ \ Output } V(b_0,b_1,b_2,c_1)\\
& \\
&   \text{for } (c_0,c_1) \text{ in } R(C_0,C_1), a_1 \text{ in } R_2(A_1):\\
&   \text{ if } \left(|T_1(C_0C_1)|\cdot \degree_{T_2}(C_2|c_1)\cdot \frac{|R_2(A_1A_2)|}{\degree_{R_2}(A_2|a_1)}\leq N^2\right)&\text{// or just } \degree_{T_2}(C_2|c_1) \leq & \degree_{R_2}(A_2|a_1) \\
&     \text{\ \ for } c_2 \text{ in } T_2(C_2|c_1):\\
&       \text{\ \ \ Output } W(c_0,c_1,c_2,a_1)
\end{align*}
The first query populates the target $U$ by first iterating over all
pairs $(a_0,a_1) \in R_1(A_0A_1)$ and $b_1 \in S_2(B_1)$ that satisfy
the condition
$|R_1(A_0A_1)|\cdot \degree_{R_2}(A_2|a_1)\cdot
\frac{|S_2(B_1B_2)|}{\degree_{S_2}(B_2|b_1)}\leq N^2$, which ensures
that $|U| \leq N^2$.  In fact, it suffices to check just
$\degree_{R_2}(A_2|a_1) \leq \degree_{S_2}(B_2|b_1)$, because this
implies
$|R_1(A_0A_1)|\cdot \degree_{R_2}(A_2|a_1)\cdot
\frac{|S_2(B_1B_2)|}{\degree_{S_2}(B_2|b_1)}\leq N^2$ by the fact that
$|R_1|\cdot |S_2| \leq N^2$. Similarly for the other two queries.
\end{example}

\begin{example} \label{ex:discussion:2}
  A more subtle example is the following DDR:
  \begin{align*}
    U(A_1A_2A_3A_4A_5)&\vee V(A_3A_4A_5A_6A_1) \vee W(A_5A_6A_1A_2A_3) \vee Z(A_2A_4A_6)\\
    =  &   R_1(A_1A_2A_3)\wedge R_2(A_2A_3A_4) \wedge R_3(A_3A_4A_5) \wedge R_4(A_4A_5A_6) \wedge R_5(A_5A_6A_1)\wedge R_6(A_6A_1A_2)
  \end{align*}
  Assuming all inputs have size $\leq N$, we show that one can compute
  an output where all targets have size $\leq N^{3/2}$.  For that, we
  consider the following good Shannon inequality:
  \begin{align*}
    & h(A_1A_2A_3A_4A_5)+h(A_3A_4A_5A_6A_1)+h(A_5A_6A_1A_2A_3)+h(A_2A_4A_6)\\
    & \leq   h(A_1A_2A_3)+h(A_2A_3A_4)+h(A_3A_4A_5)+h(A_4A_5A_6)+h(A_5A_6A_1)+h(A_6A_1A_2)
  \end{align*}
which we further write as a sum of the following chain inequalities:
  \begin{align*}
    h(A_1A_2A_3A_4A_5) &\leq h(A_1A_2A_3)+h(A_4|A_2A_3)+h(A_5|A_4)\\
    h(A_3A_4A_5A_6A_1) &\leq h(A_3A_4A_5)+h(A_6|A_4A_5)+h(A_1|A_6)\\
    h(A_5A_6A_1A_2A_3) &\leq h(A_5A_6A_1)+h(A_2|A_6A_1)+h(A_3|A_2)\\
    h(A_2A_4A_6)       &\leq h(A_2)+h(A_4)+h(A_6)
  \end{align*}
  The space of edge dominate polymatroids is partitioned into four
  sets, each defined by the constraint that the RHS of one of the
  chains above is $\leq 3/2$: each edge dominated polymatroid
  satisfies at least one of these constraints, because their sum
  asserts $h(A_1A_2A_3)+\cdots+h(A_6A_1A_2) \leq 6$, which is always
  satisfied.  Furthermore, we one can convert this partition of the
  space into an algorithm that computes separately each target
  $U, V, W, Z$, similarly to the previous example.  What makes this
  example more subtle is the fact that we need to first examine
  $\degree(A_4|A_2A_3)$ in $R_2$, then examine $\degree(A_3|A_2)$ in
  the projection of $R_2$ on $A_2A_3$.
\end{example}

\begin{example} \label{ex:discussion:3} Finally, we illustrate a DDR whose associated
  Shannon inequality is {\em not} a sum of chains, and where the dynamic partitioning of the
  data done by $\newalgorithm$ appears to be unavoidable.  The DDR is:

  \begin{align*}
    U(A_1A_2A_3A_4)\vee&V(B_1B_2B_3B_4)\vee W(A_1A_3B_1B_3)\vee Z_1(A_2B_2)\vee Z_2(A_4B_4) \\
    &= R_1(A_1A_2)\wedge R_2(A_2A_3)\wedge R_3(A_3A_4) \wedge R_4(A_4A_1) \\
    \wedge  & S_1(B_1B_2)\wedge S_2(B_2B_3)\wedge S_3(B_3B_4) \wedge S_4(B_4B_1)
  \end{align*}
  When all inputs are $\leq N$, then one can compute an output where
  all targets have size $\leq N^{8/5}$, because of the following
  Shannon inequality:
  \begin{align}
    h(A_1A_2A_3A_4)+&h(B_1B_2B_3B_4)+h(A_1A_3B_1B_3)+h(A_2B_2)+h(A_4B_4) \nonumber\\
    & \leq h(A_1A_2)+h(A_2A_3)+h(A_3A_4)+h(A_4A_1) \label{eq:aaaabbbb}\\
    + & h(B_1B_2)+h(B_2B_3)+h(B_3B_4)+h(B_4B_1) \nonumber
  \end{align}

\noindent To see that this inequality is valid, observe that it is the
  summation of the following 9 chain inequalities:
\begin{align*}
 \emblue{h(A_1A_2A_3)} &\leq h(A_1A_2)+h(A_3|A_2) \\
 \emblue{h(B_1B_2B_3)} &\leq h(B_1B_2)+h(B_3|B_2) \\
 \emblue{h(A_1A_3A_4)} &\leq h(A_3A_4)+h(A_1|A_4) \\
 \emblue{h(B_1B_3B_4)} &\leq h(B_3B_4)+h(B_1|B_4) \\
 h(A_1A_3B_1B_3) &\leq h(A_1A_3) + h(B_1B_3) \\
 h(A_1A_2A_3A_4) &\leq \emblue{h(A_1A_2A_3)} + \emblue{h(A_4|A_1A_3)}\\
 h(B_1B_2B_3B_4) &\leq \emblue{h(B_1B_2B_3)} + \emblue{h(B_4|B_1B_3)}\\
 h(A_2B_2) &\leq h(A_2)+h(B_2) \\
 h(A_4B_4) &\leq h(A_4)+h(B_4)
\end{align*}
However, inequality~\eqref{eq:aaaabbbb} is not strictly speaking a
``sum of chains'' as we defined it, because the blue terms in the
expressions above cancel out.  Concretely, this means that the
algorithm that follows these 9 chain inequalities must first compute
the blue intermediate results, then partition these again.  $\newalgorithm$
achieves this by keeping track of the probabilities in the temporary
results, corresponding to the blue terms.
\end{example}

Our discussion leads to the natural question of characterizing the
DDRs that can be computed like in Examples~\ref{ex:discussion:1}
and~\ref{ex:discussion:2}.  In the remainder of this section we
provide a sufficient condition.

Let $\bm V = \set{X_1, \ldots, X_n}$.  A \emph{Shearer inequality} is
a Shannon inequality of the form:
\begin{align}
  k\cdot h(\bm V) \leq & \sum_{i=1,m} h(\bm Z_i) \label{eq:shearer}
\end{align}
where $\bm Z_i \subseteq \bm V$ for $i=1,m$.  It is known that the
inequality is valid iff every variable $X_i \in \bm V$ occurs in at
least $k$ terms on the RHS.  We can assume w.l.o.g. that $X_i$ occurs
in exactly $k$ terms: otherwise, drop $X_i$ arbitrary for terms until
only $k$ copies remain.  Under this assumption we prove:

\begin{lmm}
  Shearer's inequality~\eqref{eq:shearer} is the sum of chain inequalities.
\end{lmm}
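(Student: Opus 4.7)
The plan is to construct $k$ explicit chain inequalities, one per ``color'', whose LHS terms provide exactly the $k$ copies of $h(\bm V)$ on the left of~\eqref{eq:shearer} and whose RHS terms can be glued back, by composition steps alone, into $\sum_i h(\bm Z_i)$.

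First I would fix an arbitrary total order $v_1 < v_2 < \cdots < v_n$ of $\bm V$. Since each $v_j$ occurs in exactly $k$ of the sets $\bm Z_i$, I would then, for every $j$, pick an arbitrary bijection between those $k$ occurrences and the color palette $\{1,\ldots,k\}$; write $\sigma(j,c)$ for the set receiving color $c$ at $v_j$. Taken together, these bijections partition the $kn$ pairs $\{(v_j,\bm Z_i) : v_j \in \bm Z_i\}$ into $k$ classes, each containing exactly one pair per variable. For each color $c$ I would then write down the candidate chain
\[
h(\bm V) \;\leq\; \sum_{j=1}^n h\bigl(v_j \,\big|\, \sigma(j,c) \cap \{v_1,\ldots,v_{j-1}\}\bigr),
\]
with $\bm Y_j := \{v_j\}$ and $\bm X_j := \sigma(j,c) \cap \{v_1,\ldots,v_{j-1}\}$. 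The chain conditions $\bm X_j \subseteq \bigcup_{l<j}\bm Y_l$ and $\bm X_l \cap \bm Y_j = \emptyset$ for $l \leq j$ are immediate from the ordering, and the inequality itself follows from one application of submodularity ($h(v_j \mid \bm X_j) \geq h(v_j \mid v_1,\ldots,v_{j-1})$) and the chain rule.

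The remaining step is to check that the multiset union of the RHSs over all $k$ chains composes, via the allowed rule $h(\bm U) + h(\bm W \mid \bm U) \to h(\bm U\bm W)$, into $\sum_i h(\bm Z_i)$. By construction, each pair $(v_j, \bm Z_i)$ contributes exactly one term $h(v_j \mid \bm Z_i \cap \{v_1,\ldots,v_{j-1}\})$ to that union. Fix an $i$ and list $\bm Z_i$ in order as $v_{j_1} < \cdots < v_{j_s}$; the contributions corresponding to $\bm Z_i$ are then $h(v_{j_1}),\; h(v_{j_2} \mid v_{j_1}),\; \ldots,\; h(v_{j_s} \mid v_{j_1},\ldots,v_{j_{s-1}})$, and $s-1$ composition steps glue them into $h(\bm Z_i)$. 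Performing this for every $i$ recovers the full Shearer RHS.

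The one delicate point — really a bookkeeping observation rather than a true obstacle — is that the conditioning set used for $v_j$ in the color-$c$ chain, namely $\sigma(j,c) \cap \{v_1,\ldots,v_{j-1}\}$, is automatically a subset of the single Shearer set $\sigma(j,c) = \bm Z_i$; this is what forces the per-$\bm Z_i$ composition to terminate at exactly $h(\bm Z_i)$ instead of something larger. Once this is noticed, the rest of the argument is just chain rule, submodularity, and the literal definition of the composition step.
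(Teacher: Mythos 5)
Your proof is correct, and it takes a genuinely different route from the paper's. The paper argues by peeling off one chain at a time: it proves (by an inner induction on $|\bm W|$) that there exists a chain $h(\bm V)\leq\sum_j h(\bm Y_j\mid\bm X_j)$ in which every $\bm X_j\bm Y_j$ equals some $\bm Z_i$ of the Shearer RHS, observes that subtracting this chain's RHS from $\sum_i h(\bm Z_i)$ leaves a Shearer inequality with multiplicity $k-1$, and recurses. You instead fix a single total order $v_1<\cdots<v_n$, properly color the $kn$ variable--set incidences with $k$ colors (one occurrence per variable per color), and build all $k$ chains simultaneously, each with singleton $\bm Y_j=\{v_j\}$ and $\bm X_j=\sigma(j,c)\cap\{v_1,\dots,v_{j-1}\}$; validity of each chain is a one-line application of the entropy chain rule plus submodularity.

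The main payoff of your construction is that the required composition back to $\sum_i h(\bm Z_i)$ is immediate and localized: for each fixed $\bm Z_i$ the $|\bm Z_i|$ terms it receives, across colors, are exactly $h(v_{j_1}),h(v_{j_2}\mid v_{j_1}),\dots$ in the induced order, and they telescope. In the paper's construction the term $h(\bm Y_j\mid\bm X_j)=h(\bm Z_i\mid\bm X_j)$ produced by one chain has to be composed with terms produced by \emph{later} chains (the peeled Shearer with residual set $\bm Z_i'=\bm X_j$ gets consumed by the next chain), so the recomposition threads through all $k$ levels of the recursion; the paper only asserts the numerical cancellation and leaves the ``composition-steps-only'' bookkeeping implicit. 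Your version makes that bookkeeping trivial, and in fact recovers the textbook chain-rule proof of Shearer's lemma. A minor trade-off is that the paper's chains have length at most $m$ (one term per hyperedge touched), whereas yours always have length exactly $n$; neither matters for the statement being proved.
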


\begin{proof}
  We prove the following claim.  If $k \geq 1$ then there exists a
  chain inequality:
  \begin{align}
    h(\bm V) \leq & \sum_{j=1,\ell}h(\bm Y_j|\bm X_j) \label{eq:shearer:chain}
  \end{align}
  such that, for all $j=1,\ell$ there exists $i =1,m$ s.t.  $\bm X_j
  \bm Y_j=\bm Z_i$.  We first show that the claim implies the lemma.
  This follows by observing that the difference between the RHS
  for~\eqref{eq:shearer} and~\eqref{eq:shearer:chain} is the RHS of
  another Shearer inequality:
  \begin{align}
  (k-1)\cdot h(\bm V) \leq & \sum_{i=1,m} h(\bm Z_i') \label{eq:shearer:2}
  \end{align}
  In other words, the sum of inequalities~\eqref{eq:shearer:chain}
  and~\eqref{eq:shearer:2} is the inequality~\eqref{eq:shearer}, and
  the lemma follows by applying the claim repeatedly until $k=0$.

  To prove the claim, we prove by induction on $p$ that there exists a
  chain inequality:
  \begin{align*}
    h(\bm W) \leq & \sum_{j=1,\ell}h(\bm Y_j|\bm X_j)
  \end{align*}
  where $|\bm W|=p$ and each set $\bm X_j\bm Y_j$ is equal to some
  term $\bm Z_i$ in~\eqref{eq:shearer}.  When $p=0$ then we set
  $\ell=0$ and the claim holds vacuously.  Assuming we have such a
  chain for $p\geq 0$, consider some variable $V \not\in \bm W$.
  There exists some term $h(\bm Z)$ in~\eqref{eq:shearer} that
  contains $V$, and we partition $\bm Z$ into $\bm Z = \bm X \bm Y$
  where $\bm X \defeq \bm Z\cap (\bigcup_{j=1,\ell} \bm Y_j)$ and
  $\bm Y \defeq \bm Z - \bm X$.  Then we obtain the longer chain
  inequality:
  \begin{align*}
    h(\bm W \bm Y) \leq & \sum_{j=1,\ell}h(\bm Y_j|\bm X_j) + h(\bm Y|\bm X)
  \end{align*}
  which completes the proof.
\end{proof}

\end{document}